\newtheorem{theorem}{Theorem}
\newtheorem{lemma}{Lemma}
\newtheorem{prop}{Proposition}
\newtheorem{definition}{Definition}
\providecommand{\algorithmname}{Algorithm}
 \let\oldforeign@language\foreign@language
 \DeclareRobustCommand{\foreign@language}[1]{%
   \lowercase{\oldforeign@language{#1}}}
\acrodef{SPD}{Symmetric and Positive Definite}
\acrodef{PT}{Parallel Transport}
\begin{document}

\title{Parallel Transport on the Cone Manifold of SPD Matrices for Domain Adaptation}

\author{Or~Yair,~\IEEEmembership{Student Member,~IEEE,} Mirela Ben-Chen,
and~Ronen~Talmon,~\IEEEmembership{Member,~IEEE}\thanks{The authors are with the Viterbi Faculty of Electrical Engineering, Technion-Israel Institute of Technology, Haifa 32000, Israel (e-mail: \protect\href{http://oryair@campus.technion.ac.il}oryair@campus.technion.ac.il; \protect\href{http://ronen@ee.technion.ac.il}{ronen@ee.technion.ac.il}).}
%\thanks{Third~Name is with the Department of ..., ... Institute of ..., City,
%Country, e-mail: \protect\href{http://xxx@xxx.xxx}{xxx@xxx.xxx}.}
}

%\IEEEaftertitletext{after title text like dedication}

%\markboth{Journal of XXX}{Your Name \MakeLowercase{\emph{et al.}}: Your Title}

%\IEEEpubid{0000\textendash 0000/00\$00.00~\copyright~2012 IEEE}
\maketitle
\begin{abstract}
In this paper, we consider the problem of domain adaptation.
We propose to view the data through the lens of covariance matrices and present a method for domain adaptation using parallel transport on the cone manifold of symmetric positive-definite matrices. We provide rigorous analysis using Riemanninan geometry, illuminating the theoretical guarantees and benefits of the presented method. In addition, we demonstrate these benefits using experimental results on simulations and real-measured data.

\end{abstract}

\begin{IEEEkeywords}
positive definite matrices, domain adaptation, transfer learning, parallel transport.
\end{IEEEkeywords}

\IEEEpeerreviewmaketitle{}

\section{Introduction}

\IEEEPARstart{T}{he} increasing technological sophistication of current data acquisition systems gives rise to complex, multimodal datasets in high-dimension.
As a result, the acquired data do not live in a Euclidean space, and applying analysis and learning algorithms directly to the data often leads to subpar performance.

%To facilitate the analysis and processing of such data, one line of work is unsupervised manifold learning \cite{roweis2000nonlinear,tenenbaum2000global,coifman2006diffusion}, aiming to capture the underlying structure and geometry of the data, and in turn, exploit it to build new data representations in a Euclidean space.

To facilitate the analysis and processing of such data, one approach is to observe complex high-dimensional data through the lens of objects with a \emph{known} non-Euclidean geometry. Notable examples of such objects are \ac{SPD} matrices, which live on a cone manifold with a Riemannian metric. One of the most common forms of \ac{SPD} matrices is a covariance matrix, which captures the linear relations between the different data coordinates. These relations are typically simple to compute, and therefore, recently, have become popular features in many applications in computer vision, medical imaging, and machine learning \cite{sra2015conic,tuzel2008pedestrian,freifeld2014model,bergmann2017priors}. In particular, in \cite{pennec2006riemannian,barachant2013classification}, the Riemannian geometry of covariance matrices was studied and exploited for medical imaging and physiological signal analysis.

Typically, Riemannian geometry is used to map objects from the non-Euclidean manifold to a linear Euclidean space by projection onto a tangent plane of the manifold. In existing work, the use of Riemannian geometry is usually limited to a single tangent plane. This indicates a hidden assumption that the \ac{SPD} matrices corresponding to the data are confined to a local region of the manifold. However, the \ac{SPD} matrices of the data often do not live in a small neighborhood on the manifold, and thus the resulting calculations may be inaccurate. 

One such particular scenario, in which \ac{SPD} matrices span a large portion of the cone manifold, occurs when the data comprise multiple domains corresponding to multiple sessions, subjects, batches, etc.
For example, we will show that in a Brain-Computer-Interface (BCI) experiment, the covariance matrices of data acquired from a single subject in a specific session capture well the overall geometric structure of the data. Conversely, when the data consist of measurements from several subjects or several sessions, then the covariance matrices do not live in the same region of the manifold.

Often, multi-domain data pose significant challenges to learning approaches. For example, in the BCI experiment, it is challenging to train a classifier based on data from one subject (session) and apply it to data from another subject (session). This problem is largely referred to as \emph{domain adaptation} or \emph{transfer learning}, and it has attracted a significant research effort in recent years \cite{ben2007analysis,pan2011domain}. 

Broadly, in domain adaptation, the main idea is to adapt a given model that is well performing on a particular domain, to a different yet related domain \cite{ben2007analysis,pan2011domain}. Specifically, in the context of the cone manifold of \ac{SPD} matrices, previous work proposed (geometric) mean subtraction as a simple method for domain adaption of BCI data \cite{barachant2013classification}. Although this approach provided reasonable results for overcoming the differences between multiple sessions of a single subject, we show here that it fails to overcome the differences between multiple subjects. In
\cite{freifeld2014model}, a \ac{PT} approach was proposed, which can be applied either directly to the data, or to a generative model of the data to reduce the computational load for large datasets. However, their approach considers a general Riemannian manifold. Since there is no closed-form expression of \ac{PT} on Riemannian manifolds, besides the sphere manifold and the manifold of all \ac{SPD} matrices \cite{sra2015conic,bergmann2017priors}, no specific scheme or algorithm was provided. We note that \ac{PT} can be approximated using Schild's Ladder \cite{lorenzi2011schild}, an approach that has been used extensively on the manifold of imaging data \cite{pennec2006riemannian,kim2014canonical,bergmann2017priors,lorenzi2011schild}.  

In this paper, we propose a domain adaptation method using the analytic expression of \ac{PT} on the cone manifold of \ac{SPD} matrices. 
We claim that this is a natural and efficient solution for domain adaptation, which enjoys several important benefits. First, the solution is especially designed for \ac{SPD} matrices, which have proven to be good features of data in a gamut of previous work \cite{pennec2006riemannian,barachant2013classification,kim2014canonical}. Second, the analytic form of \ac{PT} on the cone manifold circumvents approximations. Third, \ac{PT} can be efficiently implemented, in contrast to the computationally demanding Schild's Ladder approximation. 
We establish the mathematical foundation of the proposed domain adaptation method. To this end, we provide new results in the geometry of \ac{SPD} matrices. In addition, we show applications to both simulation and real recorded data, obtaining improved performance compared to the competing methods.

In parallel to our study, recent work \cite{zanini2017transfer} has proposed a scheme for transfer learning using the Riemannian geometry of \ac{SPD} matrices, with a tight connection to the present work. We will show that the affine transformation proposed in \cite{zanini2017transfer} can be recast as \ac{PT}. In this paper we provide the mathematical foundation to analyze this transport, we discuss the advantage of our solution compared to  \cite{zanini2017transfer}, and we point out the special case in which the two methods coincide.

This paper is organized as follows. In Section \ref{sec:prelim}, we present preliminaries on the Riemannian geometry of \ac{SPD} matrices. In Section \ref{sec:DomainAdaptaion}, we formulate the problem, present the proposed domain adaptation method, and provide mathematical analysis and justification. Section \ref{sec:Results}
shows experimental results on both simulation and real data. Finally, we conclude the paper in Section \ref{sec:Conclusions}.

\section{Preliminaries on Riemannian Geometry of SPD Matrices}
\label{sec:prelim}

In this section we provide the preliminaries regarding \ac{SPD} matrices, and we refer the reader to the book \cite{bhatia2009positive} for a detailed exposition of this topic. We note that in this paper we focus on covariance matrices, however the statements also hold for general \ac{SPD} matrices. By definition, an \ac{SPD}  matrix $\boldsymbol{P}\in \mathbb{R}^{n\times n}$ has only strictly positive eigenvalues. An alternative definition is that for any vector $\boldsymbol{v} \neq  \boldsymbol{0}$ the quadratic form is strictly positive, i.e., $\boldsymbol{v}^T\boldsymbol{P}\boldsymbol{v}>0$.  

\subsection{Metric and distance}
The definition of an \ac{SPD} matrix entails that the collection of all \ac{SPD} matrices constitutes a convex half-cone in the vector space of real $n\!\times\!n$ symmetric matrices.
This cone forms a differentiable Riemannian manifold $\mathcal{M}$ equipped with the following inner product 
\begin{equation}\label{eq:inner_product}
\big\langle \boldsymbol{S}_{1},\boldsymbol{S}_{2}\big\rangle _{\mathcal{T}_{\boldsymbol{P}}\mathcal{M}}=\big\langle \boldsymbol{P}^{-\frac{1}{2}}\boldsymbol{S}_{1}\boldsymbol{P}^{-\frac{1}{2}},\boldsymbol{P}^{-\frac{1}{2}}\boldsymbol{S}_{2}\boldsymbol{P}^{-\frac{1}{2}}\big\rangle, 
\end{equation}
where  $\mathcal{T}_{\boldsymbol{P}}\mathcal{M}$ is the tangent space at the point $\boldsymbol{P}\!\in\!\mathcal{M}$, $\boldsymbol{S}_{1},\boldsymbol{S}_{2}\in\mathcal{T}_{\boldsymbol{P}}\mathcal{M}$, and $\langle \cdot,\cdot\rangle $ is the standard Euclidean inner product operation. The symmetric matrices $\boldsymbol{S}\in\mathcal{T}_{\boldsymbol{P}}\mathcal{M}$ in the tangent plane live in a linear space, and therefore, we can view them as vectors (with a proper representation). Throughout this paper, we interchangeably use the terms vectors and symmetric matrices when referring to $\boldsymbol{S}\in\mathcal{T}_{\boldsymbol{P}}\mathcal{M}$. 

This Riemannian manifold is a Hadamard manifold, namely, it is simply connected and it is a complete Riemannian manifold of non-positive sectional curvature. Manifolds with non-positive curvature have a unique geodesic curve between any two points, a property that will later be exploited.
Specifically, the unique geodesic curve between any two \ac{SPD} matrices $\boldsymbol{P}_{1},\boldsymbol{P}_{2}\in\mathcal{M}$  is given by \cite[Thm 6.1.6]{bhatia2009positive}
\begin{align}
\label{eq:Geodesic}
\varphi(t) & = \boldsymbol{P}_{1}^{\frac{1}{2}}\big(\boldsymbol{P}_{1}^{-\frac{1}{2}}\boldsymbol{P}_{2}\boldsymbol{P}_{1}^{-\frac{1}{2}}\big)^{t}\boldsymbol{P}_{1}^{\frac{1}{2}},\qquad0\leq t\leq1.
\end{align}
 The arc-length of the geodesic curve defines the following Riemannian distance on the manifold \cite{bhatia2009positive}:
\begin{align*}
\label{eq:RiemannianMetric}
d_{R}^{2}\big(\boldsymbol{P}_{1},\boldsymbol{P}_{2}\big) & = \big\Vert \log\big(\boldsymbol{P}_{2}^{-\frac{1}{2}}\boldsymbol{P}_{1}\boldsymbol{P}_{2}^{-\frac{1}{2}}\big)\big\Vert_{F}^{2} \\
& = \textstyle \sum_{i=1}^n\log^{2}\Big(\lambda_{i}\big(\boldsymbol{P}_2^{-\frac{1}{2}} \boldsymbol{P}_{1}\boldsymbol{P}_{2}^{-\frac{1}{2}}\big)\Big)\\
& = \textstyle \sum_{i=1}^n\log^{2}\left(\lambda_{i}\big(\boldsymbol{P}_{1}\boldsymbol{P}_{2}^{-1}\big)\right),
\end{align*}
where $\boldsymbol{P}_{1},\boldsymbol{P}_{2}\in\mathcal{M}$, $\Vert \cdot\Vert _{F}$ is the Frobenius norm, $\log(\boldsymbol{P})$ is the matrix logarithm, and $\lambda_{i}(\boldsymbol{P})$ is the $i$-th eigenvalue of $\boldsymbol{P}$. We additionally denote by $\tfrac{d}{dt}\varphi(t)=\varphi'(t)\!\in\!\mathcal{T}_{\varphi(t)}\mathcal{M}$ the velocity vector of the geodesic at $t\!\in\![0,1]$.
Figure \ref{fig:2X2_Cone_Example} presents an illustration of the geodesic curve and the Riemannian distance. The cone manifold of $2\times 2$ \ac{SPD} matrices can be displayed in $\mathbb{R}^3$, since any symmetric matrix $\boldsymbol{P}=\bigl(\begin{smallmatrix}x & y\\
y & z \end{smallmatrix}\bigr)$ is positive if and only if $x>0, z>0$ and $y^2<xz$.

\begin{figure}[t]
\centering \includegraphics[width=1\columnwidth]{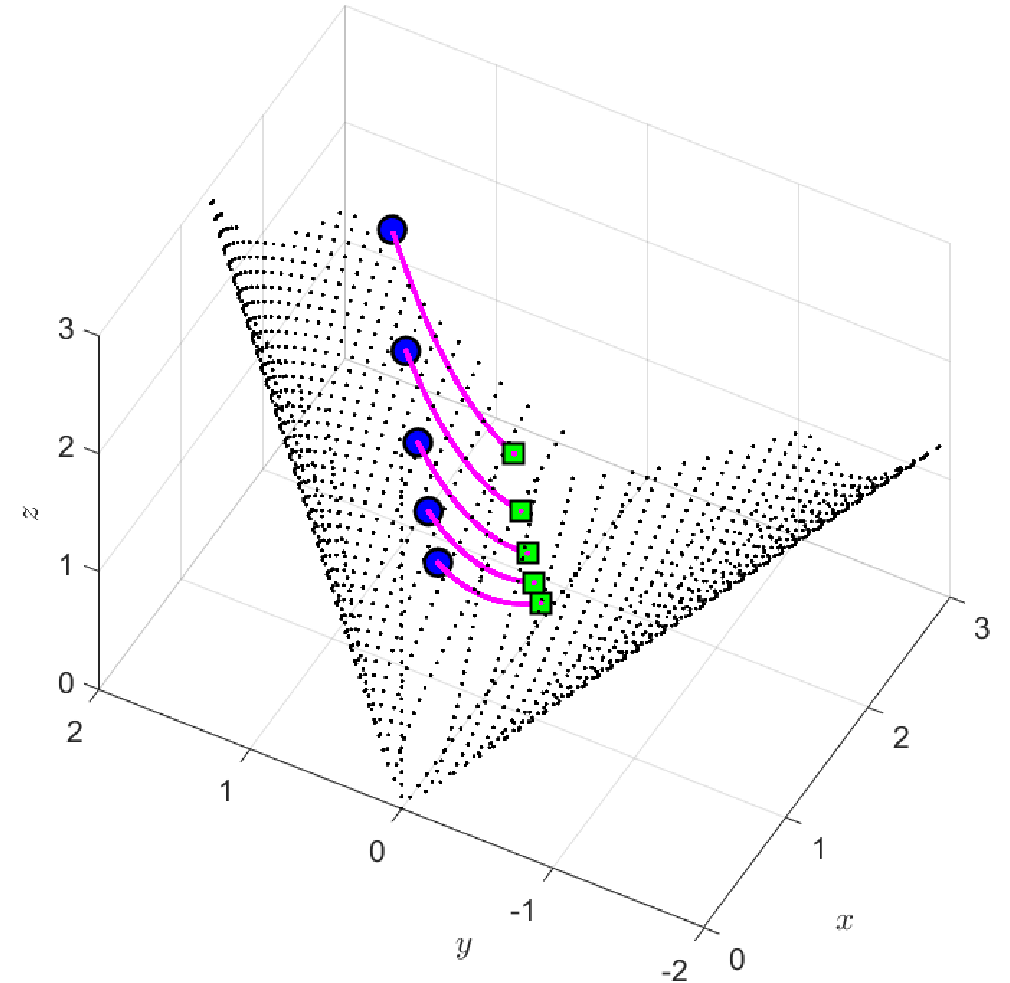}
\caption{The cone manifold of $2\times2$ SPD matrices. The black dots mark the boundary of the cone (i.e., matrices with eigenvalue zero). Each magenta curve is the geodesic between pairs of matrices (blue circles and green squares). All the geodesic curves are of the same length (i.e., the Riemannian distance between all the pairs is equal).}
\label{fig:2X2_Cone_Example}
\end{figure}

\subsection{Exponential and Logarithm maps}
The Logarithm map, which projects an \ac{SPD} matrix $\boldsymbol{P}_i\!\in\!\mathcal{M}$ to the tangent plane $\mathcal{T}_{\boldsymbol{P}}\mathcal{M}$ at $\boldsymbol{P}\!\in\!\mathcal{M}$, is given by 

\begin{align*}
\boldsymbol{S}_{i}  =\text{Log}_{\boldsymbol{P}}(\boldsymbol{P}_{i})=\boldsymbol{P}^{\frac{1}{2}}\log\big(\boldsymbol{P}^{-\frac{1}{2}}\boldsymbol{P}_{i}\boldsymbol{P}^{-\frac{1}{2}}\big)\boldsymbol{P}^{\frac{1}{2}} \in \mathcal{T}_{\boldsymbol{P}}\mathcal{M}.
\end{align*}
The Exponential map, which projects a vector $\boldsymbol{S}_i\!\in\!\mathcal{T}_{\boldsymbol{P}}\mathcal{M}$ back to the manifold $\mathcal{M}$ is given by
\begin{align}
\boldsymbol{P}_{i} =\text{Exp}_{\boldsymbol{P}}(\boldsymbol{S}_{i})=\boldsymbol{P}^{\frac{1}{2}}\exp\big(\boldsymbol{P}^{-\frac{1}{2}}\boldsymbol{S}_{i}\boldsymbol{P}^{-\frac{1}{2}}\big)\boldsymbol{P}^{\frac{1}{2}} \in \mathcal{M}.
\end{align}
An important property relates the Logarithm and Exponential maps to the geodesic curve.
Formally, let $\boldsymbol{P}_{1},\boldsymbol{P}_{2}\!\in\!\mathcal{M}$, and consider the (unique) geodesic $\varphi(t)$ from $\boldsymbol{P}_1$ to $\boldsymbol{P}_2$. The initial velocity $\varphi'(0)\! \in \! \mathcal{T}_{\boldsymbol{P}_1}\mathcal{M}$ is given by the Logarithm map $\varphi'\left(0\right)\!=\!\text{Log}_{\boldsymbol{P}_{1}}\left(\boldsymbol{P}_{2}\right)$.
Similarly, the Exponential map projects the initial velocity vector  $\varphi'\left(0\right)$ back to $\boldsymbol{P}_2$, namely, $\boldsymbol{P}_{2}=\text{Exp}_{\boldsymbol{P}_{1}}\left(\varphi'\left(0\right)\right)$.

\subsection{Riemannian mean}

The Riemannian mean $\overline{\boldsymbol{P}}$ of a set $\left\{ \boldsymbol{P}_{i} | \boldsymbol{P}_{i}\in\mathcal{M}\right\} $ is defined using the Fr\'echet mean:
\begin{equation}\label{eq:riemannian_mean}
\overline{\boldsymbol{P}}\triangleq\arg\min_{\boldsymbol{P}\in\mathcal{M}}\sum_{i}d_{R}^{2}\big(\boldsymbol{P},\boldsymbol{P}_{i}\big).
\end{equation}
A special case is the Riemannian mean $\overline{\boldsymbol{P}}$ of two \ac{SPD} matrices $\boldsymbol{P}_{1},\boldsymbol{P}_{2}\!\in\!\mathcal{M}$, which has a closed-form expression, and is located at the midpoint of the geodesic curve:
\[
\overline{\boldsymbol{P}}=\varphi(\tfrac{1}{2})=\boldsymbol{P}_{1}^{\frac{1}{2}}\big(\boldsymbol{P}_{1}^{-\frac{1}{2}}\boldsymbol{P}_{2}\boldsymbol{P}_{1}^{-\frac{1}{2}}\big)^{\frac{1}{2}}\boldsymbol{P}_{1}^{\frac{1}{2}}.
\]
Generally, for more than two matrices, the solution of the optimization problem \eqref{eq:riemannian_mean} can be obtained by an iterative procedure. Barachant \emph{et al} \cite{barachant2013classification} presented an algorithm based on \cite{moakher2005differential} for estimating the Riemannian mean. For completeness, we include their algorithm in Appendix \ref{alg:RiemannianMean}.

Given a set $\left\{ \boldsymbol{P}_{i}|\boldsymbol{P}_{i}\in\mathcal{M}\right\} $ and its Riemannian mean $\overline{\boldsymbol{P}}$, there is a commonly used approximation of the Riemannian distances on $\mathcal{M}$ in the neighborhood of $\overline{\boldsymbol{P}}$. 
Specifically, the approximation of the Riemannian distance $d_{R}^{2}$ is given by:
\begin{align}
\label{eq:TangentApproxiamtion}
d_{R}^{2}\big(\boldsymbol{P}_{i},\boldsymbol{P}_{j}\big)\approx\big\Vert \hat{\boldsymbol{S}}_{i}-\hat{\boldsymbol{S}}_{j}\big\Vert _{F}^{2}\,\,,
\end{align}
where $\hat{\boldsymbol{S}}_{i}=\overline{\boldsymbol{P}}^{-\tfrac{1}{2}}\text{Log}_{\overline{\boldsymbol{P}}}(\boldsymbol{P}_{i})\overline{\boldsymbol{P}}^{-\tfrac{1}{2}}$.
For more details on the accuracy of this approximation, see \cite{tuzel2008pedestrian}.

\section{Domain Adaptation with Parallel Transport}
\label{sec:DomainAdaptaion}
\subsection{Overview}
Let $\mathcal{X}^{(1)}\!=\!\big\{ \boldsymbol{x}_{i}^{(1)}(t)\big\} _{i=1}^{N_{1}}$ and $\mathcal{X}^{(2)}=\big\{ \boldsymbol{x}_{i}^{(2)}(t)\big\} _{i=1}^{N_{2}}$ be two subsets of $N_1$ and $N_2$ high-dimensional time series, respectively, where $\boldsymbol{x}_i^{(k)}(t)\in \mathbb{R}^D$. Suppose each subset lives in a particular domain, which could be related to the acquisition modality, session, deployment, and set of environmental conditions. In our notation, the superscript $k$ denotes the index of the subset, the subscript $i$ denotes the index of the time-series within each subset, and $t$ represents the time axis of each time-series. 

Our exposition focuses only on two subsets, and the generalization for any number of subsets is discussed at the end of this section. In addition, we consider here time-series, but our derivation does not take the temporal order into account, and therefore, the extension to other types of data, where $t$ is merely a sample index, e.g., images, is straight-forward.

Analyzing such data typically raises many challenges. For example, a long-standing problem is how to efficiently compare between high-dimensional point clouds, and particularly, time-series. When the data are measured signals, sample comparisons become even more challenging, since such high-dimensional measured data usually contain high levels of noise.

In particular, in our setting, we face an additional challenge, since the data is given in different domains; comparing time-series from the same subset is a difficult task by itself, even more so is comparing time-series from two subsets from different domains.

Our goal is to find a new \emph{joint} representation of the two subsets in an unsupervised manner. Broadly, we aim to devise a low-dimensional representation in a Euclidean space that facilitates efficient and meaningful comparisons. 
For the purpose of evaluation, we associate the time-series $\boldsymbol{x}_i^{(k)}(t)$ with labels $y_i^{(k)}$ and define ``meaningful'' comparisons with respect to these labels. 
More concretely, we evaluate the joint representation by the Euclidean distance between the new representation of any two time-series with similar corresponding labels, independently of the time-series respective domain.
We emphasize that the proposed approach is unsupervised and it does not depend on the labels, which are only considered for the purpose of evaluation.

Devising such a new representation will facilitate efficient and accurate domain adaptation schemes. 
Specifically, given a subset $\big\{ \boldsymbol{x}_{i}^{\left(1\right)}\left(t\right)\big\} _{i=1}^{N_{1}}$  with corresponding labels $\big\{ y_{i}^{\left(1\right)}\big\} _{i=1}^{N_{1}}$, we could train a classifier based on the new derived representation of the subset. Then, when another unlabeled subset $\big\{ \boldsymbol{x}_{i}^{\left(2\right)}\left(t\right)\big\} _{i=1}^{N_{2}}$ becomes available, we could apply the trained classifier to the derived (joint) representation of the latter subset.

\subsection{Illustrative example}

To put the problem setting and our proposed solution in context, throughout the paper, we will follow an illustrative example, taken from the brain computer interface (BCI) competition IV (dataset IIa) \cite{schlogl200719}.
Consider data from a BCI experiment of motor imagery comprising of recordings from $D=22$ Electroencephalography (EEG) electrodes. The dataset contains several subjects, where each subject was asked repeatedly to perform one out of four motor imagery tasks: movement of the right hand, the left hand, both feet, and the tongue.

Let $\mathcal{X}^{(1)}\!=\!\big\{ \boldsymbol{x}_{i}^{\left(1\right)}\left(t\right)\big\} _{i=1}^{N_{1}}$ be a subset of recordings acquired from a single subject, indexed $(1)$, where the time-series $\boldsymbol{x}_i^{(1)}(t)$ consists of the signals, recorded simultaneously from the $D$ EEG channels during the $i$-th repetition/trial. Each time series $\boldsymbol{x}_i^{(1)}(t)$ is attached with a label $y_i^{(1)}$, denoting the imagery task performed at the the $i$-th trial.
Common practice is to train a classifier based on $\mathcal{X}^{(1)}$, so that the imagery task could be identified from new EEG recordings. This capability could then be the basis for devising brain computer interfaces, for example, to control prosthetics. 

Suppose a new subset $\mathcal{X}^{(2)}\!=\!\big\{ \boldsymbol{x}_{i}^{\left(2\right)}\left(t\right)\big\} _{i=1}^{N_{2}}$ of recordings acquired from another subject, indexed $(2)$, becomes available. Applying the classifier, trained based on data from subject $(1)$, to the new subset of recordings from subject $(2)$ yields poor results, as we will demonstrate in Section \ref{sub:BCI}. Indeed, most methods addressing this particular challenge, as well as related problems, exclusively analyze data from each individual subject separately. By constructing a joint representation for both $\mathcal{X}^{(1)}$ and $\mathcal{X}^{(2)}$, which is oblivious to the specific subject, we develop a classifier that is trained on data from one subject and applied to data from another subject without any calibration, i.e., without any labeled data from the new (test) subject.

\subsection{Covariance matrices as data features}

As described before, we suggest looking at the data through the lens of covariance matrices. We denote the covariance matrices by: 
\begin{equation*}
\boldsymbol{P}_{i}^{(k)}=\mathbb{E}\Big[\big(\boldsymbol{x}_{i}^{(k)}(t)-\boldsymbol{\mu}_{i}^{(k)}\big)\big(\boldsymbol{x}_{i}^{(k)}(t)-\boldsymbol{\mu}_{i}^{(k)}\big)^{T}\Big]\,,
\end{equation*}
where $\boldsymbol{\mu}_i^{(k)} = \mathbb{E}\big[\boldsymbol{x}_{i}^{(k)}(t)\big]$. Typically, since the statistics of the data is unknown, we use estimates of the covariance, such as the sample covariance. We note that our approach is applicable to any kind of input data given as \ac{SPD} matrices. For example, in machine learning, common practice is to use kernels which represent an inner product between features after some non-linear transformation \cite{hofmann2008kernel}. 

%These features help us with the 3 issues just mentioned. They provide us low dimensional representation, which obtains by (in the numerical case) aggregating and averaging the record samples, thus, they are more robust to noise. Moreover, they have strong geometric properties which provide us with the Riemannian metric \eqref{eq:RiemannianMetric} which is much more suitable for comparing between different measurements.

By using covariance matrices as data features we enjoy a few key benefits. First, since covariance matrices are computed from data by averaging over time, they tend to be robust to noise. Second, covariance matrices can be seen as a low dimensional representation. Third, they have useful geometric properties and a well-developed Riemannian framework, as described in Section \ref{sec:prelim}. Particularly, they have a Riemannian metric \eqref{eq:RiemannianMetric}, facilitating appropriate data samples comparisons, which is a basic ingredient of many analysis and learning techniques.
In this work, we build on and extend the latter.

Recently, the usefulness of covariance matrices has been demonstrated in the context of the BCI problem \cite{barachant2013classification}.  
There, Barachant et al. considered data from a single subject and proposed to project the covariance matrices $\left\{ \boldsymbol{P}_{i}\right\} $ of the recordings from each trial (after some whitening) into the tangent plane of the Riemannian mean $\overline{\boldsymbol{P}}$, namely compute $\boldsymbol{S}_{i}\!=\!\text{Log}_{\overline{\boldsymbol{P}}}\left(\boldsymbol{P}_{i}\right)$. Then, a classifier was trained on the set $\left\{ \boldsymbol{S}_{i}\right\} $. Using this approach, state of the art results for motor imagery task classification were obtained.
However, when considering several subsets from multiple domains, such as different sessions or subjects, as reported in \cite{barachant2013classification}, the covariance matrices convey a domain-specific content, which in turn poses limitations on task classification. 
For multiple sessions on different days, Barachant et al. proposed to subtract the Riemannian mean from each subset, namely, to project each subset $\mathcal{P}^{(k)}=\big\{ \boldsymbol{P}_{i}^{\left(k\right)}\big\} $ to the tangent space at its own mean. Indeed, when the train set and the test set were obtained on different days, this mean normalization improved the task classification rate. However, in the case of multiple subjects, this approach is inadequate.  As mentioned before, given recordings from one subject as a train set and recordings from another subject as a test set, the classification of the different mental tasks based on covariance matrices fails completely. 
%This is due to the fact that each subject has some different ambient structure which is stored in the covariance matrices.

This illuminates the primary challenge addressed in this work -- how to build a representation so that any two covariance matrices associated with the same mental task, but from possibly different sessions or subjects, will be given a similar representation.
Importantly, since the task labels are unknown, this objective cannot be directly imposed. In the sequel, we exploit the Riemannian geometry of covariance matrices, and devise such a representation in an unsupervised manner by preserving local geometric structures.

\subsection{Formulation}

Consider two subsets $\mathcal{P}^{(1)}$ and $\mathcal{P}^{(2)}$ from two different domains consisting of $N_1$ and $N_2$ covariance matrices, respectively. Let  $\overline{\boldsymbol{P}}^{(1)}$ and $\overline{\boldsymbol{P}}^{(2)}$ be their respective Riemannian means. 
Let $\varphi(t)$, given explicitly in \eqref{eq:Geodesic}, denote the unique geodesic from  $\overline{\boldsymbol{P}}^{(2)}$ to $\overline{\boldsymbol{P}}^{(1)}$ such that $\varphi(0)\!=\!\overline{\boldsymbol{P}}^{(2)}$ and $\varphi\left(1\right)\!=\!\overline{\boldsymbol{P}}^{\left(1\right)}$.
Finally, let $\boldsymbol{S}_{i}^{\left(k\right)}$ be the symmetric matrix (or equivalently, the vector) in the tangent space $\mathcal{T}_{\overline{\boldsymbol{P}}^{\left(k\right)}}\mathcal{M}$, obtained by projecting $\boldsymbol{P}_{i}^{\left(k\right)}$ to $\mathcal{T}_{\overline{\boldsymbol{P}}^{\left(k\right)}}\mathcal{M}$:
\[
\boldsymbol{S}_{i}^{\left(k\right)}=\text{Log}_{\overline{\boldsymbol{P}}^{\left(k\right)}}\big(\boldsymbol{P}_{i}^{\left(k\right)}\big)\,,
\]
for $k\in \{1,2\}$ and $i\in \{1,2,\dots,N_k\}$ . 

Our goal now is to derive a new representation $\Gamma\big(\boldsymbol{S}_{i}^{(2)}\big)$ of $\boldsymbol{S}_{i}^{(2)}$ given by the map $\Gamma\!:\!\mathcal{T}_{\overline{\boldsymbol{P}}^{(2)}}\mathcal{M}\to\mathcal{T}_{\overline{\boldsymbol{P}}^{(1)}}\mathcal{M}$, such that $\big\{ \boldsymbol{S}_{i}^{(1)} \big\}$ and $\big\{ \Gamma\big(\boldsymbol{S}_{i}^{(2)}\big) \big\}$ live in the same space. 
This allows us to relate samples from the two subsets, and compute quantities such as $\big\langle \boldsymbol{S}_{i}^{(1)},\Gamma\big(\boldsymbol{S}_{j}^{(2)}\big)\big\rangle _{\overline{\boldsymbol{P}}^{(1)}} $. 
In addition, we require that the new representation will fulfill the following properties:

\begin{enumerate}
\item 
Zero mean:
\[
\tfrac{1}{N_{2}}\textstyle\sum_{i=1}^{N_{2}}\Gamma\big(\boldsymbol{S}_{i}^{(2)}\big)=\tfrac{1}{N_{1}}\textstyle\sum_{i=1}^{N_{1}}\boldsymbol{S}_{i}^{(1)}=0
\]

\item 
Inner product preservation:
\[
\big\langle \Gamma\big(\boldsymbol{S}_{i}^{(2)}\big),\Gamma\big(\boldsymbol{S}_{j}^{(2)}\big)\big\rangle _{\overline{\boldsymbol{P}}^{(1)}}=\big\langle \boldsymbol{S}_{i}^{(2)},\boldsymbol{S}_{j}^{(2)}\big\rangle _{\overline{\boldsymbol{P}}^{(2)}}
\]
for all $i,j \in \{1,\dots,N_2\}$.

\item 
Geodesic velocity preservation:
\begin{equation}\label{eq:3prop}
\Gamma\left({\varphi}'\left(0\right)\right)={\varphi}'\left(1\right)
\end{equation}
\end{enumerate}

Properties (1) and (2) imply that the new representation $\Gamma$ preserves inter-sample relations, defined by the inner product. Note that a map $\Gamma$ satisfying properties (1) and (2) is not unique;
for any $\Gamma$ admitting to properties (1) and (2), the composition $R\circ \Gamma$, where $R$ is an arbitrary rotation within the subspace $\mathcal{T}_{\overline{\boldsymbol{P}}^{(1)}}\mathcal{M}$, satisfies properties (1) and (2) as well.
To resolve this arbitrary degree of freedom, we use the geodesic between two points on the \ac{SPD} manifold, which is unique \cite{bhatia2009positive}.
Concretely, in property (3), the two intrinsic symmetric matrices (vectors) ${\varphi}'(0)\!\in\!\mathcal{T}_{\overline{\boldsymbol{P}}^{(2)}}\mathcal{M}$ and ${\varphi}'(1)\!\in\!\mathcal{T}_{\overline{\boldsymbol{P}}^{(1)}}\mathcal{M}$, induced by the velocity of the unique geodesic at the source and destination, are used to fix a rotation and to align the subset $\big\{ \Gamma\big(\boldsymbol{S}_{i}^{(2)}\big)\big\} $ with the subset $\big\{ \boldsymbol{S}_{i}^{(1)}\big\}$.  

We remark that the above properties imply that the subset $\big\{ \Gamma\big(\boldsymbol{S}_{i}^{(2)}\big)\big\}$ is embedded in the $\langle \cdot,\cdot\rangle _{\overline{\boldsymbol{P}}^{(1)}}$ inner product space. In the sequel, we will describe how to circumvent the dependence of the inner product space on $\overline{\boldsymbol{P}}^{(1)}$ and make the new representation truly Euclidean by pre-whitening the data. Additionally, note that the mean subtraction presented in \cite{barachant2013classification} admits only properties (1)-(2). 

%These condition can be met by incorporating Parallel Transport (PT) into our scheme. 

\subsection{Domain adaptation}
First, we explicitly provide the expression for parallel transport on the SPD cone manifold, and then we use it to define the map $\Gamma$.
\begin{lemma}[Parallel Transport]
\label{def:PT}
Let $\boldsymbol{A},\boldsymbol{B}\!\in\!\mathcal{M}$. The \ac{PT} from $\boldsymbol{B}$ to $\boldsymbol{A}$ of any $\boldsymbol{S} \!\in \!\mathcal{T}_{\boldsymbol{B}}\mathcal{M}$ is given by:
\begin{equation}\label{eq:PT}
\Gamma_{\boldsymbol{B}\to \boldsymbol{A}}\left({\boldsymbol{S}}\right)\triangleq{\boldsymbol{E}}{\boldsymbol{S}}{\boldsymbol{E}}^{T},
\end{equation}
where ${\boldsymbol{E}}=\big({\boldsymbol{A}}{\boldsymbol{B}}^{-1}\big)^{\tfrac{1}{2}}$.
\end{lemma}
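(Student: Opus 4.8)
The plan is to treat $\Gamma_{\boldsymbol{B}\to\boldsymbol{A}}$ as what it is by definition — the solution operator of the parallel-transport equation along the unique geodesic $\varphi(t)$ from $\boldsymbol{B}$ to $\boldsymbol{A}$ supplied by \eqref{eq:Geodesic} — and to exhibit an explicit covariantly constant field realizing it. Writing the affine-invariant metric \eqref{eq:inner_product} in the equivalent trace form $\langle\boldsymbol{S}_1,\boldsymbol{S}_2\rangle_{\boldsymbol{P}}=\mathrm{tr}(\boldsymbol{P}^{-1}\boldsymbol{S}_1\boldsymbol{P}^{-1}\boldsymbol{S}_2)$, the Levi-Civita connection yields, for a field $\boldsymbol{V}(t)\in\mathcal{T}_{\varphi(t)}\mathcal{M}$, the covariant derivative $\frac{D\boldsymbol{V}}{dt}=\dot{\boldsymbol{V}}-\tfrac12(\dot\varphi\,\varphi^{-1}\boldsymbol{V}+\boldsymbol{V}\varphi^{-1}\dot\varphi)$; as a sanity check, substituting $\boldsymbol{V}=\dot\varphi$ recovers the geodesic equation $\ddot\varphi=\dot\varphi\varphi^{-1}\dot\varphi$. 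Parallel transport of $\boldsymbol{S}$ is then the field with $\boldsymbol{V}(0)=\boldsymbol{S}$ and $\frac{D\boldsymbol{V}}{dt}=0$, so the goal reduces to solving this linear ODE and evaluating at $t=1$.

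The key simplification is that the congruence maps $\phi_{\boldsymbol{G}}(\boldsymbol{P})=\boldsymbol{G}\boldsymbol{P}\boldsymbol{G}^T$, for $\boldsymbol{G}\in GL_n$, are isometries of $\mathcal{M}$: in the trace form above, invariance of $\langle\cdot,\cdot\rangle$ under $\boldsymbol{P}\mapsto\boldsymbol{G}\boldsymbol{P}\boldsymbol{G}^T$ and $\boldsymbol{S}\mapsto\boldsymbol{G}\boldsymbol{S}\boldsymbol{G}^T$ is a one-line computation using $(\boldsymbol{G}\boldsymbol{P}\boldsymbol{G}^T)^{-1}=\boldsymbol{G}^{-T}\boldsymbol{P}^{-1}\boldsymbol{G}^{-1}$ and cyclicity of the trace. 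Since isometries map geodesics to geodesics and intertwine parallel transport with their differentials $d\phi_{\boldsymbol{G}}(\boldsymbol{S})=\boldsymbol{G}\boldsymbol{S}\boldsymbol{G}^T$, I would first apply $\phi_{\boldsymbol{B}^{-1/2}}$ to move the source to the identity: it sends $\boldsymbol{B}\mapsto\boldsymbol{I}$, $\boldsymbol{A}\mapsto\tilde{\boldsymbol{A}}:=\boldsymbol{B}^{-1/2}\boldsymbol{A}\boldsymbol{B}^{-1/2}$, and $\boldsymbol{S}\mapsto\tilde{\boldsymbol{S}}:=\boldsymbol{B}^{-1/2}\boldsymbol{S}\boldsymbol{B}^{-1/2}$. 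It therefore suffices to transport $\tilde{\boldsymbol{S}}$ from $\boldsymbol{I}$ to $\tilde{\boldsymbol{A}}$ and then pull the result back by $\phi_{\boldsymbol{B}^{1/2}}$.

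At the identity the geodesic is simply $\varphi(t)=\exp(t\boldsymbol{S}_0)$ with $\boldsymbol{S}_0=\log\tilde{\boldsymbol{A}}$, so $\dot\varphi\varphi^{-1}=\varphi^{-1}\dot\varphi=\boldsymbol{S}_0$ is constant. I claim the parallel field is $\boldsymbol{V}(t)=\exp(\tfrac{t}{2}\boldsymbol{S}_0)\,\tilde{\boldsymbol{S}}\,\exp(\tfrac{t}{2}\boldsymbol{S}_0)$: it is symmetric (hence tangent) for all $t$, satisfies $\boldsymbol{V}(0)=\tilde{\boldsymbol{S}}$, and differentiating gives $\dot{\boldsymbol{V}}=\tfrac12(\boldsymbol{S}_0\boldsymbol{V}+\boldsymbol{V}\boldsymbol{S}_0)$, which is exactly the condition $\frac{D\boldsymbol{V}}{dt}=0$. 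Evaluating at $t=1$ gives $\tilde{\boldsymbol{A}}^{1/2}\tilde{\boldsymbol{S}}\tilde{\boldsymbol{A}}^{1/2}$; pulling back by $\phi_{\boldsymbol{B}^{1/2}}$ produces $\boldsymbol{E}\boldsymbol{S}\boldsymbol{E}^T$ with $\boldsymbol{E}=\boldsymbol{B}^{1/2}\tilde{\boldsymbol{A}}^{1/2}\boldsymbol{B}^{-1/2}$, and a short check that $\boldsymbol{E}^2=\boldsymbol{B}^{1/2}\tilde{\boldsymbol{A}}\boldsymbol{B}^{-1/2}=\boldsymbol{A}\boldsymbol{B}^{-1}$ identifies $\boldsymbol{E}=(\boldsymbol{A}\boldsymbol{B}^{-1})^{1/2}$ (the principal root, well defined since $\boldsymbol{A}\boldsymbol{B}^{-1}$ is similar to the SPD matrix $\tilde{\boldsymbol{A}}$ via $\boldsymbol{B}^{-1/2}$).

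I expect the only real obstacle to be establishing the covariant-derivative formula for the affine-invariant metric, i.e.\ identifying the Levi-Civita connection; once that expression is in hand the verification is immediate because $\boldsymbol{S}_0$ commutes with $\exp(\tfrac{t}{2}\boldsymbol{S}_0)$. This can be handled either by deriving the connection from the Koszul formula (or from the known geodesic equation together with torsion-freeness and metric compatibility), or, avoiding connection computations altogether, by invoking the symmetric-space structure of $\mathcal{M}$: the one-parameter group of transvections $\phi_{\exp(t\boldsymbol{S}_0/2)}$ slides along the geodesic through $\boldsymbol{I}$, and its differential is by construction the parallel transport, reproducing the same candidate field. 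Everything else — the isometry reduction and the final algebraic identification of $\boldsymbol{E}$ — is routine.
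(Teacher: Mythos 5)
Your proof is correct, but it reaches \eqref{eq:PT} by a genuinely different route than the paper. The paper does not derive parallel transport from its definition at all: it quotes the closed-form expression $\Gamma_{\boldsymbol{B}\to\boldsymbol{A}}(\boldsymbol{S})=\boldsymbol{M}\boldsymbol{S}\boldsymbol{M}^{T}$, with $\boldsymbol{M}=\boldsymbol{B}^{\frac{1}{2}}\exp\bigl(\tfrac{1}{2}\boldsymbol{B}^{-\frac{1}{2}}\text{Log}_{\boldsymbol{B}}(\boldsymbol{A})\boldsymbol{B}^{-\frac{1}{2}}\bigr)\boldsymbol{B}^{-\frac{1}{2}}$, directly from \cite{ferreira2006newton}, collapses the $\exp$--$\log$ composition to get $\boldsymbol{M}=\boldsymbol{B}^{\frac{1}{2}}\bigl(\boldsymbol{B}^{-\frac{1}{2}}\boldsymbol{A}\boldsymbol{B}^{-\frac{1}{2}}\bigr)^{\frac{1}{2}}\boldsymbol{B}^{-\frac{1}{2}}$, and then argues $\boldsymbol{M}^{2}=\boldsymbol{A}\boldsymbol{B}^{-1}=\boldsymbol{E}^{2}$ with $\boldsymbol{M}=\boldsymbol{E}$ by uniqueness of the square root with positive spectrum. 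You instead prove the transport formula from first principles: you write down the covariant-derivative ODE for the affine-invariant metric, use the congruence maps $\phi_{\boldsymbol{G}}$ as isometries to reduce to a geodesic emanating from $\boldsymbol{I}$, exhibit the covariantly constant field $\boldsymbol{V}(t)=\exp(\tfrac{t}{2}\boldsymbol{S}_{0})\tilde{\boldsymbol{S}}\exp(\tfrac{t}{2}\boldsymbol{S}_{0})$, and pull back by $\phi_{\boldsymbol{B}^{1/2}}$. Notably, your endgame --- identifying $\boldsymbol{B}^{\frac{1}{2}}\tilde{\boldsymbol{A}}^{\frac{1}{2}}\boldsymbol{B}^{-\frac{1}{2}}$ with $(\boldsymbol{A}\boldsymbol{B}^{-1})^{\frac{1}{2}}$ by squaring and appealing to positivity of the spectrum --- is exactly the paper's final step, so the two arguments converge there; everything before it differs. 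The paper's route buys brevity at the price of resting entirely on the cited formula; yours buys self-containedness and geometric insight (it explains \emph{why} transport is a congruence, via the isometry group, or equivalently via the transvections of the symmetric-space structure). The one input you leave unproved is the Levi-Civita connection formula $\frac{D\boldsymbol{V}}{dt}=\dot{\boldsymbol{V}}-\tfrac{1}{2}\bigl(\dot{\varphi}\varphi^{-1}\boldsymbol{V}+\boldsymbol{V}\varphi^{-1}\dot{\varphi}\bigr)$; you flag this honestly, your sanity check against the geodesic equation is right, and either route you propose (Koszul, or the transvection argument) closes it by standard means --- so this is a citation-level dependency of the same kind the paper itself incurs with \cite{ferreira2006newton}, not a gap.
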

This lemma was presented in \cite[Eq.~3.4]{sra2015conic}. The proof of the lemma is given in Appendix \ref{sec:proofLemma_ESE} and it is based on \cite{ferreira2006newton}.
An illustration of the \ac{PT} on the \ac{SPD} manifold is presented in Figure \ref{fig:Cone2X2_PT}. 
Note that the inner products between the three vectors in the figure are preserved under the parallel transport $\Gamma_{\boldsymbol{B}\to \boldsymbol{A}}$ and the appearance could be misleading since the space is not Euclidean.
\begin{figure}[t]
\centering 
\includegraphics[width=1\columnwidth]{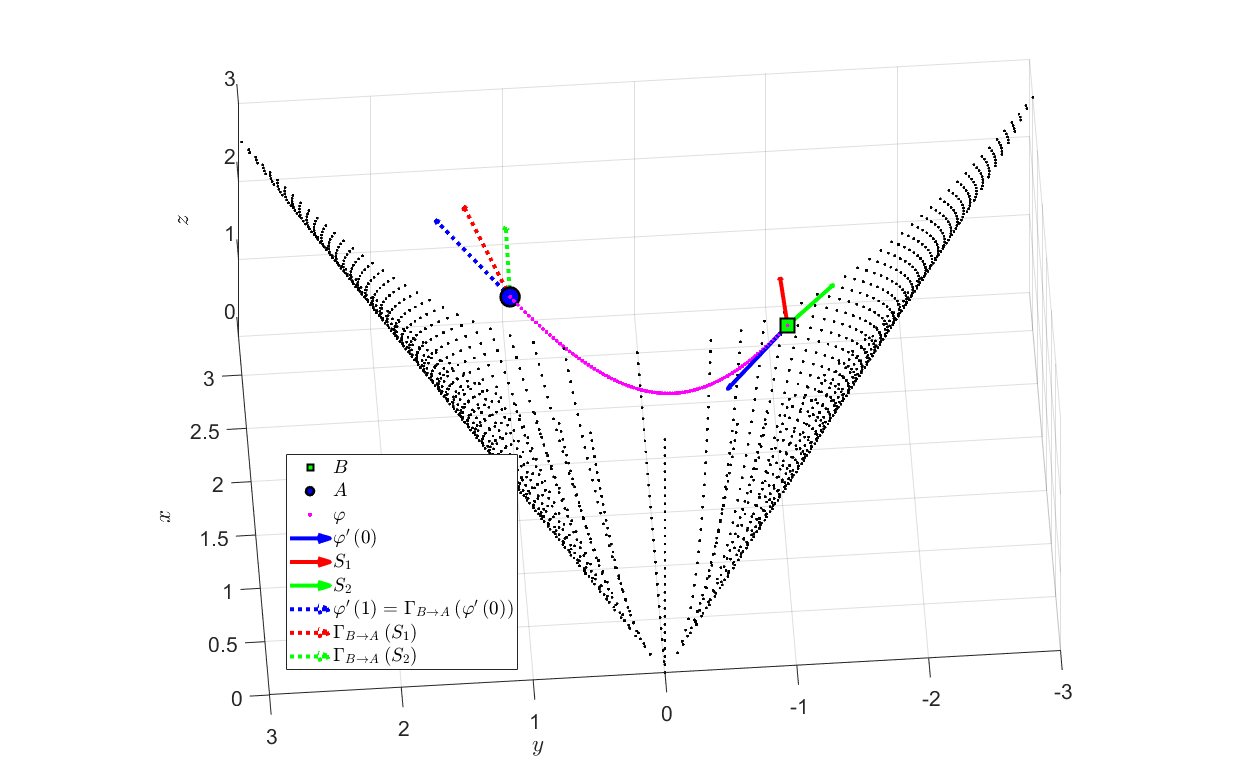}
\caption{Illustration of the \ac{PT} on the \ac{SPD} manifold. $\boldsymbol{A}$ and $\boldsymbol{B}$ are two \ac{SPD} matrices, and $\varphi$ is the unique geodesic between them. We plot three vectors in $\mathcal{T}_{\boldsymbol{B}}\mathcal{M}$: $\varphi'(0)$, $S_1$ and $S_2$ along with their corresponding parallel transported vectors to $\mathcal{T}_{\boldsymbol{A}}\mathcal{M}$ using $\Gamma_{\boldsymbol{B}\to \boldsymbol{A}}$.}
\label{fig:Cone2X2_PT}
\end{figure}

\begin{theorem}
\label{thm:Gamma}
The representation $\Gamma _{\overline{\boldsymbol{P}}^{(2)}\to\overline{\boldsymbol{P}}^{(1)}}\big(\boldsymbol{S}_{i}^{(2)}\big)$ given by the unique \ac{PT} of $\boldsymbol{S}_{i}^{(2)}$ from $\overline{\boldsymbol{P}}^{(2)}$ to $\overline{\boldsymbol{P}}^{(1)}$ is well defined and satisfies properties $(1)-(3)$.
\end{theorem}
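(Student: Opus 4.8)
The plan is to verify, one at a time, that the map $\Gamma=\Gamma_{\overline{\boldsymbol{P}}^{(2)}\to\overline{\boldsymbol{P}}^{(1)}}$ of Lemma \ref{def:PT} satisfies the three properties, writing for brevity $\boldsymbol{B}=\overline{\boldsymbol{P}}^{(2)}$, $\boldsymbol{A}=\overline{\boldsymbol{P}}^{(1)}$, and $\boldsymbol{E}=(\boldsymbol{A}\boldsymbol{B}^{-1})^{\frac12}$, so that $\Gamma(\boldsymbol{S})=\boldsymbol{E}\boldsymbol{S}\boldsymbol{E}^T$. For well-definedness I would first observe that $\boldsymbol{A}\boldsymbol{B}^{-1}$ is similar to the \ac{SPD} matrix $\boldsymbol{B}^{-\frac12}\boldsymbol{A}\boldsymbol{B}^{-\frac12}$, hence has strictly positive eigenvalues, so its principal square root $\boldsymbol{E}$ exists and is unique; moreover $\boldsymbol{E}\boldsymbol{S}\boldsymbol{E}^T$ is symmetric whenever $\boldsymbol{S}$ is, so the image indeed lies in $\mathcal{T}_{\overline{\boldsymbol{P}}^{(1)}}\mathcal{M}$. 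Uniqueness of the transport itself is inherited from the uniqueness of the geodesic on the Hadamard manifold.

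The backbone of every remaining computation is a single factorization. Setting $\boldsymbol{M}=\boldsymbol{B}^{-\frac12}\boldsymbol{A}\boldsymbol{B}^{-\frac12}$, I would show $\boldsymbol{E}=\boldsymbol{B}^{\frac12}\boldsymbol{M}^{\frac12}\boldsymbol{B}^{-\frac12}$, since $\boldsymbol{A}\boldsymbol{B}^{-1}=\boldsymbol{B}^{\frac12}\boldsymbol{M}\boldsymbol{B}^{-\frac12}$ and the square root commutes with this similarity. From this, the key identity $\boldsymbol{E}\boldsymbol{B}\boldsymbol{E}^T=\boldsymbol{A}$ follows by direct cancellation, equivalently $\boldsymbol{E}^T\boldsymbol{A}^{-1}\boldsymbol{E}=\boldsymbol{B}^{-1}$. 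I would then use this identity to settle property (2): writing the metric \eqref{eq:inner_product} as $\langle\boldsymbol{S}_1,\boldsymbol{S}_2\rangle_{\boldsymbol{P}}=\operatorname{tr}(\boldsymbol{P}^{-1}\boldsymbol{S}_1\boldsymbol{P}^{-1}\boldsymbol{S}_2)$, substituting $\Gamma(\boldsymbol{S})=\boldsymbol{E}\boldsymbol{S}\boldsymbol{E}^T$, and inserting $\boldsymbol{E}^T\boldsymbol{A}^{-1}\boldsymbol{E}=\boldsymbol{B}^{-1}$ together with the cyclic invariance of the trace collapses the inner product at $\overline{\boldsymbol{P}}^{(1)}$ to the one at $\overline{\boldsymbol{P}}^{(2)}$.

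For property (3) I would compute directly with the geodesic $\varphi(t)=\boldsymbol{B}^{\frac12}\boldsymbol{M}^{t}\boldsymbol{B}^{\frac12}$ from \eqref{eq:Geodesic}, giving $\varphi'(0)=\boldsymbol{B}^{\frac12}\log(\boldsymbol{M})\boldsymbol{B}^{\frac12}$ and $\varphi'(1)=\boldsymbol{B}^{\frac12}\boldsymbol{M}\log(\boldsymbol{M})\boldsymbol{B}^{\frac12}$; plugging $\boldsymbol{E}=\boldsymbol{B}^{\frac12}\boldsymbol{M}^{\frac12}\boldsymbol{B}^{-\frac12}$ into $\boldsymbol{E}\varphi'(0)\boldsymbol{E}^T$ and using that $\boldsymbol{M}^{\frac12}$ commutes with $\log(\boldsymbol{M})$ yields exactly $\varphi'(1)$. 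Property (1) then splits into two claims. The equality $\tfrac{1}{N_1}\sum_i\boldsymbol{S}_i^{(1)}=0$ is the first-order optimality condition for the Fr\'echet mean \eqref{eq:riemannian_mean}: the Riemannian gradient of $\sum_i d_R^2(\boldsymbol{P},\boldsymbol{P}_i^{(1)})$ is proportional to $\sum_i\text{Log}_{\boldsymbol{P}}(\boldsymbol{P}_i^{(1)})$ and vanishes at $\overline{\boldsymbol{P}}^{(1)}$; the identical argument gives $\sum_i\boldsymbol{S}_i^{(2)}=0$. Since $\Gamma$ is linear in its argument, $\sum_i\Gamma(\boldsymbol{S}_i^{(2)})=\boldsymbol{E}\big(\sum_i\boldsymbol{S}_i^{(2)}\big)\boldsymbol{E}^T=0$, which closes the proof.

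I expect the only genuinely non-mechanical point to be the vanishing of the averaged logarithm maps used in property (1); this rests on characterizing the Riemannian mean as the stationary point of the Fr\'echet functional, which I would either cite from the mean-estimation algorithm already referenced in the paper or justify via the gradient of the squared Riemannian distance. Everything else reduces to the factorization $\boldsymbol{E}=\boldsymbol{B}^{\frac12}\boldsymbol{M}^{\frac12}\boldsymbol{B}^{-\frac12}$ and the identity $\boldsymbol{E}\boldsymbol{B}\boldsymbol{E}^T=\boldsymbol{A}$, so the main care required is simply bookkeeping of the non-symmetric factor $\boldsymbol{E}$ and its transpose.
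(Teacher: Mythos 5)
Your proposal is correct, and for property (3) it takes a genuinely cleaner route than the paper. For well-definedness and property (1) you match the paper's structure (tangent spaces are all of the symmetric matrices; linearity of $\Gamma$), and for property (2) both arguments land on the same key identity $\boldsymbol{E}^{T}\boldsymbol{A}^{-1}\boldsymbol{E}=\boldsymbol{B}^{-1}$ followed by cyclic invariance of the trace — the paper reaches it by first showing $\boldsymbol{A}^{-1}\boldsymbol{E}$ is symmetric, whereas you read it off from $\boldsymbol{E}\boldsymbol{B}\boldsymbol{E}^{T}=\boldsymbol{A}$, a negligible difference. The real divergence is property (3): the paper computes $\varphi'(1)$ by a chain of manipulations pulling similarity factors through the matrix logarithm to establish $\varphi'(1)=-\text{Log}_{\boldsymbol{A}}(\boldsymbol{B})$, then introduces the unitary matrix $\boldsymbol{U}=\boldsymbol{A}^{-\frac{1}{2}}\boldsymbol{B}^{\frac{1}{2}}\big(\boldsymbol{B}^{-\frac{1}{2}}\boldsymbol{A}\boldsymbol{B}^{-\frac{1}{2}}\big)^{\frac{1}{2}}$ and conjugates through the logarithm again to match $\boldsymbol{E}\varphi'(0)\boldsymbol{E}^{T}$ against it. You instead keep everything in terms of $\boldsymbol{M}=\boldsymbol{B}^{-\frac{1}{2}}\boldsymbol{A}\boldsymbol{B}^{-\frac{1}{2}}$, so that $\varphi'(0)=\boldsymbol{B}^{\frac{1}{2}}\log(\boldsymbol{M})\boldsymbol{B}^{\frac{1}{2}}$, $\varphi'(1)=\boldsymbol{B}^{\frac{1}{2}}\boldsymbol{M}\log(\boldsymbol{M})\boldsymbol{B}^{\frac{1}{2}}$, and the conjugation by $\boldsymbol{E}=\boldsymbol{B}^{\frac{1}{2}}\boldsymbol{M}^{\frac{1}{2}}\boldsymbol{B}^{-\frac{1}{2}}$ closes by the commutativity of $\boldsymbol{M}^{\frac{1}{2}}$ with $\log(\boldsymbol{M})$ — no unitary matrix, no sign-flip identity, no pulling non-symmetric factors through $\log$. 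This buys brevity and fewer places to err. One further point in your favor: the paper's proof of property (1) silently uses $\sum_{i}\boldsymbol{S}_{i}^{(2)}=0$ (it appears only as an underbrace annotation), while you correctly identify this as the first-order stationarity condition of the Fr\'echet mean \eqref{eq:riemannian_mean}; making that explicit, as you propose, is the right thing to do, since the theorem's zero-mean property rests entirely on it.
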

The proof is given in Appendix \ref{sec:proofThm_Gamma}.

Theorem \ref{thm:Gamma} sets the stage for domain adaptation. We propose a map $\Psi:\mathcal{M}\to\mathcal{M}$ that adapts the domain of the subset of \ac{SPD} matrices $\mathcal{P}^{(2)}$ to the domain of the subset $\mathcal{P}^{(1)}$. For any $\boldsymbol{P}_{i}^{\left(2\right)} \in \mathcal{P}^{(2)}$, the map $\Psi\big(\boldsymbol{P}_{i}^{(2)}\big)$ is given by
\begin{align}
\label{eq:Psi}
\Psi\big(\boldsymbol{P}_{i}^{(2)}\big)=\text{Exp}_{\overline{\boldsymbol{P}}^{(1)}}\bigg(\Gamma_{\overline{\boldsymbol{P}}^{(2)}\to\overline{\boldsymbol{P}}^{(1)}}\Big(\text{Log}_{\overline{\boldsymbol{P}}^{(2)}}\big(\boldsymbol{P}_{i}^{(2)}\big)\Big)\bigg)\,.
\end{align}
To enhance the geometric insight, we explicitly describe the three steps comprising the construction of $\Psi$:
\begin{enumerate} \label{enu:FullPsi}

\item 
Project the \ac{SPD} matrix $\boldsymbol{P}_{i}^{(2)}$ to the tangent plane $\mathcal{T}_{\overline{\boldsymbol{P}}^{(2)}}\mathcal{M}$ by $\boldsymbol{S}_{i}^{(2)}=\text{Log}_{\overline{\boldsymbol{P}}^{(2)}}\big(\boldsymbol{P}_{i}^{(2)}\big)$.

\item 
Parallel transport $\boldsymbol{S}_{i}^{(2)}$ from  $\overline{\boldsymbol{P}}^{(2)}$ to $\overline{\boldsymbol{P}}^{(1)}$ by computing $\boldsymbol{S}_{i}^{(2)\to(1)}=\Gamma_{\overline{\boldsymbol{P}}^{(2)}\to\overline{\boldsymbol{P}}^{(1)}}\big(\boldsymbol{S}_{i}^{(2)}\big)$.

\item 
Project the symmetric matrix  $\boldsymbol{S}_{i}^{(2)\to(1)}\!\in\!\mathcal{T}_{\overline{\boldsymbol{P}}^{(1)}}\mathcal{M}$ back to the manifold using $\text{Exp}_{\overline{\boldsymbol{P}}^{(1)}}\big(\boldsymbol{S}_{i}^{(2)\to(1)}\big)$.
\end{enumerate} 

% We can consider the full mapping  $\Psi:\,\mathcal{M}\longrightarrow\mathcal{M}$ 
% In words, the full mapping of the matrix  can be implemented in two steps: (1) Project the \ac{SPD} matrix $\boldsymbol{P}_{i}^{\left(2\right)}$ to the tangent plant $\mathcal{T}_{\overline{P}^{\left(2\right)}}\mathcal{M}$. (2) Apply PT to $\boldsymbol{S}_{i}^{\left(2\right)}$ along the geodesic between $\overline{\boldsymbol{P}}^{\left(2\right)}$ and $\overline{\boldsymbol{P}}^{\left(1\right)}$. 

The implementation of $\Psi$ can be simplified and made more efficient by using the following theorem.

\begin{theorem}
\label{thm:EPE}
Let $\boldsymbol{A},\boldsymbol{B},\boldsymbol{P} \!\in\! \mathcal{M}$ and let $\boldsymbol{S}\!=\!\text{Log}_{\boldsymbol{B}}(\boldsymbol{P})\! \in\! \mathcal{T}_{\boldsymbol{B}}\mathcal{M}$. Then,
\[
	\text{Exp}_{\boldsymbol{A}}\left(\Gamma_{{\boldsymbol{B}}\to {\boldsymbol{A}}}\left(\boldsymbol{S}\right)\right)=\boldsymbol{E}\boldsymbol{P}\boldsymbol{E}^{T},
\]
where $\boldsymbol{E}=\big(\boldsymbol{A}\boldsymbol{B}^{-1}\big)^{\tfrac{1}{2}}$.
\end{theorem}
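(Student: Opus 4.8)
The plan is to reduce the statement to the congruence-equivariance of the exponential map. First I would unfold the left-hand side using the definitions of the Logarithm and Exponential maps together with the closed form $\boldsymbol{E}=(\boldsymbol{A}\boldsymbol{B}^{-1})^{\frac{1}{2}}$ for parallel transport from Lemma \ref{def:PT}. Writing $\boldsymbol{S}=\text{Log}_{\boldsymbol{B}}(\boldsymbol{P})$ and substituting $\Gamma_{\boldsymbol{B}\to\boldsymbol{A}}(\boldsymbol{S})=\boldsymbol{E}\boldsymbol{S}\boldsymbol{E}^T$ into $\text{Exp}_{\boldsymbol{A}}$, the quantity I must analyze is $\boldsymbol{A}^{\frac{1}{2}}\exp(\boldsymbol{A}^{-\frac{1}{2}}\boldsymbol{E}\boldsymbol{S}\boldsymbol{E}^T\boldsymbol{A}^{-\frac{1}{2}})\boldsymbol{A}^{\frac{1}{2}}$.

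The crux is the identity $\boldsymbol{E}\boldsymbol{B}\boldsymbol{E}^T=\boldsymbol{A}$, which says that the (generally non-symmetric) matrix $\boldsymbol{E}$ carries $\boldsymbol{B}$ to $\boldsymbol{A}$ by congruence. I would prove it by diagonalizing the product $\boldsymbol{A}\boldsymbol{B}^{-1}$: since $\boldsymbol{A}\boldsymbol{B}^{-1}=\boldsymbol{B}^{\frac{1}{2}}(\boldsymbol{B}^{-\frac{1}{2}}\boldsymbol{A}\boldsymbol{B}^{-\frac{1}{2}})\boldsymbol{B}^{-\frac{1}{2}}$ is similar to the \ac{SPD} matrix $\boldsymbol{C}=\boldsymbol{B}^{-\frac{1}{2}}\boldsymbol{A}\boldsymbol{B}^{-\frac{1}{2}}$, its principal square root is $\boldsymbol{E}=\boldsymbol{B}^{\frac{1}{2}}\boldsymbol{C}^{\frac{1}{2}}\boldsymbol{B}^{-\frac{1}{2}}$; substituting this expression and collapsing the middle factors yields $\boldsymbol{E}\boldsymbol{B}\boldsymbol{E}^T=\boldsymbol{B}^{\frac{1}{2}}\boldsymbol{C}\boldsymbol{B}^{\frac{1}{2}}=\boldsymbol{A}$. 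This congruence property is exactly what makes $\Gamma_{\boldsymbol{B}\to\boldsymbol{A}}$ a well-defined map into $\mathcal{T}_{\boldsymbol{A}}\mathcal{M}$, so it may alternatively be quoted from the proof of Lemma \ref{def:PT}.

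With that identity in hand, I would introduce $\boldsymbol{G}=\boldsymbol{A}^{-\frac{1}{2}}\boldsymbol{E}\boldsymbol{B}^{\frac{1}{2}}$ and observe that $\boldsymbol{G}\boldsymbol{G}^T=\boldsymbol{A}^{-\frac{1}{2}}(\boldsymbol{E}\boldsymbol{B}\boldsymbol{E}^T)\boldsymbol{A}^{-\frac{1}{2}}=\boldsymbol{I}$, i.e., $\boldsymbol{G}$ is orthogonal. Rewriting $\boldsymbol{A}^{-\frac{1}{2}}\boldsymbol{E}\boldsymbol{S}\boldsymbol{E}^T\boldsymbol{A}^{-\frac{1}{2}}=\boldsymbol{G}(\boldsymbol{B}^{-\frac{1}{2}}\boldsymbol{S}\boldsymbol{B}^{-\frac{1}{2}})\boldsymbol{G}^T$, I would then use the fact that the matrix exponential intertwines with similarity, $\boldsymbol{G}\exp(\boldsymbol{X})\boldsymbol{G}^{-1}=\exp(\boldsymbol{G}\boldsymbol{X}\boldsymbol{G}^{-1})$, together with $\boldsymbol{G}^{-1}=\boldsymbol{G}^T$, to pull $\boldsymbol{G}$ outside the exponential. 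Since $\boldsymbol{A}^{\frac{1}{2}}\boldsymbol{G}=\boldsymbol{E}\boldsymbol{B}^{\frac{1}{2}}$ and $\boldsymbol{G}^T\boldsymbol{A}^{\frac{1}{2}}=\boldsymbol{B}^{\frac{1}{2}}\boldsymbol{E}^T$, the outer factors telescope and the left-hand side becomes $\boldsymbol{E}\big(\boldsymbol{B}^{\frac{1}{2}}\exp(\boldsymbol{B}^{-\frac{1}{2}}\boldsymbol{S}\boldsymbol{B}^{-\frac{1}{2}})\boldsymbol{B}^{\frac{1}{2}}\big)\boldsymbol{E}^T=\boldsymbol{E}\,\text{Exp}_{\boldsymbol{B}}(\boldsymbol{S})\,\boldsymbol{E}^T$. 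Finally, because $\boldsymbol{S}=\text{Log}_{\boldsymbol{B}}(\boldsymbol{P})$ and the Exponential and Logarithm maps are mutual inverses, $\text{Exp}_{\boldsymbol{B}}(\boldsymbol{S})=\boldsymbol{P}$, yielding $\boldsymbol{E}\boldsymbol{P}\boldsymbol{E}^T$ as claimed.

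I expect the main obstacle to be the congruence identity $\boldsymbol{E}\boldsymbol{B}\boldsymbol{E}^T=\boldsymbol{A}$, because $\boldsymbol{E}=(\boldsymbol{A}\boldsymbol{B}^{-1})^{\frac{1}{2}}$ is a square root of a non-symmetric (though positive-spectrum) matrix, so it cannot be manipulated naively; the similarity factorization through the \ac{SPD} matrix $\boldsymbol{C}$ is what makes the principal square root explicit and the congruence transparent. Everything after that — the orthogonality of $\boldsymbol{G}$ and the exponential-conjugation step — is routine functional calculus.
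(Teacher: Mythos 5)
Your proof is correct and takes essentially the same route as the paper's: the paper likewise unfolds the Exponential and Logarithm maps, introduces the very same matrix $\boldsymbol{U}=\boldsymbol{A}^{-\frac{1}{2}}\boldsymbol{E}\boldsymbol{B}^{\frac{1}{2}}$ (your $\boldsymbol{G}$), pulls it out of the scalar $\exp$ by unitarity, and telescopes the outer factors to obtain $\boldsymbol{E}\boldsymbol{P}\boldsymbol{E}^{T}$. The only difference is cosmetic and in your favor: you explicitly establish the orthogonality of $\boldsymbol{G}$ via the congruence identity $\boldsymbol{E}\boldsymbol{B}\boldsymbol{E}^{T}=\boldsymbol{A}$, a fact the paper's appendix asserts without proof.
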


In words, the ``parallel transport'' of an \ac{SPD} matrix $\boldsymbol{P} \in \mathcal{M}$ from $\boldsymbol{B}$ to $\boldsymbol{A}$ is given the same transformation applied to $\boldsymbol{S}=\text{Log}_{\boldsymbol{B}}(\boldsymbol{P})$.
Namely, the ``parallel transport'' of the \ac{SPD} matrix $\boldsymbol{P}$ from $\boldsymbol{B}$ to $\boldsymbol{A}$ is equal to projecting $\boldsymbol{P}$ to the tangent plane at $\boldsymbol{B}$, parallel transporting the projection to the tangent plane at $\boldsymbol{A}$, and then projecting back to the \ac{SPD} manifold. 
As a consequence, we show in the sequel that the map $\Psi$ in \eqref{eq:Psi} can be written simply in terms of $\Gamma$.
The proof of Theorem \ref{thm:EPE} is given in Appendix \ref{sec:proofThm_EPE}. We note that we present the theorem in a general context, since we did not find such a result in the literature and believe it might be of independent interest.

Theorem \ref{thm:EPE} enables us to efficiently compute $\Psi\big(\boldsymbol{P}_{i}^{(2)}\big)$, since it circumvents the computation of the Logarithm and Exponential maps of the \ac{SPD} matrix in steps 1 and 3 above. Instead, the transformation defined by $\boldsymbol{E}$ is computed only once for the entire set, and \eqref{eq:Psi} can be recast as:
\begin{equation}\label{eq:epe}
\Psi\left(\boldsymbol{P}_{i}^{\left(2\right)}\right)=\Gamma_{\overline{\boldsymbol{P}}^{\left(2\right)}\to\overline{\boldsymbol{P}}^{\left(1\right)}}\left(\boldsymbol{P}_{i}^{\left(2\right)}\right)=\boldsymbol{E}\boldsymbol{P}_{i}^{(2)}\boldsymbol{E}^{T}
\end{equation}
where $\boldsymbol{E}\triangleq\Big(\overline{\boldsymbol{P}}^{(1)}\big(\overline{\boldsymbol{P}}^{(2)}\big)^{-1}\Big)^{\tfrac{1}{2}}$. Note that this equality is well defined since any tangent plane to the \ac{SPD} manifold $\mathcal{M}$ is the entire space of symmetric matrices \cite{ferreira2006newton}.

Thus far in the exposition, only the uniqueness of the geodesic curve on the manifold of SPD matrices was exploited, such that the \ac{PT} along the geodesic admits the property in \eqref{eq:3prop}, namely: $\Gamma\left({\varphi}'(0)\right)={\varphi}'(1)$. Importantly, \ac{PT} specifically along the unique geodesic curve exhibits important invariance to the ``relative'' location on the manifold.

\begin{definition}[Equivalent Pairs]
\label{def:eq_pairs}
Two pairs $(\boldsymbol{A}_1,\boldsymbol{B}_1)$ and $(\boldsymbol{A}_2, \boldsymbol{B}_2)$, such that $\boldsymbol{A}_1,\boldsymbol{B}_1,\boldsymbol{A}_2,\boldsymbol{B}_2 \! \in \! \mathcal{M}$, are \emph{equivalent} if there exists an invertible matrix $\boldsymbol{E}$ such that
\begin{align*}
	\boldsymbol{A}_{2}=\Gamma(\boldsymbol{A}_1)=\boldsymbol{E}\boldsymbol{A}_1\boldsymbol{E}^T \\
	\boldsymbol{B}_{2}=\Gamma(\boldsymbol{B}_1)=\boldsymbol{E}\boldsymbol{B}_1\boldsymbol{E}^T
\end{align*}
We denote this relation by
\[
	(\boldsymbol{A}_1,\boldsymbol{B}_1) \sim (\boldsymbol{A}_2, \boldsymbol{B}_2)
\]
\end{definition}

\begin{lemma}
The relation $\sim$ is an equivalence relation.
\end{lemma}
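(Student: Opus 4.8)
The plan is to verify the three defining axioms of an equivalence relation---reflexivity, symmetry, and transitivity---directly from Definition \ref{def:eq_pairs}. The key structural observation is that the maps of the form $\boldsymbol{X}\mapsto\boldsymbol{E}\boldsymbol{X}\boldsymbol{E}^{T}$ with $\boldsymbol{E}$ invertible constitute the congruence action of the general linear group $GL(n)$ on symmetric matrices; the three axioms will therefore follow from the group structure of $GL(n)$, namely the presence of the identity, of inverses, and of products. Each axiom reduces to exhibiting an appropriate witnessing matrix and checking that it is invertible.

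For reflexivity, I would take $\boldsymbol{E}=\boldsymbol{I}$, which is invertible and satisfies $\boldsymbol{A}=\boldsymbol{I}\boldsymbol{A}\boldsymbol{I}^{T}$ and $\boldsymbol{B}=\boldsymbol{I}\boldsymbol{B}\boldsymbol{I}^{T}$, so that $(\boldsymbol{A},\boldsymbol{B})\sim(\boldsymbol{A},\boldsymbol{B})$. For symmetry, suppose $(\boldsymbol{A}_1,\boldsymbol{B}_1)\sim(\boldsymbol{A}_2,\boldsymbol{B}_2)$ is witnessed by an invertible $\boldsymbol{E}$, so that $\boldsymbol{A}_2=\boldsymbol{E}\boldsymbol{A}_1\boldsymbol{E}^{T}$ and $\boldsymbol{B}_2=\boldsymbol{E}\boldsymbol{B}_1\boldsymbol{E}^{T}$. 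Multiplying by $\boldsymbol{E}^{-1}$ and $(\boldsymbol{E}^{-1})^{T}=(\boldsymbol{E}^{T})^{-1}$ yields $\boldsymbol{A}_1=\boldsymbol{E}^{-1}\boldsymbol{A}_2(\boldsymbol{E}^{-1})^{T}$ and likewise for $\boldsymbol{B}_1$; since $\boldsymbol{E}^{-1}$ is itself invertible, this witnesses $(\boldsymbol{A}_2,\boldsymbol{B}_2)\sim(\boldsymbol{A}_1,\boldsymbol{B}_1)$. For transitivity, given witnesses $\boldsymbol{E}_1$ for $(\boldsymbol{A}_1,\boldsymbol{B}_1)\sim(\boldsymbol{A}_2,\boldsymbol{B}_2)$ and $\boldsymbol{E}_2$ for $(\boldsymbol{A}_2,\boldsymbol{B}_2)\sim(\boldsymbol{A}_3,\boldsymbol{B}_3)$, I would substitute to obtain $\boldsymbol{A}_3=\boldsymbol{E}_2\boldsymbol{E}_1\boldsymbol{A}_1(\boldsymbol{E}_2\boldsymbol{E}_1)^{T}$ and analogously for $\boldsymbol{B}$, using $(\boldsymbol{E}_2\boldsymbol{E}_1)^{T}=\boldsymbol{E}_1^{T}\boldsymbol{E}_2^{T}$; the product $\boldsymbol{E}_2\boldsymbol{E}_1$ is invertible as a product of invertible matrices, so it witnesses $(\boldsymbol{A}_1,\boldsymbol{B}_1)\sim(\boldsymbol{A}_3,\boldsymbol{B}_3)$.

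There is no genuine obstacle here: the entire content is the bookkeeping of invertibility, and each step reduces to the closure of $GL(n)$ under identity, inverse, and multiplication. The only point meriting care is that the \emph{same} matrix $\boldsymbol{E}$ must simultaneously conjugate both coordinates of a pair, but since in each case the identical witness ($\boldsymbol{I}$, $\boldsymbol{E}^{-1}$, or $\boldsymbol{E}_2\boldsymbol{E}_1$) is applied to both $\boldsymbol{A}$ and $\boldsymbol{B}$, this coupling is preserved automatically.
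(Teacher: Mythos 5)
Your proof is correct and follows exactly the same route as the paper's: the identity matrix witnesses reflexivity, $\boldsymbol{E}^{-1}$ witnesses symmetry, and the product $\boldsymbol{E}_2\boldsymbol{E}_1$ witnesses transitivity. The framing via the congruence action of $GL(n)$ is a nice conceptual gloss, but the underlying argument is identical.
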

The proof is straight-forward as we show in the following.
\begin{itemize}
\item \emph{Reflexivity} is satisfied by setting $\boldsymbol{E}$ to be the identity matrix.
\item \emph{Symmetry}: if  $\boldsymbol{A}_{2}=\boldsymbol{E}\boldsymbol{A}_1\boldsymbol{E}^T $ then $\boldsymbol{A}_{1}=\boldsymbol{E}^{-1}\boldsymbol{A}_{2}\boldsymbol{E}^{-T}$ and analogously for $\boldsymbol{B}_1, \boldsymbol{B}_2$.
\item \emph{Transitivity}: if $\boldsymbol{A}_{2}=\boldsymbol{E}_{1}\boldsymbol{A}_{1}\boldsymbol{E}_{1}^{T}$ and $\boldsymbol{A}_{3}=\boldsymbol{E}_{2}\boldsymbol{A}_{2}\boldsymbol{E}_{2}^{T}$ then $\boldsymbol{A}_{3}=\boldsymbol{E}\boldsymbol{A}_{1}\boldsymbol{E}^{T}$ where $\boldsymbol{E}=\boldsymbol{E}_{2}\boldsymbol{E}_{1}$ and analogously for $\boldsymbol{B}_1, \boldsymbol{B}_2$.
\end{itemize}

In other words, two pairs are equivalent if the relation of the two matrices in the pair  is given by the same transformation $\Gamma$.
We interpret such equivalent pairs as matrices with equivalent intra-relations (e.g., if $(\boldsymbol{A}_1,\boldsymbol{B}_1) \!\sim\! (\boldsymbol{A}_2, \boldsymbol{B}_2)$, then $d_R(\boldsymbol{A}_1,\boldsymbol{B}_1)\!=\!d_R(\boldsymbol{A}_2,\boldsymbol{B}_2)$), but with a different global position on the manifold. For example, each two pairs in Figure \ref{fig:2X2_Cone_Example} are equivalent pairs.

\begin{prop}
\label{prop:Invariant}
Let $(\boldsymbol{A}_{1},\boldsymbol{B}_{1})$ be a pair of \ac{SPD} matrices $\boldsymbol{A}_{1},\boldsymbol{B}_{1}\!\in\!\mathcal{M}$, and let $\big[(\boldsymbol{A}_{1},\boldsymbol{B}_{1})\big]$ denote the equivalence class
\[
\big[(\boldsymbol{A}_{1},\boldsymbol{B}_{1})\big] = \big\{ (\boldsymbol{A}_2, \boldsymbol{B}_2) \!\in\! \mathcal{M}\times\mathcal{M} \, | \,(\boldsymbol{A}_2,\boldsymbol{B}_2) \!\sim\!(\boldsymbol{A}_1,\boldsymbol{B}_1) \big\},
\]
of all matrix pairs that are equivalent to $(\boldsymbol{A}_{1},\boldsymbol{B}_{1})$.
Then, for any $(\boldsymbol{A}_{2},\boldsymbol{B}_{2}) \!\in\! \big[(\boldsymbol{A}_{1},\boldsymbol{B}_{1})\big]$:
\[
\Gamma\circ\Gamma_{\boldsymbol{B}_{1}\to \boldsymbol{A}_{1}}=\Gamma_{\boldsymbol{B}_{2}\to \boldsymbol{A}_{2}}\circ\Gamma\,,
\]
where $\Gamma\left(\boldsymbol{P}\right)=\boldsymbol{E}\boldsymbol{P}\boldsymbol{E}^{T}$ and $\boldsymbol{E}$ is the transformation defined in Definition \ref{def:eq_pairs}.

\end{prop}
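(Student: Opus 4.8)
The plan is to reduce the operator identity to a single matrix equation and then exploit the fact that $\boldsymbol{A}\boldsymbol{B}^{-1}$ has positive real spectrum, so that its principal square root behaves as a primary matrix function. First I would introduce shorthand for the two transport matrices supplied by Lemma \ref{def:PT}: write $\boldsymbol{F}_1 = \big(\boldsymbol{A}_1\boldsymbol{B}_1^{-1}\big)^{1/2}$ and $\boldsymbol{F}_2 = \big(\boldsymbol{A}_2\boldsymbol{B}_2^{-1}\big)^{1/2}$, so that $\Gamma_{\boldsymbol{B}_1\to\boldsymbol{A}_1}(\boldsymbol{S}) = \boldsymbol{F}_1\boldsymbol{S}\boldsymbol{F}_1^T$ and $\Gamma_{\boldsymbol{B}_2\to\boldsymbol{A}_2}(\boldsymbol{S}) = \boldsymbol{F}_2\boldsymbol{S}\boldsymbol{F}_2^T$. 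Applying the definition $\Gamma(\boldsymbol{P}) = \boldsymbol{E}\boldsymbol{P}\boldsymbol{E}^T$ from Definition \ref{def:eq_pairs}, the left-hand side of the claim evaluates on an arbitrary $\boldsymbol{S}$ to $(\boldsymbol{E}\boldsymbol{F}_1)\boldsymbol{S}(\boldsymbol{E}\boldsymbol{F}_1)^T$ and the right-hand side to $(\boldsymbol{F}_2\boldsymbol{E})\boldsymbol{S}(\boldsymbol{F}_2\boldsymbol{E})^T$. Hence it suffices to prove the single matrix identity $\boldsymbol{E}\boldsymbol{F}_1 = \boldsymbol{F}_2\boldsymbol{E}$.

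Next I would feed the equivalence relations $\boldsymbol{A}_2 = \boldsymbol{E}\boldsymbol{A}_1\boldsymbol{E}^T$ and $\boldsymbol{B}_2 = \boldsymbol{E}\boldsymbol{B}_1\boldsymbol{E}^T$ into the product $\boldsymbol{A}_2\boldsymbol{B}_2^{-1}$. Since $\boldsymbol{B}_2^{-1} = \boldsymbol{E}^{-T}\boldsymbol{B}_1^{-1}\boldsymbol{E}^{-1}$, the transposed factors cancel and I obtain $\boldsymbol{A}_2\boldsymbol{B}_2^{-1} = \boldsymbol{E}\big(\boldsymbol{A}_1\boldsymbol{B}_1^{-1}\big)\boldsymbol{E}^{-1}$. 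Thus the two ``ratio'' matrices sitting under the square roots are related by a similarity through exactly the same $\boldsymbol{E}$ that defines $\Gamma$.

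The crux is then to push the square root through this similarity, i.e. to conclude $\boldsymbol{F}_2 = \boldsymbol{E}\boldsymbol{F}_1\boldsymbol{E}^{-1}$. Here I would use that a product of two \ac{SPD} matrices, $\boldsymbol{A}\boldsymbol{B}^{-1}$, is similar to the \ac{SPD} matrix $\boldsymbol{B}^{-1/2}\boldsymbol{A}\boldsymbol{B}^{-1/2}$ and is therefore diagonalizable with strictly positive eigenvalues. Consequently its principal square root (the unique root with positive spectrum) is a primary matrix function determined on the spectrum, which commutes with any similarity: writing $\boldsymbol{A}_1\boldsymbol{B}_1^{-1} = \boldsymbol{V}\boldsymbol{\Lambda}\boldsymbol{V}^{-1}$ with $\boldsymbol{\Lambda}>0$ diagonal gives $\boldsymbol{A}_2\boldsymbol{B}_2^{-1} = (\boldsymbol{E}\boldsymbol{V})\boldsymbol{\Lambda}(\boldsymbol{E}\boldsymbol{V})^{-1}$, and taking the entrywise positive root of $\boldsymbol{\Lambda}$ in both yields $\boldsymbol{F}_2 = \boldsymbol{E}\boldsymbol{V}\boldsymbol{\Lambda}^{1/2}(\boldsymbol{E}\boldsymbol{V})^{-1} = \boldsymbol{E}\boldsymbol{F}_1\boldsymbol{E}^{-1}$. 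Right-multiplying by $\boldsymbol{E}$ gives $\boldsymbol{E}\boldsymbol{F}_1 = \boldsymbol{F}_2\boldsymbol{E}$, completing the argument.

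I expect the main obstacle to be precisely this last step: one must ensure that the principal square roots of $\boldsymbol{A}_1\boldsymbol{B}_1^{-1}$ and $\boldsymbol{A}_2\boldsymbol{B}_2^{-1}$ are selected on the same branch, so that the similarity can be transported through the root without an extraneous orthogonal or sign ambiguity. Establishing that both ratio matrices share a common positive spectrum, inherited via their similarity to an \ac{SPD} matrix, removes this ambiguity and is the load-bearing observation; the remaining manipulations are routine.
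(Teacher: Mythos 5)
Your proposal is correct, and it establishes the proposition by a genuinely different argument for the central identity. Both you and the paper begin identically: unfolding the compositions reduces the claim to the single matrix equation $\boldsymbol{E}\boldsymbol{E}_1 = \boldsymbol{E}_2\boldsymbol{E}$ (your $\boldsymbol{E}\boldsymbol{F}_1 = \boldsymbol{F}_2\boldsymbol{E}$), where $\boldsymbol{E}_i = \big(\boldsymbol{A}_i\boldsymbol{B}_i^{-1}\big)^{1/2}$. The paper then proves this identity through the symmetric-form representation $\boldsymbol{E}_i = \boldsymbol{B}_i^{1/2}\big(\boldsymbol{B}_i^{-1/2}\boldsymbol{A}_i\boldsymbol{B}_i^{-1/2}\big)^{1/2}\boldsymbol{B}_i^{-1/2}$: it constructs an auxiliary orthogonal matrix $\boldsymbol{K} = \boldsymbol{B}_2^{1/2}\boldsymbol{E}^{-T}\boldsymbol{B}_1^{-1/2}$, pulls it through the square root of an \ac{SPD} matrix (orthogonal conjugation preserving the \ac{SPD} square root), and carries out a lengthy chain of cancellations, including the identity $\boldsymbol{E}^T\boldsymbol{B}_2^{-1}\boldsymbol{E} = \boldsymbol{B}_1^{-1}$. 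You instead observe that $\boldsymbol{A}_2\boldsymbol{B}_2^{-1} = \boldsymbol{E}\big(\boldsymbol{A}_1\boldsymbol{B}_1^{-1}\big)\boldsymbol{E}^{-1}$ and invoke the fact that the principal square root of a matrix with positive real spectrum commutes with an \emph{arbitrary} similarity, yielding $\boldsymbol{F}_2 = \boldsymbol{E}\boldsymbol{F}_1\boldsymbol{E}^{-1}$ in one step; equivalently, $\boldsymbol{E}\boldsymbol{F}_1\boldsymbol{E}^{-1}$ squares to $\boldsymbol{A}_2\boldsymbol{B}_2^{-1}$ and has positive spectrum, so by uniqueness of such a root it must equal $\boldsymbol{F}_2$. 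Your route is shorter and more conceptual, and it leans on exactly the uniqueness fact the paper itself already uses in its proof of Lemma \ref{def:PT} (``the square root is unique since $\boldsymbol{A}\boldsymbol{B}^{-1}$ is similar to $\boldsymbol{B}^{-1/2}\boldsymbol{A}\boldsymbol{B}^{-1/2}>0$''), so it introduces no machinery foreign to the paper. What the paper's longer computation buys is that it only ever conjugates square roots by orthogonal matrices, which keeps every intermediate object manifestly symmetric positive definite; your argument trades that self-containment for brevity, and your closing remark correctly identifies the branch-selection issue (common positive spectrum) as the load-bearing point that makes the trade legitimate.
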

The proof is given in Appendix \ref{sec:proofProp_Invariant}.

An immediate consequence of Proposition \ref{prop:Invariant} is that the domain adaptation via the representation $\Psi$ is invariant to the relative position of $\overline{\boldsymbol{P}}^{(1)}$ and $\overline{\boldsymbol{P}}^{(2)}$ on the manifold, and is constructed equivalently for every pair in the equivalence class $\big[(\overline{\boldsymbol{P}}^{(1)},\overline{\boldsymbol{P}}^{(2)})\big]$.

To demonstrate the importance of the property above, we revisit the illustrating BCI problem. Suppose $(\overline{\boldsymbol{P}}_{A1},\overline{\boldsymbol{P}}_{B1})$ are the Riemannian means of the covariance matrices of Subject A and Subject B recorded in Session 1, and suppose $(\overline{\boldsymbol{P}}_{A2},\overline{\boldsymbol{P}}_{B2})$ are the Riemannian means of the covariance matrices of Subject A and Subject B recorded in Session 2. If $(\overline{\boldsymbol{P}}_{A1},\overline{\boldsymbol{P}}_{B1}) \!\sim\!  (\overline{\boldsymbol{P}}_{A2},\overline{\boldsymbol{P}}_{B2})$, then there exists a transformation $\Gamma$ such that $\Gamma$ encodes the relation between Session 1 and Session 2 whereas the relation of the two subjects is encoded by $\Gamma_{\overline{\boldsymbol{P}}_{B1}\to \overline{\boldsymbol{P}}_{A1}}$ or by $\Gamma_{\overline{\boldsymbol{P}}_{B2}\to \overline{\boldsymbol{P}}_{A2}}$, depending on the session.
Proposition \ref{prop:Invariant} guarantees the consistence of the relation between Subject A and Subject B. Namely, the Riemannian mean of Subject B in Session 2 can be related to the Riemannian mean of Subject A in Session 1 using the relation between the sessions (given by $\Gamma$) and the relation between the two subjects (given either by $\Gamma_{\overline{\boldsymbol{P}}_{B1}\to \overline{\boldsymbol{P}}_{A1}}$ or by $\Gamma_{\overline{\boldsymbol{P}}_{B2}\to \overline{\boldsymbol{P}}_{A2}}$), independently of the relative location of the means on the manifold.

%In words, if $\boldsymbol{S}\in\mathcal{T}_{A_{1}}\mathcal{M}$, then we can parallel transport $\boldsymbol{S}$ from $\boldsymbol{A}_{1}$ to $\boldsymbol{B}_{2}$ either by first moving to $\boldsymbol{A}_{2}$ and then to $\boldsymbol{B}_{2}$ or equivalently by first moving to $\boldsymbol{B}_{1}$ and then to $\boldsymbol{B}_{2}$. 

\subsection{Extension to $K$ subsets}

Overall, by Theorem \ref{thm:EPE}, for a general number of subsets $K\geq 2$,  we can apply \ac{PT} using $\Psi$  \eqref{eq:epe} directly to the \ac{SPD} matrices $\mathcal{P}^{(k)}=\big\{\boldsymbol{P}_{i}^{(k)}\big\}$ without projections to and from the tangent plane. 
Let  $\hat{\boldsymbol{P}}$  denote the Riemannian mean of Riemannian means (centroids) $\big\{ \overline{\boldsymbol{P}}^{(k)}\big\} _{k=1}^{K}$ of the subsets, namely,  $$\hat{\boldsymbol{P}}=\arg\min_{\boldsymbol{P}}\sum_{k=1}^{K}d_{R}^{2}\big(\boldsymbol{P},\overline{\boldsymbol{P}}^{(k)}\big).$$
Each subset $\mathcal{P}^{(k)}$ is then parallel transported from its corresponding centroid $\overline{\boldsymbol{P}}^{(k)}$ to $\hat{\boldsymbol{P}}$. Formally, let $\boldsymbol{\Gamma}_{i}^{(k)}$ denote $\boldsymbol{P}_{i}^{(k)}$ after applying \ac{PT}, which is given by
\[
\boldsymbol{\Gamma}_{i}^{(k)}=\Gamma_{\overline{\boldsymbol{P}}^{(k)}\to\hat{\boldsymbol{P}}}\big(\boldsymbol{P}_{i}^{(k)}\big),\qquad\forall i,\,k,
\]
and let $\tilde{\boldsymbol{S}}_{i}^{(k)}$ be the projection of  $\boldsymbol{\Gamma}_{i}^{(k)}$ to the Euclidean tangent space  \eqref{eq:TangentApproxiamtion}:
$$\tilde{\boldsymbol{S}}_{i}^{(k)}=\log\big(\hat{\boldsymbol{P}}^{-\tfrac{1}{2}}\boldsymbol{\Gamma}_{i}^{(k)}\hat{\boldsymbol{P}}^{-\tfrac{1}{2}}\big).$$
This projection, which is further discussed in \cite{barachant2013classification}, can be interpreted as (i) data whitening by  $\tilde{\boldsymbol{\Gamma}}_{i}^{(k)}\!=\!\hat{\boldsymbol{P}}^{-\tfrac{1}{2}}{\boldsymbol{\Gamma}}_{i}^{(k)}\hat{\boldsymbol{P}}^{-\tfrac{1}{2}}$, and (ii) projection to $\mathcal{T}_{\boldsymbol{I}}\mathcal{M}$ where $\boldsymbol{I}$ is the identity matrix. The projected symmetric matrices (vectors) $\tilde{\boldsymbol{S}}_{i}^{(k)}$ indeed reside in a Euclidean space. 
The proposed algorithm is given in Algorithm \ref{alg:TL_using_PT}.

\begin{algorithm}[t]
\textbf{\uline{Input}}\textbf{: }$\big\{ \boldsymbol{P}_{i}^{(1)}\big\} _{i=1}^{N_{1}},\big\{ \boldsymbol{P}_{i}^{(2)}\big\} _{i=1}^{N_{2}},\dots,\big\{ \boldsymbol{P}_{i}^{(K)}\big\} _{i=1}^{N_{K}}$ where $\boldsymbol{P}_{i}^{(k)}$ is the \ac{SPD} matrix associated with the $i$-th element (e.g., high-dimensional time-series) in the $k$-th subset.

\textbf{\uline{Output}}\textbf{:} $\big\{\tilde{ \boldsymbol{S}}_{i}^{(1)}\big\} _{i=1}^{N_{1}},\big\{ \tilde{\boldsymbol{S}}_{i}^{(2)}\big\} _{i=1}^{N_{2}},\dots,\big\{ \tilde{\boldsymbol{S}}_{i}^{(K)}\big\} _{i=1}^{N_{K}}$ where $\tilde{ \boldsymbol{S}}_{i}^{(k)}$ is the
new representation of $\boldsymbol{P}_{i}^{(k)}$ in a Euclidean space.
\begin{enumerate}
\item \textbf{For} each $k\in\left\{ 1,2,\dots,K\right\} $, compute $\overline{\boldsymbol{P}}^{(k)}$ the Riemannian mean of the subset $\big\{ \boldsymbol{P}_{i}^{(k)}\big\} $.
\item Compute $\hat{\boldsymbol{P}}$, the Riemannian mean of $\big\{ \overline{\boldsymbol{P}}^{(k)}\big\} _{k=1}^K$.
\item \textbf{For} all $k$ and all $i$, apply Parallel Transport using \eqref{eq:PT}:
\[
\boldsymbol{\Gamma}_{i}^{(k)}=\Gamma_{\overline{\boldsymbol{P}}^{(k)}\to\hat{\boldsymbol{P}}}\big(\boldsymbol{P}_{i}^{(k)}\big).
\]
\item \textbf{For} all $k$ and all $i$, project the transported matrix to the tangent space via: $$\tilde{\boldsymbol{S}}_{i}^{(k)}=\log\big(\hat{\boldsymbol{P}}^{-\tfrac{1}{2}}\boldsymbol{\Gamma}_{i}^{(k)}\hat{\boldsymbol{P}}^{-\tfrac{1}{2}}\big).$$
\end{enumerate}
\caption{Domain adaptation using Parallel Transport for \ac{SPD} matrices}
\label{alg:TL_using_PT}
\end{algorithm}

We conclude this section with two remarks. First, since the matrices $\tilde{\boldsymbol{S}}_{i}^{\left(k\right)}$ are symmetric, only their upper (or lower) triangular part with a gain factor of $\sqrt{2}$  applied to all non-diagonal elements could be taken into account. 
Second, alternative choices of $\hat{\boldsymbol{P}}$ could also be used, for example, the identity matrix.
Indeed, recently \cite{zanini2017transfer} proposed to align datasets for transfer learning in a similar context using the identity matrix as $\hat{\boldsymbol{P}}$.
However in \cite{zanini2017transfer}, the alignment appeared as an empirical affine transformation, whereas in this work, we provide the geometrical justification and rigorous mathematical analysis.

Specifically, in the case of two subsets with means $\overline{\boldsymbol{P}}_{A}\!\in\!\mathcal{M}$ and $\overline{\boldsymbol{P}}_{B}\!\in\!\mathcal{M}$, the affine transformation presented in \cite{zanini2017transfer} can be interpreted as two consecutive applications of \ac{PT}: from $\overline{\boldsymbol{P}}_{B}$ to $\boldsymbol{I}$ and then from $\boldsymbol{I}$ to $\overline{\boldsymbol{P
}}_{A}$. 
The arbitrary choice of $\boldsymbol{I}$ as an intermediate point introduces dependence of the algorithm on the global position on the manifold. Indeed, such a procedure, which can be expressed by $\Gamma _{\boldsymbol{I} \to \overline{\boldsymbol{P}}_{A}} \circ \Gamma _{\overline{\boldsymbol{P}}_{B} \to \boldsymbol{I}}$ does not admit the invariance property specified in Proposition \ref{prop:Invariant}.

Interestingly, the method proposed in \cite{zanini2017transfer} coincides with the present work, namely, $\Gamma _{\overline{\boldsymbol{P}}_{B} \to \overline{\boldsymbol{P}}_{A}} =\Gamma _{\boldsymbol{I} \to \overline{\boldsymbol{P}}_{A}} \circ \Gamma _{\overline{\boldsymbol{P}}_{B} \to \boldsymbol{I}}$ when the identity matrix  $\boldsymbol{I}$ is on the geodesic $\varphi$ between $\overline{\boldsymbol{P}}_{B}$ and  $\overline{\boldsymbol{P}}_{A}$. In this case, the matrices $\overline{\boldsymbol{P}}_{A}$ and $\overline{\boldsymbol{P}}_{B}$ commute and they have the same eigenvectors (see Appendix \ref{sec:proofCommute}).
From a data analysis perspective, when $\overline{\boldsymbol{P}}_{A}$ and $\overline{\boldsymbol{P}}_{B}$ are two covariance matrices of two subsets, this implies that the subsets have the same principal components.

We set $\hat{\boldsymbol{P}}$ as the Riemannian mean of the centroids so that the overall transport applied to the covariance matrices is minimal. This choice is motivated by the assumption that transporting accumulates distortions. 
This is a straight-forward generalization of the two subsets case, where the parallel transport is carried out along the shortest path (unique geodesic curve).

%TODO: put a link to the code demonstrating the invariance property and the difference from [12]?

\section{Experimental Results}
\label{sec:Results}
In this section we show the results of  Algorithm \ref{alg:TL_using_PT} for both a synthetic example and for real data.

\subsection{Toy Problem}
We generate time series in $\mathbb{R}^2$, so that their covariance matrices are in $\mathbb{R}^{2\times 2}$. Since the covariance matrices are symmetric, this particular choice enables us to visualize them in $\mathbb{R}^3$. Concretely, any $2\!\times\! 2$ symmetric matrix $\boldsymbol{A}=\big(\begin{smallmatrix}x & y\\
y & z
\end{smallmatrix}\big)$ can be visualized in $\mathbb{R}^3$ using $(x,y,z)\in \mathbb{R}^3$. $\boldsymbol{A}$  is positive-definite if and only if: $x,z>0$ and $y^2<x z$. These conditions establish the cone manifold of $2\!\times\! 2$ \ac{SPD} matrices.

Consider the set of hidden multi-dimensional times series $\left\{ \boldsymbol{s}_{i}[n]\right\} _{i=1}^{100}$, given by:
\[
\boldsymbol{s}_i\left[n\right]=\begin{bmatrix}\sin\left(2\pi f_{0}n/T\right)\\
\cos\left(2\pi f_{0}n/T+\phi_i\right)
\end{bmatrix},\qquad n=0,\ldots,T-1
\]
where $f_{0}\!=\!10$, $T\!=\!500$, and  $\phi_i$ is uniformly drawn from $\left[-{\pi}\slash{2},0\right]$.
Namely, each time-series $\boldsymbol{s}_i[n]$ consists of two oscillatory signals and is governed by a $1$-dimensional hidden variable $\phi_i$, the initial phase of the oscillations.
Indeed, the population covariance of $\boldsymbol{s}_i[n]$ is
\[
\frac{1}{2}\begin{bmatrix}1 & -\sin\left(\phi_i\right)\\
-\sin\left(\phi_i\right) & 1
\end{bmatrix}
\]
which depends only on $\phi_i$, and therefore, when presenting the population covariances of the time-series $\boldsymbol{s}_i[n]$ in $\mathbb{R}^3$, two coordinates are fixed and only one varies with $i$.

We generate two observable subsets, $\mathcal{X}^{(k)} \!=\! \big\{ \boldsymbol{x}_{i}^{(k)}[n]\big\} _{i=1}^{100}$, $k=1,2$ such that: $$\boldsymbol{x}_{i}^{(k)}\left[n\right]=\boldsymbol{M}^{(k)}\boldsymbol{s}_{i}\left[n\right], 
$$
where $\boldsymbol{M}^{(1)}$ is randomly chosen, and $\boldsymbol{M}^{(2)}\!=\!1.5\big(\begin{smallmatrix}-1 & 0\\
0 & 1
\end{smallmatrix}\big)\boldsymbol{M}^{(1)}$.
%(in our case $\boldsymbol{M}^{(1)}=\left[\begin{matrix}0.2762 & 0.4965\\ 0.6668 & 0.6266 \end{matrix}\right]$)
The two subsets  $\mathcal{X}^{(1)}$ and $\mathcal{X}^{(2)}$ can be viewed as two different observations of $\big\{ \boldsymbol{s}_{i}[n]\big\}$ through two unknown observation functions $\boldsymbol{M}^{(1)}$ and $\boldsymbol{M}^{(2)}$. For example, $\mathcal{X}^{(1)}$ and $\mathcal{X}^{(2)}$  can represent two different batches, and $\boldsymbol{M}^{(1)}$ and $\boldsymbol{M}^{(2)}$ can represent the discrepancy between two different sessions of a particular experiment.
For each $\boldsymbol{x}_{i}^{(k)}[n]$, we compute its sample covariance matrix by
\[
\boldsymbol{P}_{i}^{(k)}=\tfrac{1}{T}\sum_{n}\boldsymbol{x}_{i}^{(k)}\left[n\right]\big(\boldsymbol{x}_{i}^{\left(k\right)}\left[n\right]\big)^{T}=\boldsymbol{M}^{(k)}\boldsymbol{P}_{s_{i}}\big(\boldsymbol{M}^{(k)}\big)^{T},
\]where  $\boldsymbol{P}_{s_i}$ denotes the inaccessible sample covariance of $\boldsymbol{s}_i[n]$, which is given by:
\[
\boldsymbol{P}_{s_i}=\tfrac{1}{T}\sum_{n=0}^{T-1}\boldsymbol{s}_i\left[n\right]\left(\boldsymbol{s}_i\left[n\right]\right)^{T}.
\]
Our goal is to obtain a new representation of the observed data both in  $\mathcal{X}^{(1)}$ and $\mathcal{X}^{(2)}$, which circumvents the effect of  $\boldsymbol{M}^{(1)}$ and $\boldsymbol{M}^{(2)}$. 
Moreover, in the new representation, we aspire to associate two observations from possibly different subsets which have a similar initial phase $\phi_i$. 

%This model can be interpret as an extremely simplified model of EEG recordings where the random phase $\phi_i$ model the different thoughts and the matrices $\boldsymbol{M}^{(k)}$ model the batch effect (e.g. different people or different sessions).

In Figure \ref{fig:Toy-Example} we plot the $2\!\times\! 2$ \ac{SPD} matrices in $\mathbb{R}^3$, where the black points mark the boundaries of the cone manifold. The red line marks the center of the cone, given by $\alpha \boldsymbol{I}$  for $\alpha\!\in\![0,2]$, and the blue point on the red line indicates the identity matrix $\boldsymbol{I}$, namely, where $\alpha = 1$. 

Figure \ref{fig:Toy-Example}(a) presents the two subsets $\mathcal{P}^{(k)}\!=\!\big\{ \boldsymbol{P}_{i}^{(k)}\big\}$, $k\!=\!1,2$ of accessible sample covariance matrices, colored by $\phi_i$ (left) and by $k$ (right).
We observe that the two subsets $\mathcal{P}^{(1)}$ and $\mathcal{P}^{(2)}$ are completely separated, while each subset has a similar structure governed by the values of $\phi_i$.
%In this example our goal is to remove the batch effect (generated by the matrices  $\boldsymbol{M}^{(1)}$ and  $\boldsymbol{M}^{(2)}$).
We apply Steps (1)-(3) of Algorithm \ref{alg:TL_using_PT} to the subsets $\mathcal{P}^{(k)}$ to obtain $\big\{ \boldsymbol{\Gamma}_{i}^{(k)}\big\} $. 

Figure \ref{fig:Toy-Example}(b) presents $\big\{ \boldsymbol{\Gamma}_{i}^{(k)}\big\}$, colored by $\phi_i$ (left) and $k$ (right). Now we observe that in the new representation, the two subsets are aligned, namely the discrepancy caused by $\boldsymbol{M}^{(1)}$ and $\boldsymbol{M}^{(2)}$ is removed while the intrinsic structure given by $\phi_i$ is preserved. As a result, we can associate covariance matrices from different batches but with similar underlying $\phi_i$ values. Note that this was accomplished by Algorithm \ref{alg:TL_using_PT} in a completely unsupervised manner, without access to the hidden ``labels'' $\phi_i$.

\begin{center}
\begin{figure}[h]
\subfloat[]{\includegraphics[width=1\columnwidth]{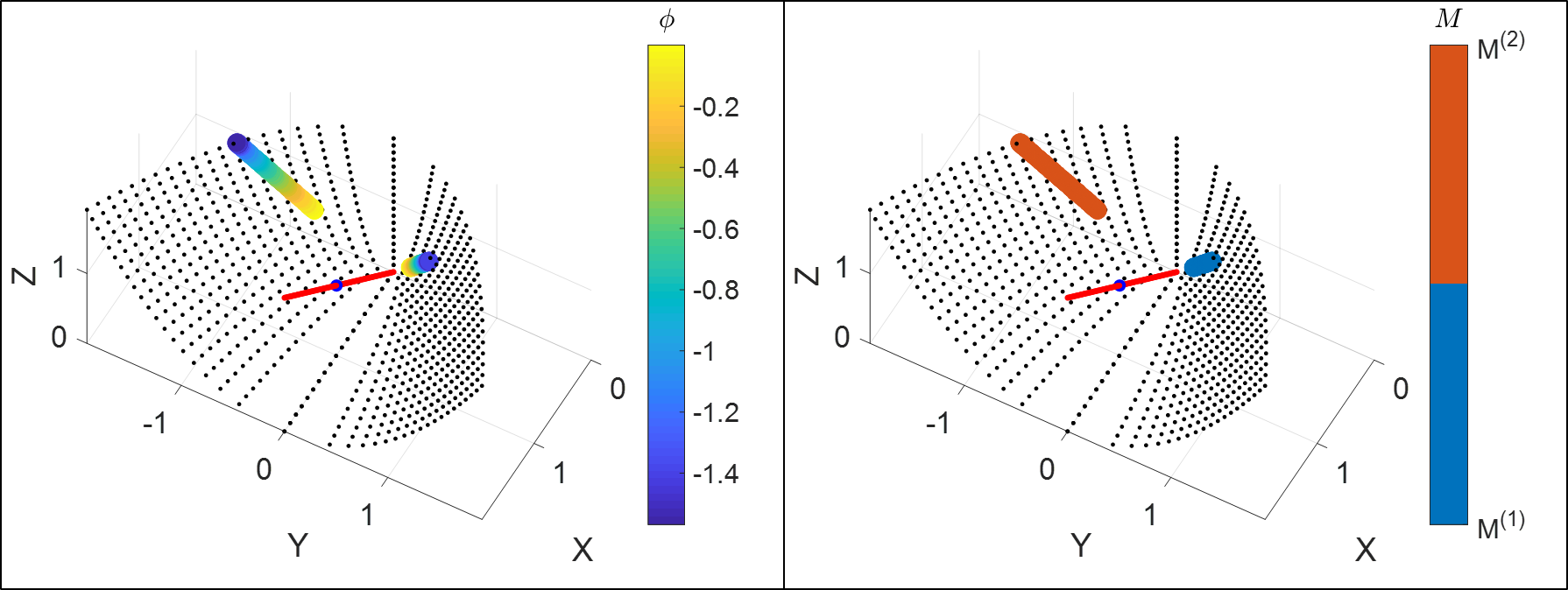}}

\subfloat[]{\includegraphics[width=1\columnwidth]{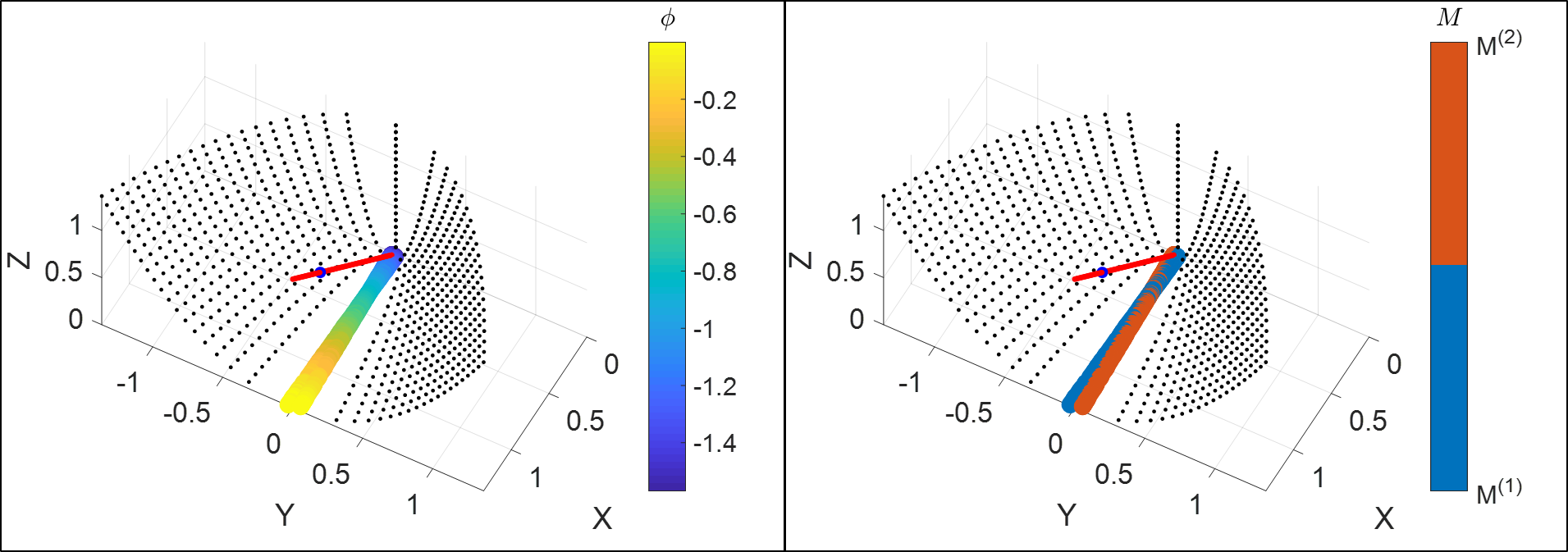}}

\caption{Synthetic example, applying Steps (1)-(3) of Algorithm \ref{alg:TL_using_PT}. (a) Scatter plot of $\mathcal{P}^{(k)}$ colored by $\phi_i$ (left) and by $k$ (right). (b) Scatter plot of $\big\{ {\boldsymbol{\Gamma}}_{i}^{(k)}\big\} $ obtained by Algorithm \ref{alg:TL_using_PT} colored by $\phi_i$ (left) and by $k$ (right).  Note that in the new representation (b), the two subsets are aligned, namely the discrepancy caused by $\boldsymbol{M}^{(1)}$ and $\boldsymbol{M}^{(2)}$ is removed while the intrinsic structure given by $\phi_i$ is preserved.}
\label{fig:Toy-Example}
\end{figure}
\end{center}

\subsection{BCI - Motor Imagery }
\label{sub:BCI}
As described in Section \ref{sec:DomainAdaptaion}, we use data from the BCI competition IV \cite{naeem2006seperability}.
The dataset contains EEG recordings acquired by $22$ EEG electrodes from 9 subjects, where the data from each subject was recorded on 2 different days of experiments. 
The experiment protocol consists of repeated trials, where in each trial the subject was asked to imagine performing one out of four possible movements: (i) right hand, (ii) left hand, (iii) both feet, and (iv) tongue. Overall, in a single day, each movement was repeated $72$ times by each subject, and therefore, the dataset contains $288$ trials from each subject in each day of experiments.

We remark that all the algorithms participating in the competition reported on poor classification results for particular four subjects. 
Since our goal is not to improve the classification of the data from each subject,  we excluded these four subjects (indexed 2,4,5,6).
%Importantly, we further note that all these algorithms analyze the data from each subject separately.

Initially, focusing on the data from a single subject, we show that Algorithm \ref{alg:TL_using_PT} builds a representation of the data which enables us to train a classifier with data from one day of experiments and apply it to data from the other day of experiments. 
Then, we further show that Algorithm \ref{alg:TL_using_PT} builds a representation that allows us to  train a classifier based on data from one subject and apply it to data from a different subject without any additional labeled trials. 
Finally, we extend the latter result and show the performance on multiple subjects.

\subsubsection{Single Subject -- Different Days}
In the first experiment, we process the recordings of Subject 8 (arbitrarily chosen) from the two days of experiments. We report that the results for the other 4 subjects were similar.
We denote the subsets of trial recordings from day $k=1,2$ by $\mathcal{X}^{(k)}\!=\!\big\{ \boldsymbol{x}_{i}^{(k)}\big\} _{i=1}^{288}$. From the recordings of each trial $i$, we compute the sample covariance matrix $\boldsymbol{P}_{i}^{\left(k\right)}\in\mathbb{R}^{22\times22}$, and denote $\mathcal{P}^{(k)}\!=\!\big\{\boldsymbol{P}_{i}^{(k)}\big\}$. 

To highlight the challenge, we first compute $\hat{\boldsymbol{P}}$, the Riemannian mean of all covariance matrices (from both subsets). Then, we project the matrices onto $\mathcal{T}_{\hat{\boldsymbol{P}}}\mathcal{M}$ by computing $\boldsymbol{B}_{i}^{(k)}\!=\!\text{Log}_{\hat{\boldsymbol{P}}}\big(\boldsymbol{P}_{i}^{(k)}\big)$.  For visualization purpose, we apply tSNE \cite{maaten2008visualizing} to the vectors $\big\{ \boldsymbol{B}_{i}^{(k)}\big\} $. 
Figure \ref{fig:Subject8SingleDay} (a) presents the two dimensional representation of the vectors obtained by the tSNE algorithm. Namely, each point in the figure is the representation of a vector $\boldsymbol{B}_i^{(k)}$. On the left, the points are colored according to the different days (indexed by $k=1,2$), and on the right, the points are colored according to the mental task. We observe that, similarly to the toy problem, the recordings from the different days are completely separated. 
This implies that one cannot train a classifier from the recordings from day 1 and apply it to the recordings from day 2. 

We apply Algorithm \ref{alg:TL_using_PT} to the subsets $\mathcal{P}^{(k)}$ covariance matrices and obtain the subsets $\tilde{\mathcal{S}}^{(k)}\!=\!\big\{ \tilde{\boldsymbol{S}}_{i}^{(k)}\big\} $. Figure\ref{fig:Subject8SingleDay} (b) presents the two dimensional representation of the vectors $\tilde{\mathcal{S}}^{(k)} $ obtained by the tSNE algorithm. On the left, the points are colored according to the different days, and on the right, the points are colored according to the mental task. We observe that the difference between the different days of experiments is completely removed.
More importantly, we further observe that the new representations of the two subsets are aligned, i.e., we obtain similar representations of two recordings associated with the same mental task, regardless of their respective sessions (days of experiments).

\begin{center}
\begin{figure}[h]
\subfloat[]{\includegraphics[width=1\columnwidth]{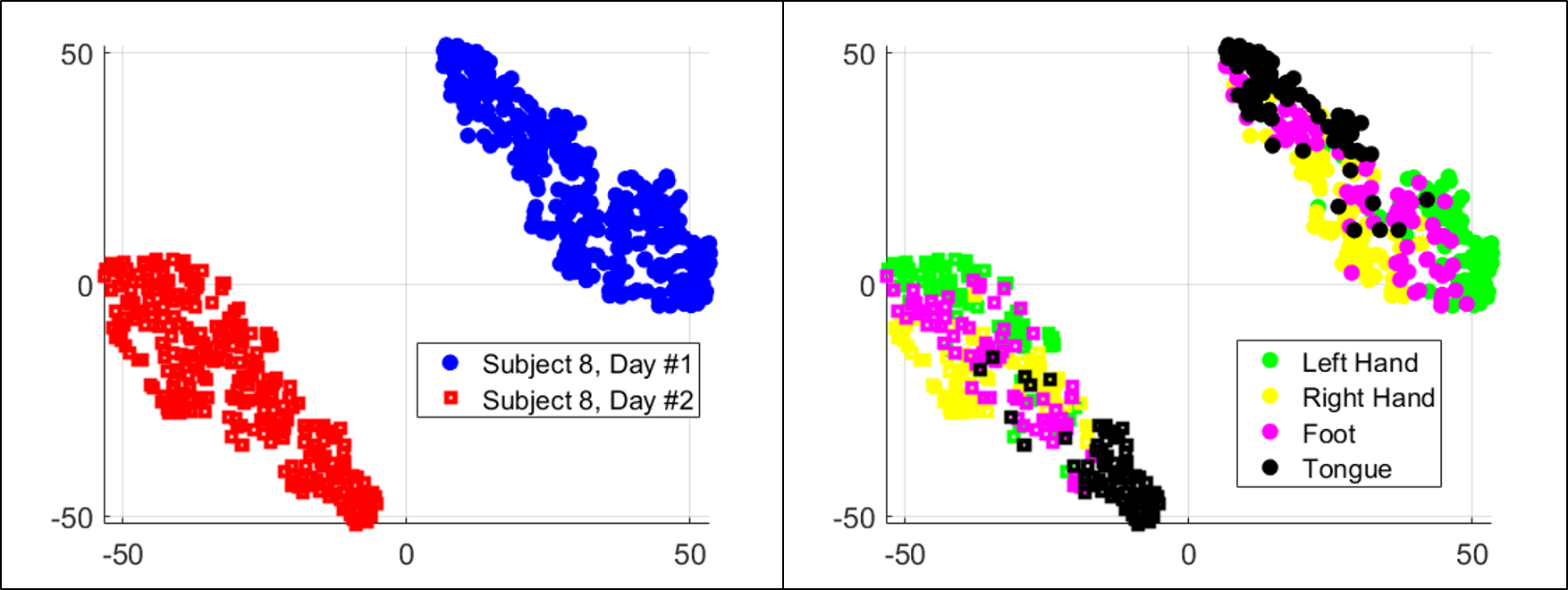}}\\
\subfloat[]{\includegraphics[width=1\columnwidth]{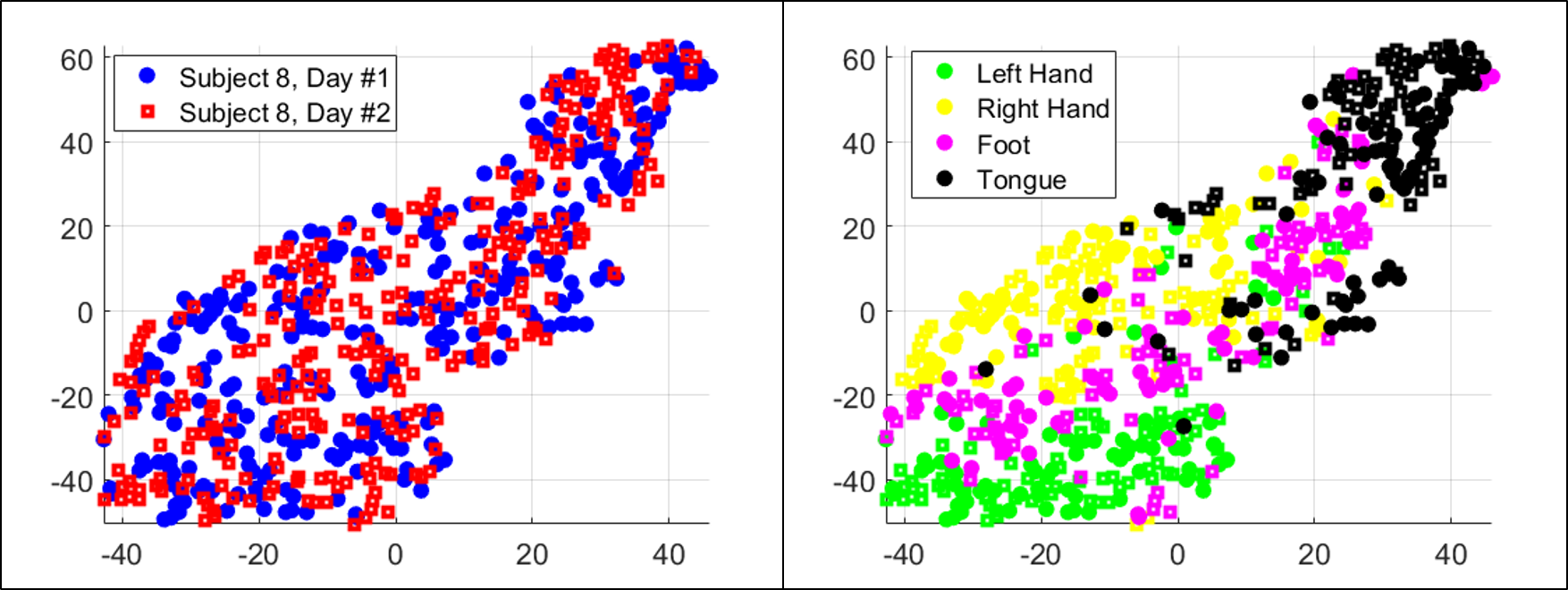}}

\caption{Representation of a single subject's (\#8) recordings from different days. (a) Scatter plot of the ``baseline''  $\big\{ \boldsymbol{B}_{i}^{(k)}\big\} $ colored by the different day (left) and by the mental task (right). (b) Scatter plot of $\big\{ \tilde{\boldsymbol{S}}_{i}^{(k)}\big\} $ obtained by Algorithm \ref{alg:TL_using_PT} colored by the different day (left) and by the mental task (right). Note the difference between the different days of experiments is completely removed. More importantly, we further observe that the new representations of the two subsets are aligned, i.e., we obtain similar representations of two recordings associated with the same mental task, regardless of their respective sessions (days of experiments).}
\label{fig:Subject8SingleDay}
\end{figure}
\end{center}

% \begin{figure}[tbh]
% \label{fig:Subject8SingleDay}
% \includegraphics[scale=.5]{Figs/Subject8}\caption{Subject 8}
% \end{figure}

\subsubsection{Two Subjects}
\label{sub:TwoSubjets}
We repeat the evaluation, but now with the two subsets  $\mathcal{X}^{(k)}$, $k=1,2$ which were recorded from two subjects, specifically, Subject $3$ and Subject $8$. We repeat the steps from the previous examination. We compute $\hat{\boldsymbol{P}}$, the Riemannian mean of all covariance matrices (from both subsets). Then, we project the covariance matrices onto $\mathcal{T}_{\hat{\boldsymbol{P}}}\mathcal{M}$ and apply tSNE to obtain two dimensional representations. Figure \ref{fig:TwoSubjects3and8} (a) presents the two dimensional representations obtained by the tSNE algorithm. On the left, the points are colored by the subject index, and on the right the points are colored according to the mental task. Similarly to the single-subject two-sessions case, we observe that the recordings from different subjects are completely separated in the obtained representation. 

We also apply the mean transport approach presented in \cite{barachant2013classification}. The mean transport is obtained by projecting each subset of covariance matrices $\mathcal{P}^{(k)}$ to its own tangent plane $\mathcal{T}_{\overline{\boldsymbol{P}}^{(k)}}\mathcal{M}$, where $\overline{\boldsymbol{P}}^{(k)}$ is the Riemannian mean of the $k$-th subset. In other words, we compute $\boldsymbol{S}_{i}^{(k)}\!=\!\text{Log}_{\overline{\boldsymbol{P}}^{(k)}}\big(\boldsymbol{P}_{i}^{(k)}\big)$. Figure \ref{fig:TwoSubjects3and8} (b) presents the two dimensional representation obtained by the tSNE algorithm.  We observe that indeed the two subsets are not separated as in Figure \ref{fig:TwoSubjects3and8}(a), however, the inner structure of each subset was not preserved. Thus, this scheme is insufficient and does not support training a classifier based on data from one subject and applying it to data from another subject. 

Finally, we apply Algorithm \ref{alg:TL_using_PT} to the subsets $\mathcal{P}^{(k)}$, and obtain the subsets $\tilde{\mathcal{S}}^{(k)}\!=\!\big\{ \tilde{\boldsymbol{S}}_{i}^{(k)}\big\} $. Figure \ref{fig:TwoSubjects3and8}(c) presents the two dimensional representation of the subsets $\tilde{\mathcal{S}}^{(k)}$ obtained by the tSNE algorithm.  We observe that in this representation, the subsets are not separated. Moreover, the two subsets are aligned according to the mental tasks, and indeed points that correspond to the same mental task assumed a similar value in the new representation. This new representation allows us to train a classifier using recordings from one subject and apply it to recordings from another subject.

\begin{figure}
\subfloat[]{\includegraphics[width=1\columnwidth]{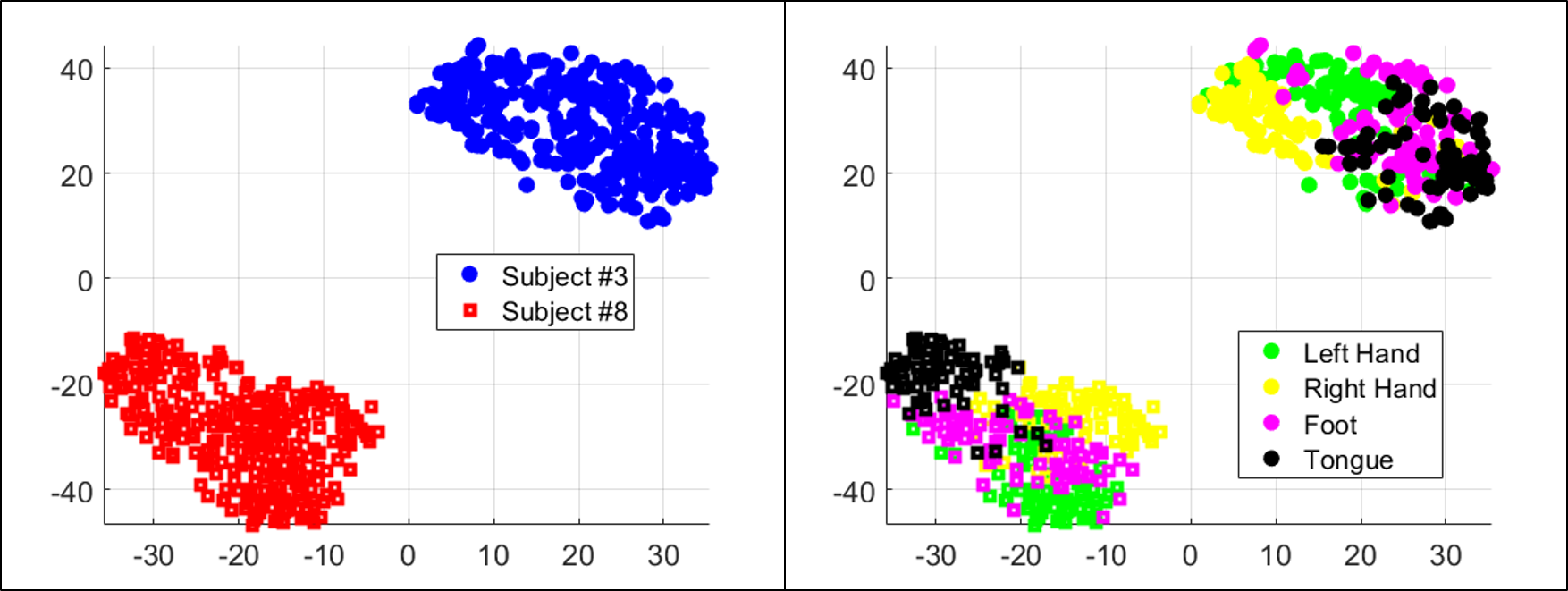}}\\
\subfloat[]{\includegraphics[width=1\columnwidth]{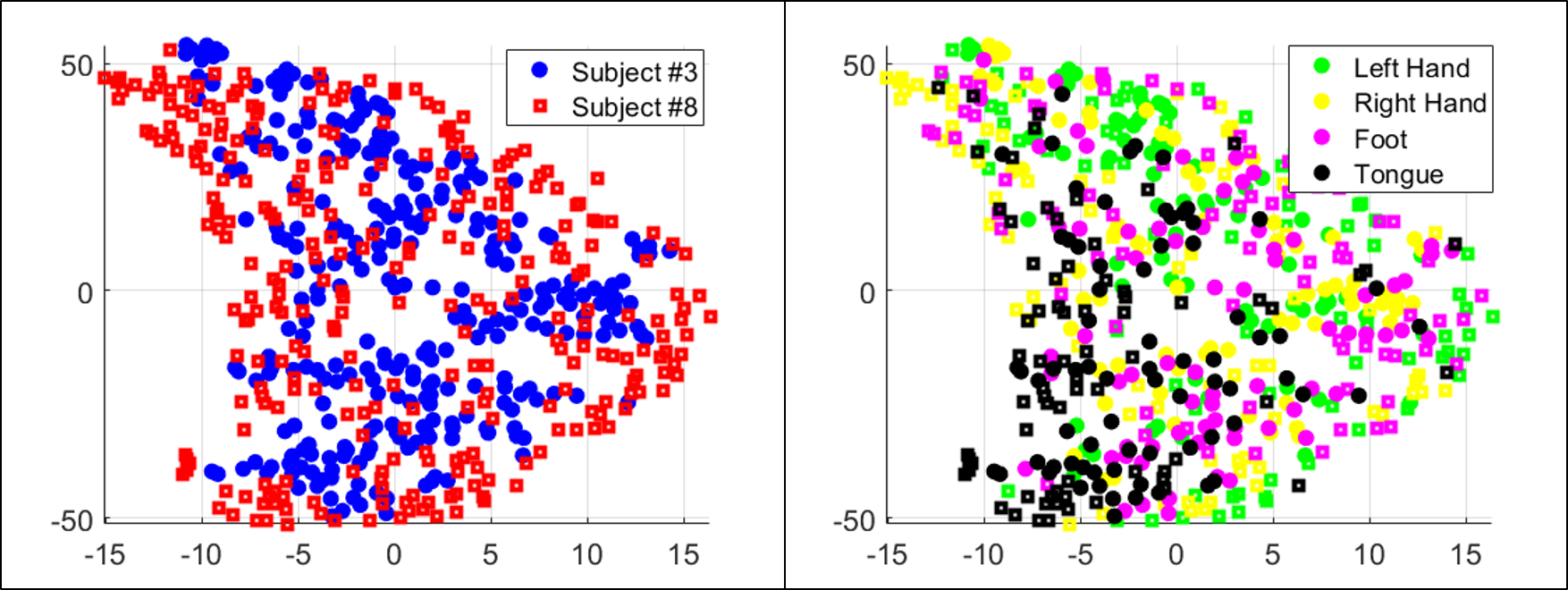}}\\
\subfloat[]{\includegraphics[width=1\columnwidth]{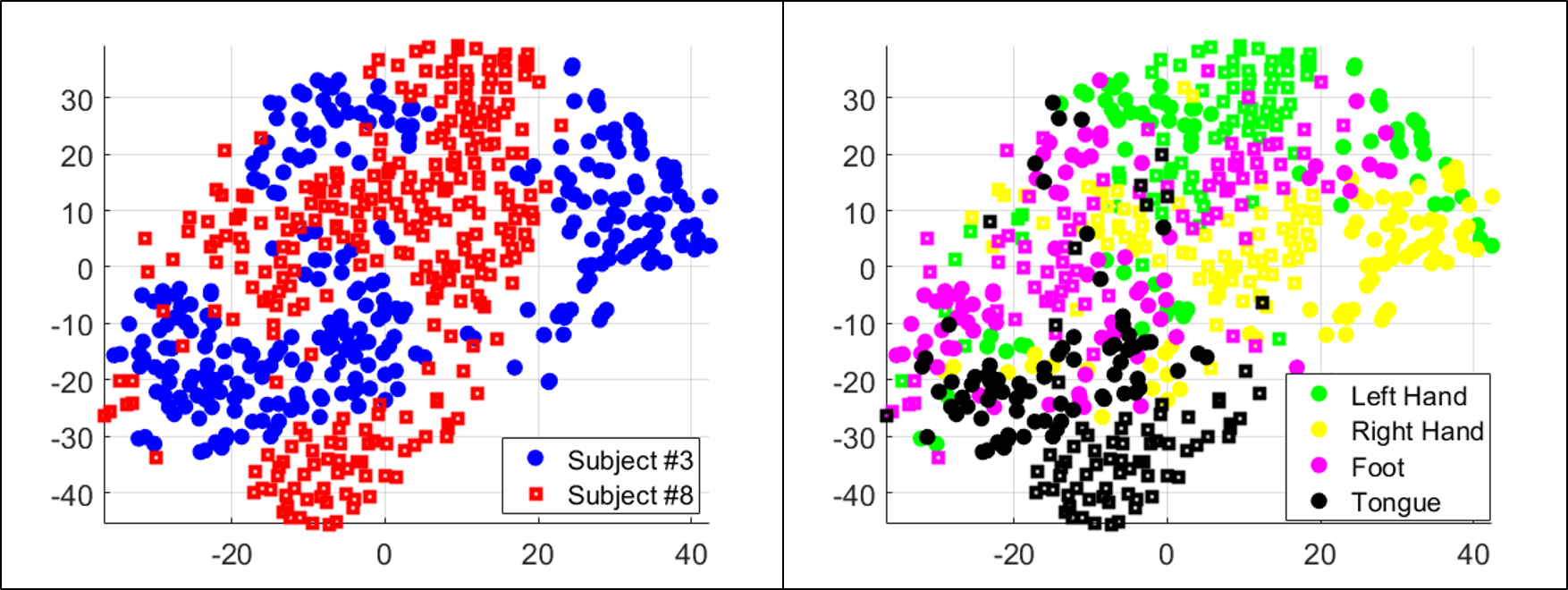}}

\caption{Representation of recordings from two subjects (\#3 and \#8). (a) Scatter plot of the ``baseline'' $\big\{ \boldsymbol{B}_i^{(k)}\big\} $ colored by the subject (left) and by the mental task (right). (b) Scatter plot of the representation of $\big\{ \boldsymbol{S}_{i}^{(k)}\big\}$ obtained by the ``mean transport'', colored by the subject (left) and by the mental task (right). (c) Scatter plot of $\big\{ \tilde{\boldsymbol{S}}_{i}^{(k)}\big\} $ obtained by Algorithm \ref{alg:TL_using_PT} colored by the subject (left) and by the mental task (right). See the text for details.}
\label{fig:TwoSubjects3and8}
\end{figure}

\subsubsection{Multiple Subjects}
\label{sub:MultipleSubjects}
In the third experiment, we apply Algorithm \ref{alg:TL_using_PT} to multiple subjects. We processed data from five subjects (indexed 1,3,7,8 and 9) and from all trials corresponding to only three of the mental tasks: left hand, foot, and tongue. Namely, we omit one class for visualization purposes. We denote the subsets of the recordings by $\mathcal{X}^{(k)}\!=\!\big\{ \boldsymbol{x}_{i}^{(k)}\big\}$ for $k=1,3,7,8,9$. As before, we compute the covariance matrices $\mathcal{P}^{(k)}\!=\!\big\{ \boldsymbol{P}_{i}^{(k)}\big\}$.  We compute $\hat{\boldsymbol{P}}$, the Riemannian mean of all covariance matrices. Then, we project the covariance matrices onto $\mathcal{T}_{\hat{\boldsymbol{P}}}\mathcal{M}$ and apply tSNE to obtain a two dimensional representation. Figure \ref{fig:MultipleSubjects} (a) presents the two dimensional representation obtained by the tSNE algorithm. On the left, the points are colored by the subject index, and on the right the points are colored according to the mental task. As before, we observe that the recordings from different subjects are completely separated. 

Next, we apply Algorithm \ref{alg:TL_using_PT} to the five subsets of covariance matrices $\mathcal{P}^{(k)}$ and obtain five subsets of new representations $\tilde{\mathcal{S}}^{(k)}\!=\!\big\{ \tilde{\boldsymbol{S}}_{i}^{(k)}\big\}$. Figure \ref{fig:TwoSubjects3and8} (b) presents the two dimensional representation of the subsets $\tilde{\mathcal{S}}^{(k)}$ obtained by the tSNE algorithm. We observe that also in the multiple subjects scenario, Algorithm \ref{alg:TL_using_PT} was able to center and align the subsets. 

\begin{figure}[tbh]
\subfloat[]{\includegraphics[width=1\columnwidth]{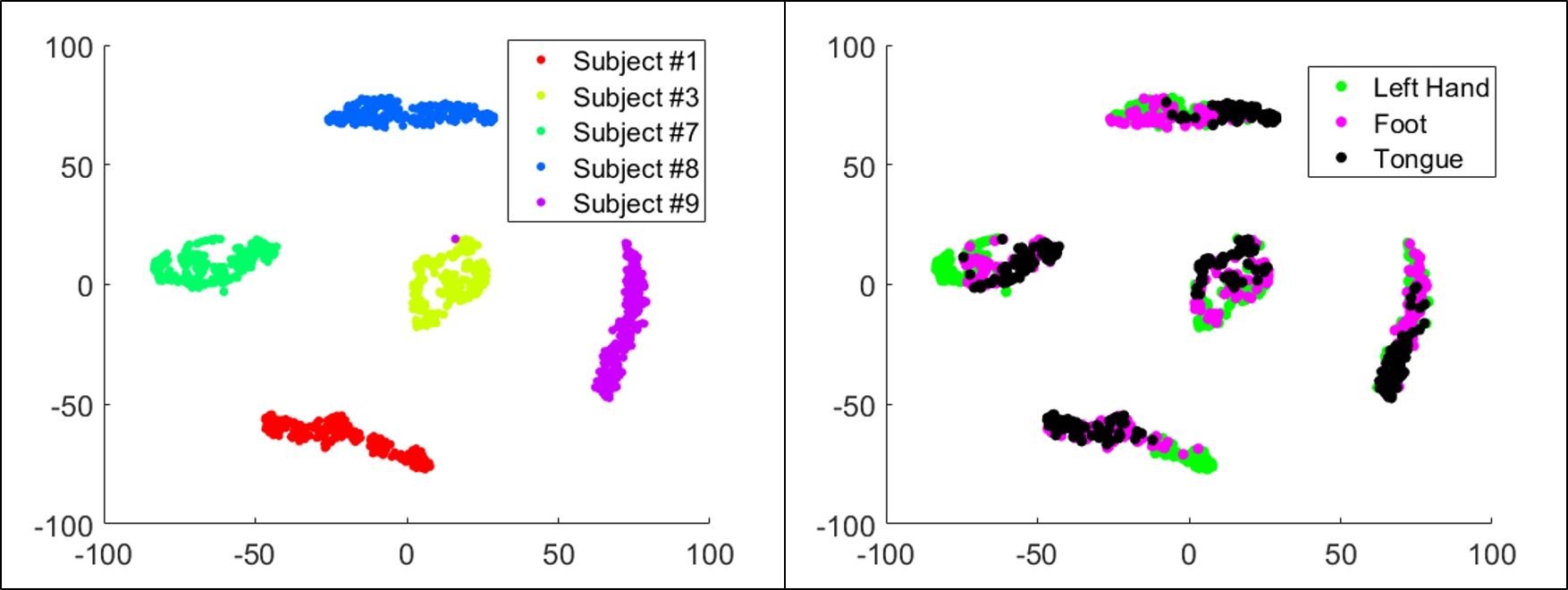}}\\
\subfloat[]{\includegraphics[width=1\columnwidth]{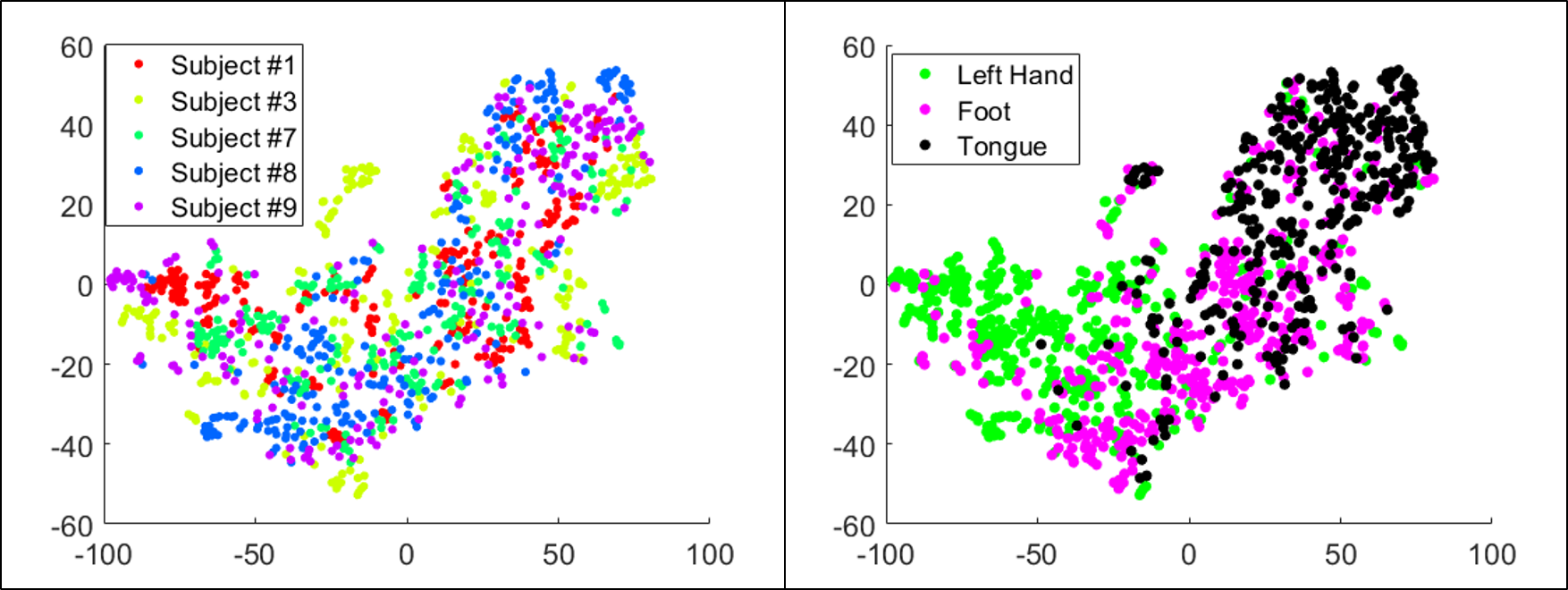}}

\caption{Representation of recordings from multiple subjects. (a) Scatter plot of the ``baseline'' $\big\{ \boldsymbol{B}_{i}^{(k)}\big\} $ colored by the subject (left) and by the mental task (right). (b) Scatter plot of $\big\{ \tilde{\boldsymbol{S}}_{i}^{(k)}\big\} $ obtain by Algorithm \ref{alg:TL_using_PT} colored by the subject (left) and by the mental task (right). See the text for details.}
\label{fig:MultipleSubjects}
\end{figure}

To provide quantitative results, we apply a leave-one-subject-out cross validation, namely, we trained a classifier based on 4 out of the 5 subjects and evaluated the classification accuracy for each one of the three methods mentioned in Section \ref{sub:TwoSubjets}. We compare the classification accuracy of  Algorithm \ref{alg:TL_using_PT}  to the two other approaches denoted by: (i) ``Baseline (No Transport)'', and (ii) the ``Mean Transport'' approach proposed in \cite{barachant2013classification}.

\begin{figure}[b!]
\includegraphics[width=1\columnwidth]{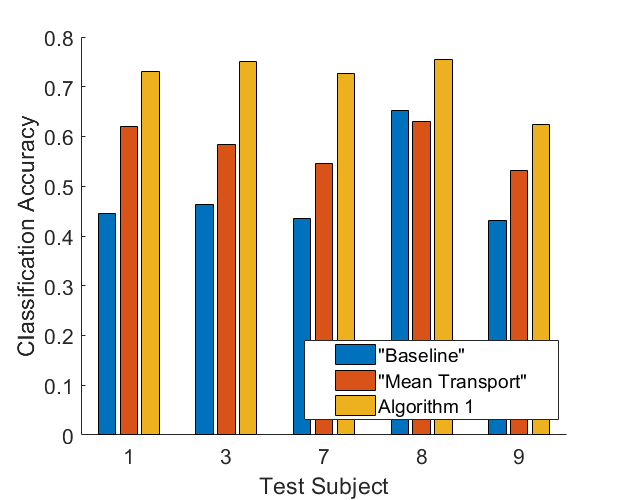}

\caption{The classification accuracy obtained by the three competing methods.}
\label{fig:Bars}
\end{figure}

\begin{figure*}[t!]
\includegraphics[width=0.33\linewidth]{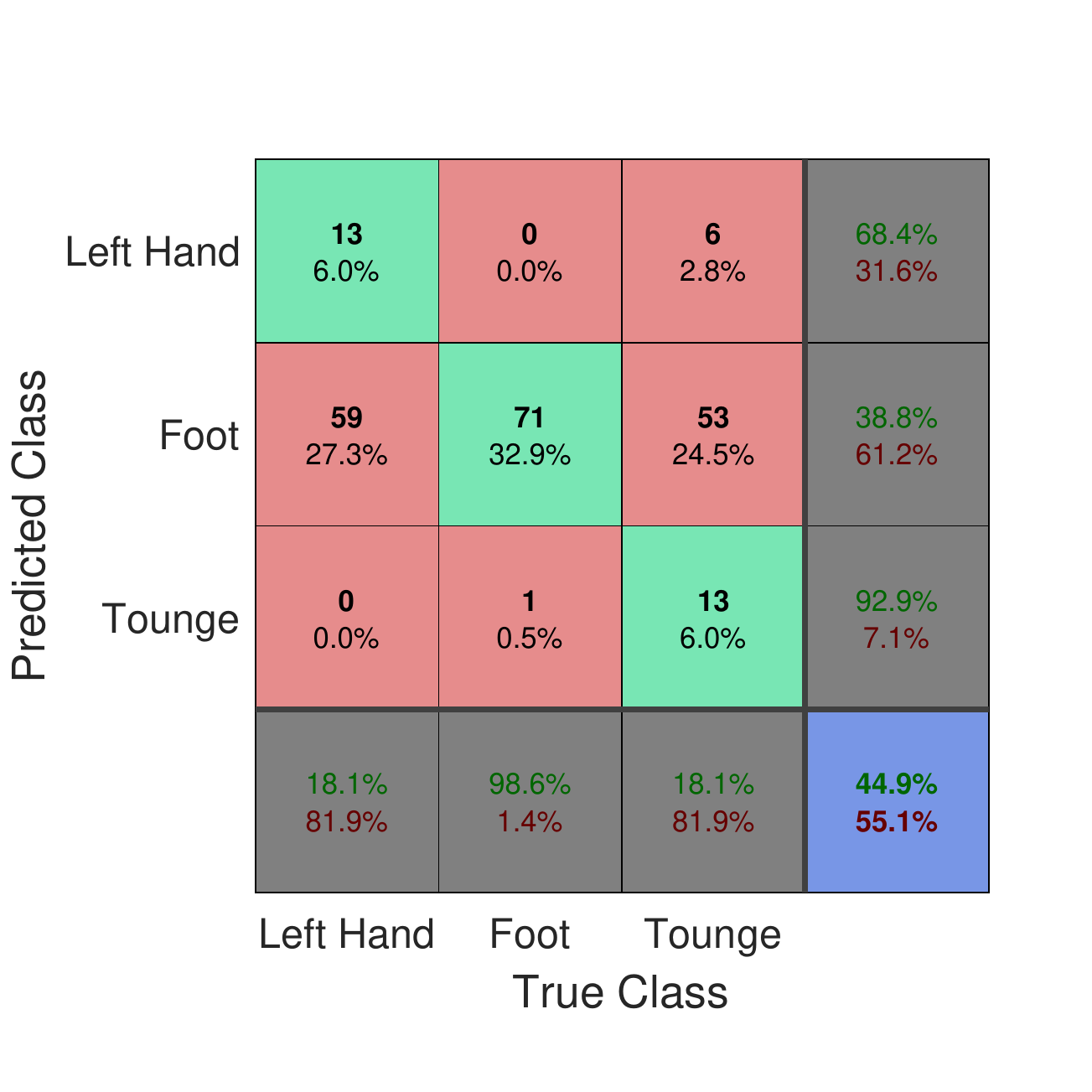}
\includegraphics[width=0.33\linewidth]{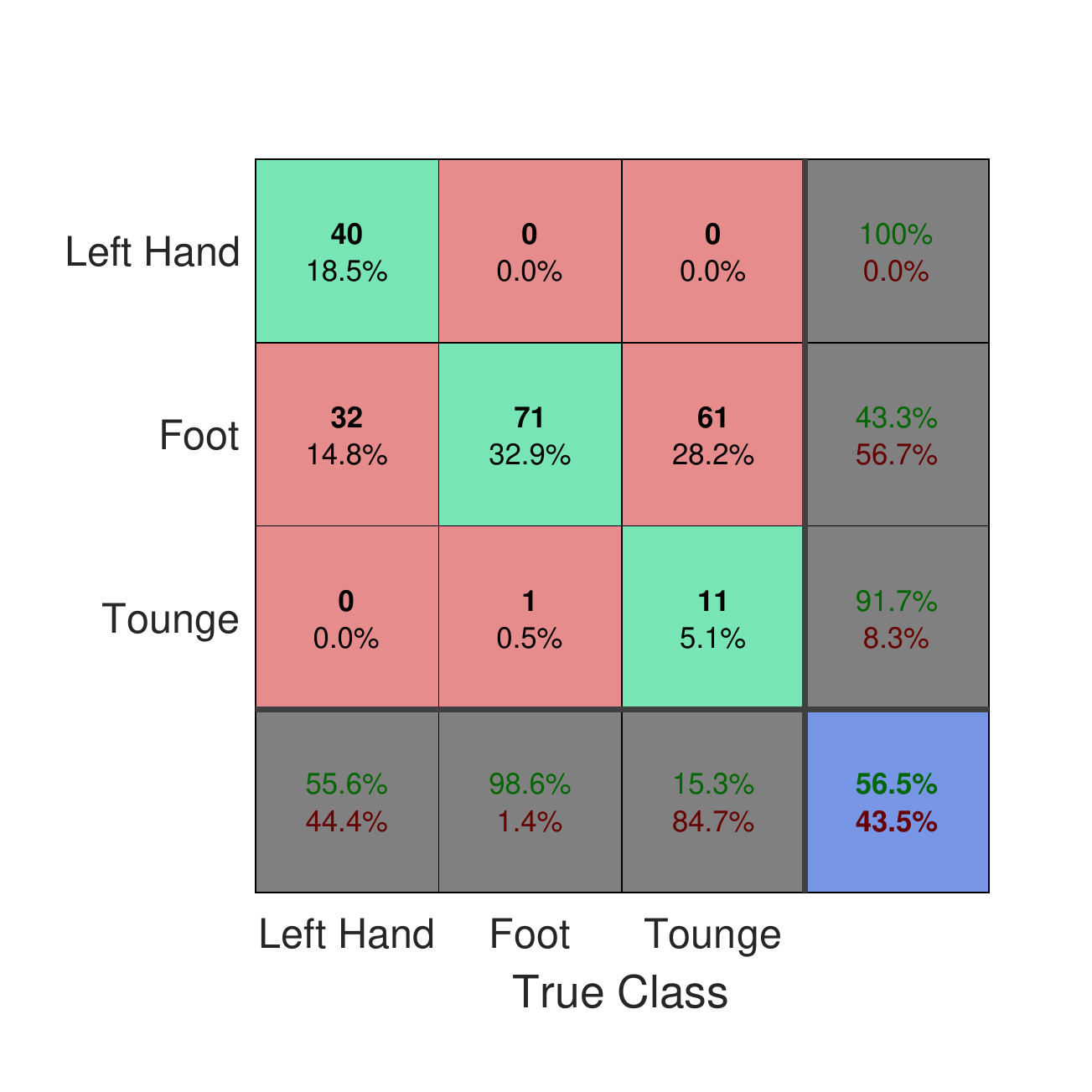}
\includegraphics[width=0.33\linewidth]{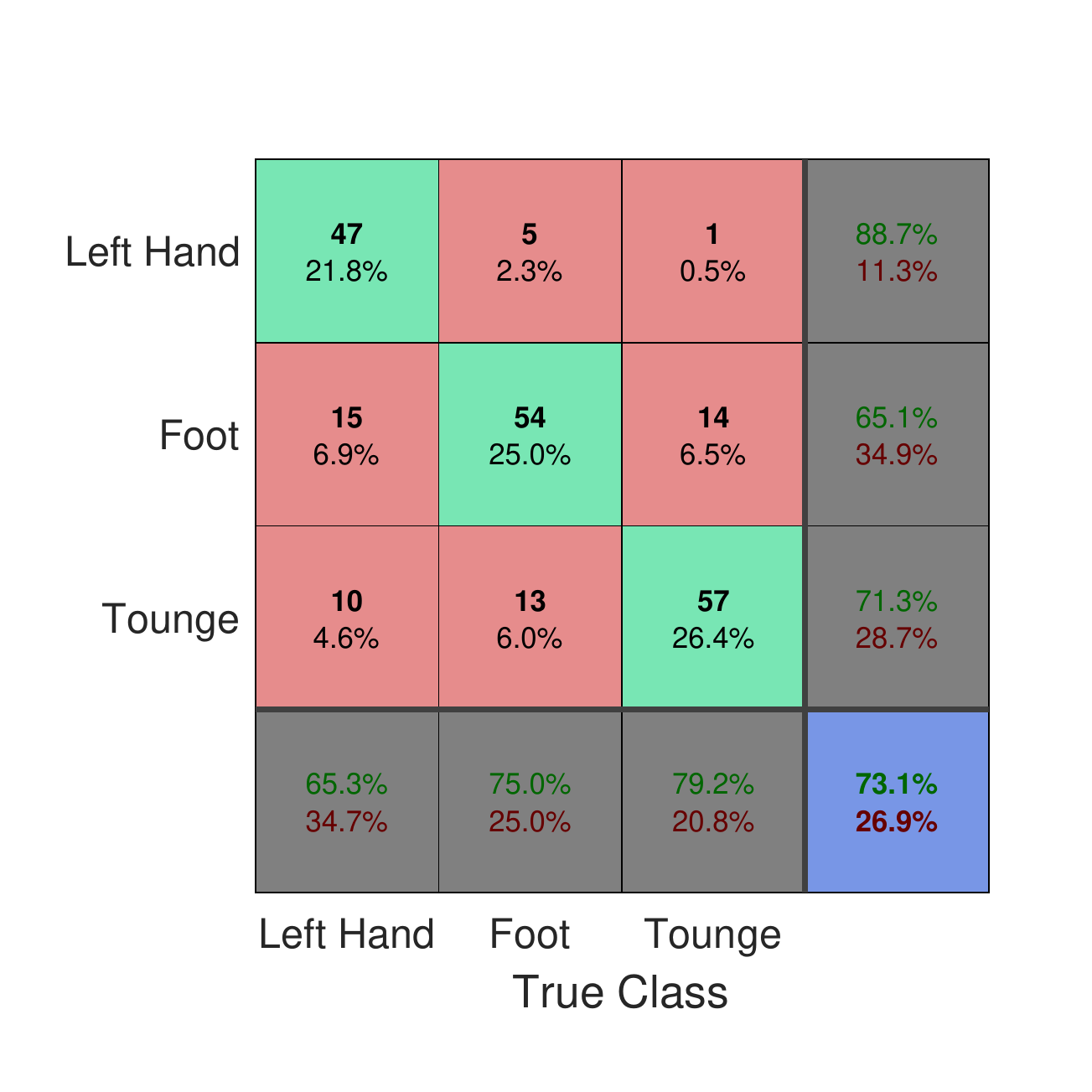}
\caption{The confusion matrices of the BCI task classification of data from Subject 3. (left) Baseline. (center) Mean Transport,  (right) Algorithm \ref{alg:TL_using_PT}.} 
\label{fig:Confusion}
\end{figure*}

Figure \ref{fig:Bars} presents the classification accuracy obtained using the three competing methods as a function of the evaluated subject. In all cases, applying Algorithm \ref{alg:TL_using_PT} to the  data dramatically improved the classification accuracy.
In addition, Figure \ref{fig:Confusion} presents the confusion matrices when Subject 3 was tested. Figure \ref{fig:Confusion} (left) presents the confusion matrix obtained from the data without applying any transportation (``Baseline''). Figure \ref{fig:Confusion} (center) presents the confusion matrix obtained by the ``Mean Transport'' approach. Figure \ref{fig:Confusion} (right) presents the confusion matrix obtained by Algorithm \ref{alg:TL_using_PT}. 
We observe that Algorithm \ref{alg:TL_using_PT} obtains significantly better classification results compared with the ``Baseline'' and ``Mean Transport'' algorithms. In addition, the confusion matrices highlight the challenge in training a classifier from multiple subject data. For example, in both confusion matrices in Figure \ref{fig:Confusion} (left) and (center), the ``foot'' mental task falsely dominated the prediction (it was predicted 183 and 164 times, respectively, whereas is was performed only 72 times). This implies that the decision regions of the classifiers are completely misaligned with the data from a new unseen subject.

% \begin{figure*}[tbh]

% \begin{center}
% \includegraphics[width=0.95\linewidth]{}
% \caption{BCI data - Subject \#3 confusion matrices}
% \end{center}
% \end{figure*}

\subsection{Sleep Stage Identification}
Here, we demonstrate the applicability of Algorithm \ref{alg:TL_using_PT} to real medical signals. Specifically, we address the problem of sleep stage identification. Typically, for this purpose, data are collected in sleep clinics with multiple multimodal sensors, and then, analyzed by a human expert. There are six different sleep stages: awake, REM, and sleep stages 1-4, indicating shallow to deep sleep.

The data we used is available online in \cite{PhysioNet} and described in detail in \cite{kemp2000analysis}. A single patient's night recording contains several measurements including two EEG channels and one electrooculography (EOG) channel sampled at 100[Hz].
We used recordings from three subjects. We split each subject's night into non-overlapping $30$ seconds windows. We omit the awake and sleep stage 4 windows due to too few occurrences. For visual purposes, we kept only windows corresponding to REM and stage 3.

We denote the $i$-th window of the $k$-th subject by $\boldsymbol{x}_{i}^{(k)}(t)$ with its corresponding covariance matrix $\boldsymbol{P}_i^{(k)}\!\in\! \mathbb{R}^{3\times 3}$. 
We first compute $\hat{\boldsymbol{P}}$, the Riemannian mean of all covariance matrices (from the three subsets). Then, we project the matrices onto $\mathcal{T}_{\hat{\boldsymbol{P}}}\mathcal{M}$ by computing $\boldsymbol{S}_{i}^{(k)}\!=\!\text{Log}_{\hat{\boldsymbol{P}}}\big(\boldsymbol{P}_{i}^{(k)}\big)$. For visualization purposes, we apply PCA to the vectors $\big\{ \boldsymbol{S}_{i}^{(k)}\big\}$ and present the first two principle components\footnotemark.
\footnotetext{Since the covariance matrices in this experiment are of size $3 \times 3$, dimension reduction using PCA was sufficient. It was preferred here over tSNE since it preserves the global geometry of the data.}

\begin{figure}[b!]
\subfloat[]{\includegraphics[width=1\columnwidth]{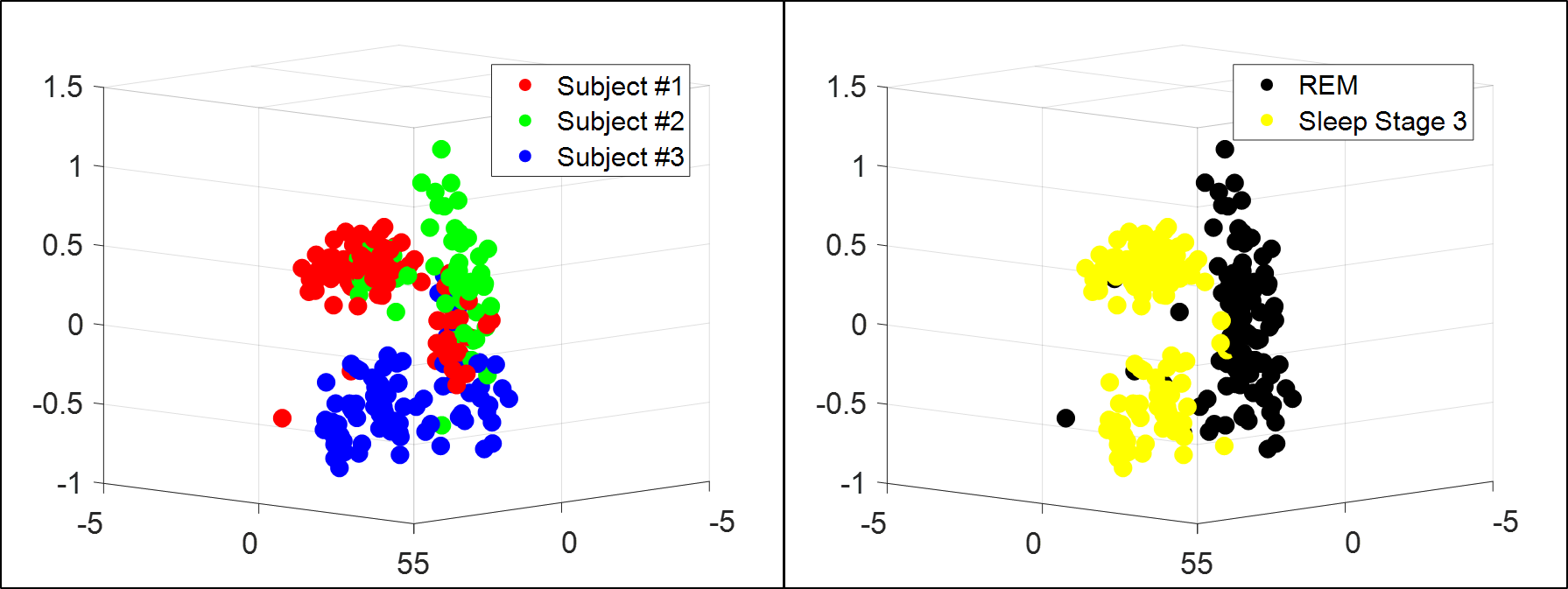}}\\
\subfloat[]{\includegraphics[width=1\columnwidth]{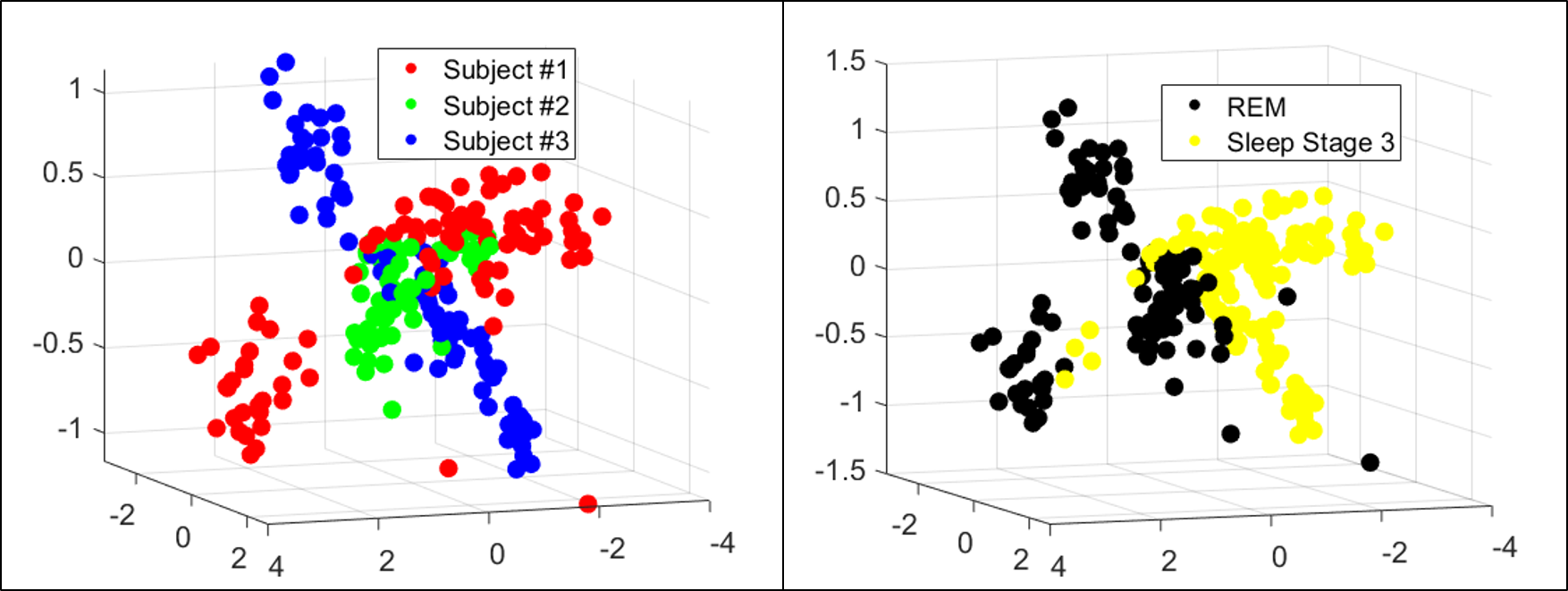}}\\
\subfloat[]{\includegraphics[width=1\columnwidth]{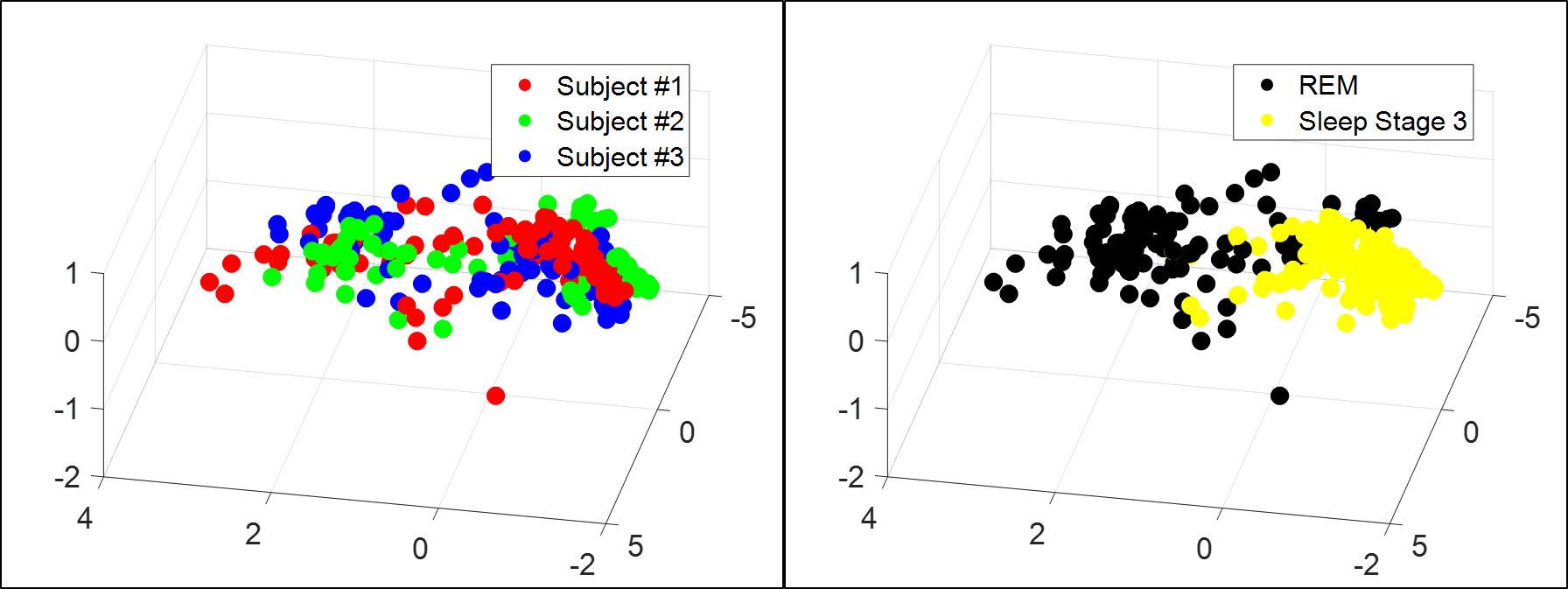}}
\caption{Representation of recordings for sleep stage identification. (a) Scatter plot of the "baseline" $\big\{ \boldsymbol{B}_{i}^{(k)}\big\} $ (after PCA) colored by subject (left) and by sleep stage (right). (b) Scatter plot of $\big\{ \boldsymbol{S}_{i}^{(k)}\big\} $ (after PCA) obtained by the  "mean transport", colored by subject (left) and by sleep stage (right). (c) Scatter plot of $\big\{ \tilde{\boldsymbol{S}}_{i}^{(k)}\big\} $ (after PCA) obtained by Algorithm \ref{alg:TL_using_PT} colored by subject (left) and by sleep stage (right).}
\label{fig:Sleep}
\end{figure}

\begin{figure*}[t!]
\begin{center}
\includegraphics[width=0.324\linewidth]{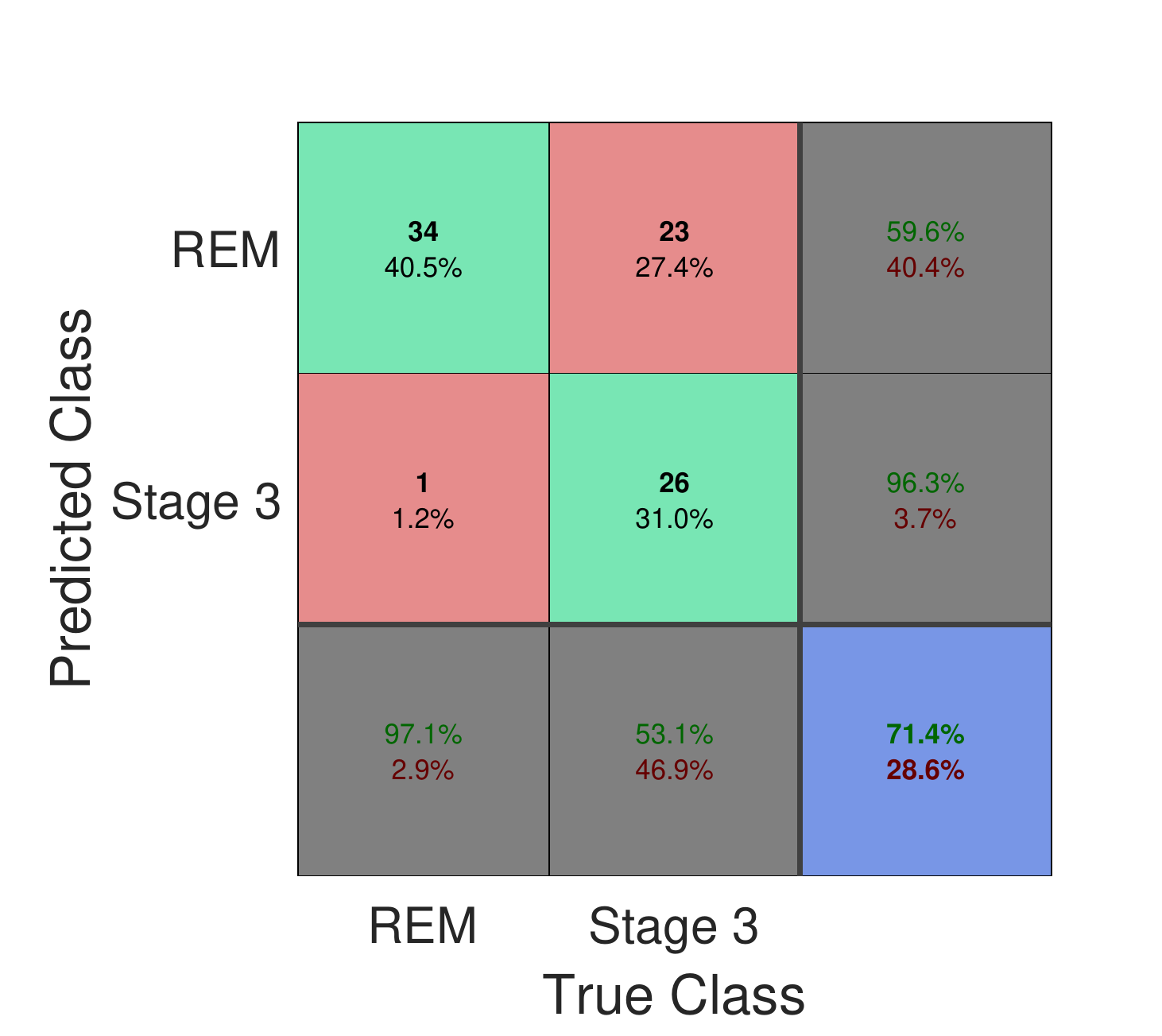}
\includegraphics[width=0.29\linewidth]{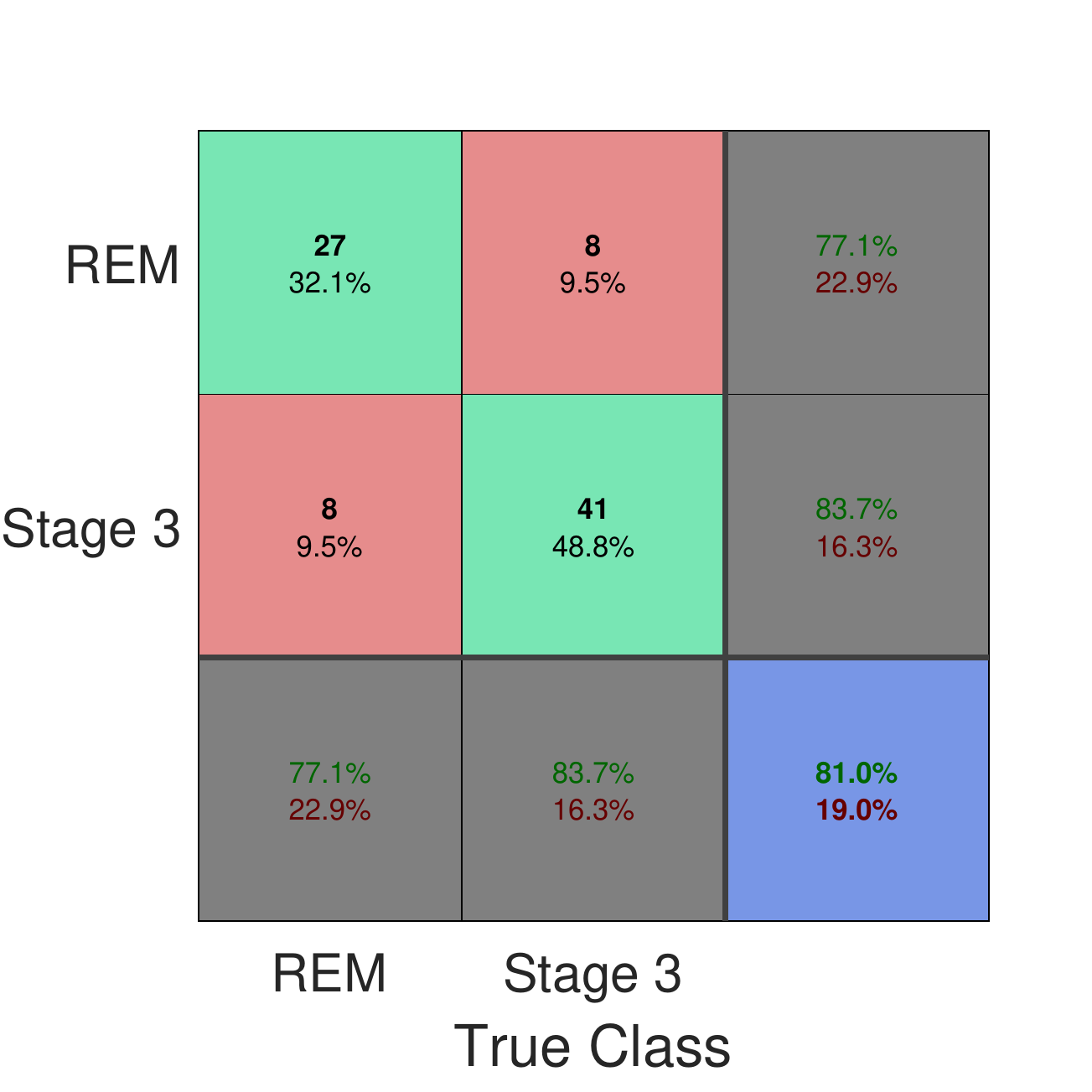}
\includegraphics[width=0.29\linewidth]{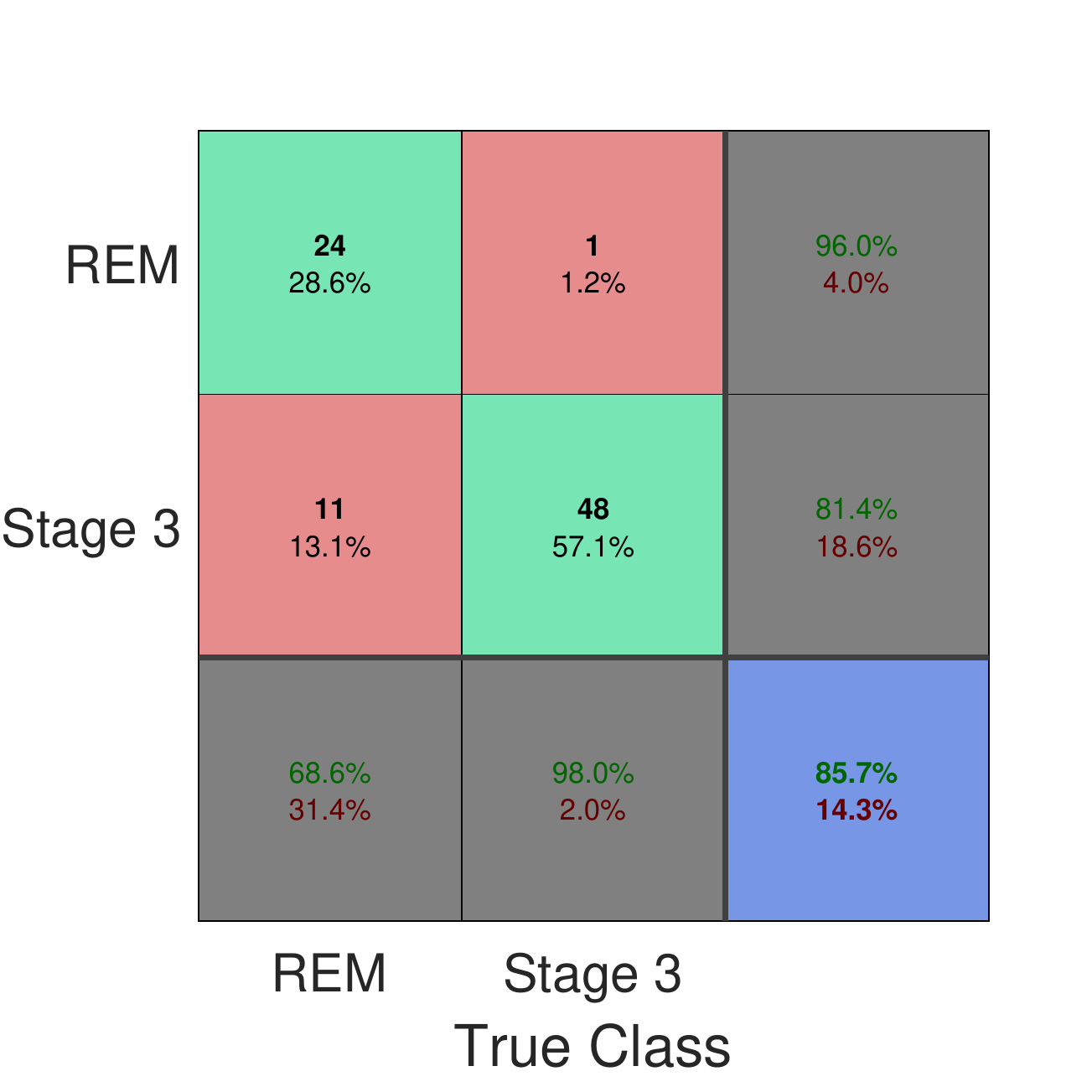}
\caption{Confusion matrices of sleep stage identification based on recordings from Subject 3. (left) Baseline, (center) Mean Transport, (right) Algorithm 1. }
\label{fig:SleepCM}
\end{center}
\end{figure*}

Figure \ref{fig:Sleep} (a) presents the two dimensional representation of the vectors obtained by PCA. Namely, each point in the figure is the representation of a vector $\boldsymbol{S}_i^{(k)}$. On the left, the points are colored according to the different subjects, and on the right, the points are colored according to the sleep stage. We observe that the points are clustered according to the different subjects.
We apply Algorithm \ref{alg:TL_using_PT} to the three subsets $\big\{ \boldsymbol{P}_{i}^{(k)}\big\} $ of covariance matrices and obtain the subsets $\big\{ \tilde{\boldsymbol{S}}_{i}^{(k)}\big\}$. Figure \ref{fig:Sleep} (b) presents the two dimensional representation of the vectors $\big\{ \tilde{\boldsymbol{S}}_{i}^{k)}\big\} $ obtained by PCA. On the left, the points are colored according to the different subjects, and on the right, the points are colored according to the sleep stage. Now we observe that the data is clustered according to the sleep stage while the difference between the three subjects is completely removed.

As in the Subsection \ref{sub:MultipleSubjects}, to provide quantitative results, we train a classifier based on Subject 1 and Subject 2 and evaluate the classification accuracy on Subject 3. 
Figure \ref{fig:SleepCM} presents the obtained confusion matrices. Figure \ref{fig:SleepCM} (left) presents the confusion matrix obtained from the data without applying any adaptation (``Baseline''). Figure \ref{fig:SleepCM} (center) presents the confusion matrix obtained by the ``Mean Transport'' approach. Figure \ref{fig:SleepCM} (right) presents the confusion matrix obtained by Algorithm \ref{alg:TL_using_PT}. 
We observe that using Algorithm \ref{alg:TL_using_PT} demonstrates better classification results compared with the ``Baseline'' and the ``Mean Transport'' algorithms. 

% Figure \ref{fig:SleepBar} presents the classification accuracy obtained using the three competing methods as a function of the evaluated subject. In all cases, applying Algorithm \ref{alg:TL_using_PT} to  the data improved the classification accuracy.
% We observe that using Algorithm \ref{alg:TL_using_PT} demonstrates better classification results compared with the ``Baseline'' and ``Mean Transport'' algorithms.

% \begin{figure*}[tbh]
% \begin{center}
% \includegraphics[width=0.95\linewidth]{}
% \caption{Sleep data - Subject \#3 confusion matrices}
% \end{center}
% \end{figure*}

\section{Conclusions}
\label{sec:Conclusions}
Analyzing complex data in high-dimension is challenging, since such data do not live in a Euclidean space. Therefore, basic operations such as comparisons, additions, and subtractions, which are the basis of any analysis and learning technique, do not necessarily exist and are not appropriately  defined. 
In this work, we propose to view the complex data through the lens of \ac{SPD} matrices, which reside on an analytic Riemannian manifold. 
Using the Riemannian geometry of \ac{SPD} matrices, we presented an approach for multi-domain data representation. Based on this new representation, we proposed an algorithm for domain adaptation. 
We extend the existing results in the Riemannian geomery of \ac{SPD} matrices and establish a framework for the justification and analysis of the proposed solution.
We demonstrated the usefulness of the presented domain adaptation method in applications to simulation and real recorded data.

\appendices

\section{Proofs of lemma \ref{def:PT}}
\label{sec:proofLemma_ESE}
We prove lemma \ref{def:PT}  presented in \cite[Eq.~3.4]{sra2015conic}.
\begin{proof}
The PT of ${\boldsymbol{S}}$  along the geodesic between $\boldsymbol{B}$ and $\boldsymbol{A}$ is given by \cite{ferreira2006newton}:
$$\Gamma_{\boldsymbol{B}\rightarrow \boldsymbol{A}}\left(\boldsymbol{S}\right)=\boldsymbol{M}\boldsymbol{S}\boldsymbol{M}^{T}$$
where $\boldsymbol{M}=\boldsymbol{B}^{\frac{1}{2}}\exp\left(\boldsymbol{B}^{-\frac{1}{2}}\frac{1}{2}\text{Log}_{\boldsymbol{B}}\left(\boldsymbol{A}\right)\boldsymbol{B}^{-\frac{1}{2}}\right)\boldsymbol{B}^{-\frac{1}{2}}$. Now we will show that $\boldsymbol{M}$ can be written more simply, proving a more efficient way to compute it.
We have
\begin{align*}
\boldsymbol{M} & =\boldsymbol{B}^{\frac{1}{2}}\exp\left(\boldsymbol{B}^{-\frac{1}{2}}\frac{1}{2}\text{Log}_{\boldsymbol{B}}\left(\boldsymbol{A}\right)\boldsymbol{B}^{-\frac{1}{2}}\right)\boldsymbol{B}^{-\frac{1}{2}}\\
 & =\boldsymbol{B}^{\frac{1}{2}}\exp\left(\frac{1}{2}\log\left(\boldsymbol{B}^{-\frac{1}{2}}\boldsymbol{A}\boldsymbol{B}^{-\frac{1}{2}}\right)\right)\boldsymbol{B}^{-\frac{1}{2}}\\
 & =\boldsymbol{B}^{\frac{1}{2}}\left(\boldsymbol{B}^{-\frac{1}{2}}\boldsymbol{A}\boldsymbol{B}^{-\frac{1}{2}}\right)^{\frac{1}{2}}\boldsymbol{B}^{-\frac{1}{2}}
\end{align*}
and also
$$\boldsymbol{M}^{2}=\left(\boldsymbol{B}^{\frac{1}{2}}\left(\boldsymbol{B}^{-\frac{1}{2}}\boldsymbol{A}\boldsymbol{B}^{-\frac{1}{2}}\right)^{\frac{1}{2}}\boldsymbol{B}^{-\frac{1}{2}}\right)^{2}=\boldsymbol{A}\boldsymbol{B}^{-1}=\boldsymbol{E}^{2}$$
and since $\boldsymbol{A}\boldsymbol{B}^{-1}$ is similar to $\boldsymbol{B}^{-\frac{1}{2}}\boldsymbol{A}\boldsymbol{B}^{-\frac{1}{2}}>0$, it has only positive eigenvalues and the square root is unique, namely $\boldsymbol{E}=\boldsymbol{M}$.
\end{proof}

\section{Proof of theorem \ref{thm:Gamma}}

\label{sec:proofThm_Gamma}
For better readability we denote $\boldsymbol{A}=\overline{\boldsymbol{P}}^{\left(1\right)}$ and $\boldsymbol{B}=\overline{\boldsymbol{P}}^{\left(2\right)}$.
First we remark that $\Gamma_{\boldsymbol{B}\rightarrow \boldsymbol{A}}$ is well defined since $\mathcal{T}_{\boldsymbol{P}}\mathcal{M}$ is the space of all symmetric matrices regardless the matrix $\boldsymbol{P}$\footnotemark, and  if the input $\boldsymbol{S}$ is symmetric then by definition $\Gamma_{\boldsymbol{B}\rightarrow \boldsymbol{A}}\left(\boldsymbol{S}\right)$ is symmetric as well. 
\footnotetext{Any tangent plane to the \ac{SPD} manifold $\mathcal{M}$ is the entire space of symmetric matrices \cite{ferreira2006newton}.}
\begin{proof}[Proof of Theorem \ref{thm:Gamma}]
Condition (1) is immediate since $\Gamma$ is a linear operation, and therefore we have
$$\sum_{i=1}^{N_{2}}\Gamma\left(\boldsymbol{S}_{i}^{\left(2\right)}\right)=\sum_{i=1}^{N_{2}}\boldsymbol{E}\boldsymbol{S}_{i}^{\left(2\right)}\boldsymbol{E}^{T}=\boldsymbol{E}\underbrace{\sum_{i=1}^{N_{2}}\boldsymbol{S}_{i}^{\left(2\right)}}_{=0}\boldsymbol{E}^{T}=0$$
Conditions (2) and (3) are derived from Lemma \ref{def:PT}, since these are properties of \ac{PT}.
For completeness, we provide their explicit proofs.
Proof of condition (2):
Let $\boldsymbol{A},\boldsymbol{B}\in\mathcal{M}$ and $\boldsymbol{S}_{1},\boldsymbol{S}_{2}\in\mathcal{T}_{\boldsymbol{B}}\mathcal{M}$ and denote $\boldsymbol{E}=\left(\boldsymbol{A}\boldsymbol{B}^{-1}\right)^{\frac{1}{2}}=\boldsymbol{B}^{\frac{1}{2}}\left(\boldsymbol{B}^{-\frac{1}{2}}\boldsymbol{A}\boldsymbol{B}^{-\frac{1}{2}}\right)^{\frac{1}{2}}\boldsymbol{B}^{-\frac{1}{2}}$.
We have
\begin{align*}
\boldsymbol{A}^{-1}\boldsymbol{E} & =\boldsymbol{A}^{-1}\boldsymbol{B}^{\frac{1}{2}}\left(\boldsymbol{B}^{-\frac{1}{2}}\boldsymbol{A}\boldsymbol{B}^{-\frac{1}{2}}\right)^{\frac{1}{2}}\boldsymbol{B}^{-\frac{1}{2}}\\
 & =\boldsymbol{A}^{-1}\boldsymbol{B}^{\frac{1}{2}}\boldsymbol{B}^{-\frac{1}{2}}\boldsymbol{A}\boldsymbol{B}^{-\frac{1}{2}}\left(\boldsymbol{B}^{-\frac{1}{2}}\boldsymbol{A}\boldsymbol{B}^{-\frac{1}{2}}\right)^{-\frac{1}{2}}\boldsymbol{B}^{-\frac{1}{2}}\\
 & =\boldsymbol{B}^{-\frac{1}{2}}\left(\boldsymbol{B}^{-\frac{1}{2}}\boldsymbol{A}\boldsymbol{B}^{-\frac{1}{2}}\right)^{-\frac{1}{2}}\boldsymbol{B}^{-\frac{1}{2}}
\end{align*}
Namely,  $\boldsymbol{A}^{-1}\boldsymbol{E}$ is a symmetric matrix.
Thus, we get that
\begin{align*}
\boldsymbol{E}^{T}\boldsymbol{A}^{-1}\boldsymbol{E} & =\boldsymbol{E}^{T}\boldsymbol{E}^{T}\boldsymbol{A}^{-1}\\
 & =\left(\boldsymbol{A}\boldsymbol{B}^{-1}\right)^{T}\boldsymbol{A}^{-1}\\
 & =\left(\boldsymbol{B}^{-1}\boldsymbol{A}\right)\boldsymbol{A}^{-1}\\
 & =\boldsymbol{B}^{-1}
\end{align*}
and finally, we obtain
\begin{align*}
\left\langle \boldsymbol{E}\boldsymbol{S}_{1}\boldsymbol{E}^{T},\boldsymbol{E}\boldsymbol{S}_{2}\boldsymbol{E}^{T}\right\rangle _{\boldsymbol{A}} & =\left\langle \boldsymbol{E}\boldsymbol{S}_{1}\boldsymbol{E}^{T}\boldsymbol{A}^{-1},\boldsymbol{A}^{-1}\boldsymbol{E}\boldsymbol{S}_{2}\boldsymbol{E}^{T}\right\rangle \\
 & =\text{Tr}\left\{ \boldsymbol{E}\boldsymbol{S}_{1}\boldsymbol{E}^{T}\boldsymbol{A}^{-1}\boldsymbol{E}\boldsymbol{S}_{2}\boldsymbol{E}^{T}\boldsymbol{A}^{-1}\right\} \\
 & =\text{Tr}\left\{ \boldsymbol{S}_{1}\boldsymbol{E}^{T}\boldsymbol{A}^{-1}\boldsymbol{E}\boldsymbol{S}_{2}\boldsymbol{E}^{T}\boldsymbol{A}^{-1}\boldsymbol{E}\right\} \\
 & =\text{Tr}\left\{ \boldsymbol{S}_{1}\boldsymbol{B}^{-1}\boldsymbol{S}_{2}\boldsymbol{B}^{-1}\right\} \\
 & =\left\langle \boldsymbol{S}_{1},\boldsymbol{S}_{2}\right\rangle _{\boldsymbol{B}}
\end{align*}
Proof of  condition (3):
Let $\boldsymbol{B}\in\mathcal{M}$ be an \ac{SPD} matrix with the following spectral decomposition:
\[
\boldsymbol{B}=\boldsymbol{M}\boldsymbol{\Lambda} \boldsymbol{M}^{T}
\]
Then, we have
\begin{align*}
\frac{d}{dt}\boldsymbol{B}^{t} & =\frac{d}{dt}\boldsymbol{M}\boldsymbol{\Lambda}^{t}\boldsymbol{M}^{T}\\
 & =\boldsymbol{M}\boldsymbol{\Lambda}^{t}\log\left(\boldsymbol{\Lambda}\right)\boldsymbol{M}^{T}\\
 & =\boldsymbol{M}\boldsymbol{\Lambda}^{t}\boldsymbol{M}^{T}\boldsymbol{M}\log\left(\boldsymbol{\Lambda}\right)\boldsymbol{M}^{T}\\
 & =\boldsymbol{B}^{t}\log\left(\boldsymbol{B}\right)
\end{align*}
Consider the geodesic $\varphi(t)$ from $\boldsymbol{B}$ to $\boldsymbol{A}$:
$$\varphi\left(t\right)=\boldsymbol{B}^{\frac{1}{2}}\left(\boldsymbol{B}^{-\frac{1}{2}}\boldsymbol{A}\boldsymbol{B}^{-\frac{1}{2}}\right)^{t}\boldsymbol{B}^{\frac{1}{2}}$$
Thus, its velocity at $t=0$ is given by
\begin{align}
\label{eq:phi0}
{\varphi}'\left(0\right)=\boldsymbol{B}^{\frac{1}{2}}\log\left(\boldsymbol{B}^{-\frac{1}{2}}\boldsymbol{A}\boldsymbol{B}^{\frac{1}{2}}\right)\boldsymbol{B}^{\frac{1}{2}}=\text{Log}_{\boldsymbol{B}}\left(\boldsymbol{A}\right)
\end{align}
and similarly, the velocity at $t=1$ is given by
\begin{align}
\begin{split}
\label{eq:phi1}
\varphi'\left(1\right) & =\boldsymbol{B}^{\frac{1}{2}}\left(\boldsymbol{B}^{-\frac{1}{2}}\boldsymbol{A}\boldsymbol{B}^{-\frac{1}{2}}\right)^{1}\log\left(\boldsymbol{B}^{-\frac{1}{2}}\boldsymbol{A}\boldsymbol{B}^{-\frac{1}{2}}\right)\boldsymbol{B}^{\frac{1}{2}}\\
 & =\boldsymbol{A}\boldsymbol{B}^{-\frac{1}{2}}\log\left(\boldsymbol{B}^{-\frac{1}{2}}\boldsymbol{A}\boldsymbol{B}^{-\frac{1}{2}}\right)\boldsymbol{B}^{\frac{1}{2}}\\
 & =-\boldsymbol{A}\boldsymbol{B}^{-\frac{1}{2}}\log\left(\boldsymbol{B}^{\frac{1}{2}}\boldsymbol{A}^{-1}\boldsymbol{B}^{\frac{1}{2}}\right)\boldsymbol{B}^{\frac{1}{2}}\\
 & =-\boldsymbol{A}\boldsymbol{B}^{-\frac{1}{2}}\log\left(\boldsymbol{B}^{\frac{1}{2}}\boldsymbol{A}^{-\frac{1}{2}}\boldsymbol{A}^{-\frac{1}{2}}\boldsymbol{B}\boldsymbol{A}^{-\frac{1}{2}}\boldsymbol{A}^{\frac{1}{2}}\boldsymbol{B}^{-\frac{1}{2}}\right)\boldsymbol{B}^{\frac{1}{2}}\\
 & \underbrace{=}_{\left(*\right)}-\boldsymbol{A}\boldsymbol{B}^{-\frac{1}{2}}\boldsymbol{B}^{\frac{1}{2}}\boldsymbol{A}^{-\frac{1}{2}}\log\left(\boldsymbol{A}^{-\frac{1}{2}}\boldsymbol{B}\boldsymbol{A}^{-\frac{1}{2}}\right)\boldsymbol{A}^{\frac{1}{2}}\boldsymbol{B}^{-\frac{1}{2}}\boldsymbol{B}^{\frac{1}{2}}\\
 & =-\boldsymbol{A}^{\frac{1}{2}}\log\left(\boldsymbol{A}^{-\frac{1}{2}}\boldsymbol{B}\boldsymbol{A}^{-\frac{1}{2}}\right)\boldsymbol{A}^{\frac{1}{2}}\\
 & =-\text{Log}_{\boldsymbol{A}}\left(\boldsymbol{B}\right)
 \end{split}
\end{align}
where in $\left(*\right)$ we pull out $\boldsymbol{V}=\boldsymbol{B}^{\frac{1}{2}}\boldsymbol{A}^{-\frac{1}{2}}$ and $\boldsymbol{V}^{-1}=\boldsymbol{A}^{\frac{1}{2}}\boldsymbol{B}^{-\frac{1}{2}}$ from the $\log$ , since it is a scalar function: $\log\left(\boldsymbol{V}\boldsymbol{P}\boldsymbol{V}^{-1}\right)=\boldsymbol{V}\log\left(\boldsymbol{P}\right)\boldsymbol{V}^{-1}$. 

Let $\boldsymbol{U}$ be the following unitary matrix
$$\boldsymbol{U}=\boldsymbol{A}^{-\frac{1}{2}}\boldsymbol{B}^{\frac{1}{2}}\left(\boldsymbol{B}^{-\frac{1}{2}}\boldsymbol{A}\boldsymbol{B}^{-\frac{1}{2}}\right)^{\frac{1}{2}}$$
Using $\boldsymbol{U}$, we can rewrite $\boldsymbol{E}$ as:
\begin{align}
\label{eq:U}
\boldsymbol{E}=\left(\boldsymbol{A}\boldsymbol{B}^{-1}\right)^{\frac{1}{2}}=\boldsymbol{B}^{\frac{1}{2}}\left(\boldsymbol{B}^{-\frac{1}{2}}\boldsymbol{A}\boldsymbol{B}^{-\frac{1}{2}}\right)^{\frac{1}{2}}\boldsymbol{B}^{-\frac{1}{2}}=\boldsymbol{A}^{\frac{1}{2}}\boldsymbol{U}\boldsymbol{B}^{-\frac{1}{2}}
\end{align}
Finally, by combining \eqref{eq:phi0}, \eqref{eq:phi1} and \ref{eq:U}, we have
\begin{align*}
\boldsymbol{E}{\varphi}'\left(0\right)\boldsymbol{E}^{T} & =\boldsymbol{A}^{\frac{1}{2}}\boldsymbol{U}\boldsymbol{B}^{-\frac{1}{2}}\boldsymbol{B}^{\frac{1}{2}}\log\left(\boldsymbol{B}^{-\frac{1}{2}}\boldsymbol{A}\boldsymbol{B}^{-\frac{1}{2}}\right)\boldsymbol{B}^{\frac{1}{2}}\boldsymbol{B}^{-\frac{1}{2}}\boldsymbol{U}^{T}\boldsymbol{A}^{\frac{1}{2}}\\
 & =\boldsymbol{A}^{\frac{1}{2}}\log\left(\boldsymbol{U}\boldsymbol{B}^{-\frac{1}{2}}\boldsymbol{A}\boldsymbol{B}^{-\frac{1}{2}}\boldsymbol{U}^{T}\right)\boldsymbol{A}^{\frac{1}{2}}\\
 & =\boldsymbol{A}^{\frac{1}{2}}\log\left(\boldsymbol{A}^{\frac{1}{2}}\boldsymbol{B}^{-1}\boldsymbol{A}^{\frac{1}{2}}\right)\boldsymbol{A}^{\frac{1}{2}}\\
 & =-\boldsymbol{A}^{\frac{1}{2}}\log\left(\boldsymbol{A}^{-\frac{1}{2}}\boldsymbol{B}\boldsymbol{A}^{-\frac{1}{2}}\right)\boldsymbol{A}^{\frac{1}{2}}\\
 & =-\text{Log}_{\boldsymbol{A}}\left(\boldsymbol{B}\right)\\
 & ={\varphi}'\left(1\right)
\end{align*}
\end{proof}

\section{Proof of theorem \ref{thm:EPE}}
\label{sec:proofThm_EPE}
\begin{proof}
\begin{align*}
\Psi\left(\boldsymbol{P}\right) & =\text{Exp}_{\boldsymbol{A}}\left(\Gamma_{\boldsymbol{B}\to \boldsymbol{A}}\left(\boldsymbol{S}\right)\right)\\
 & =\text{Exp}_{\boldsymbol{A}}\left(\boldsymbol{E}\text{Log}_{\boldsymbol{B}}\left(\boldsymbol{P}\right)\boldsymbol{E}^{T}\right)\\
 & =\text{Exp}_{\boldsymbol{A}}\left(\boldsymbol{E}\boldsymbol{B}^{\frac{1}{2}}\log\left(\boldsymbol{B}^{-\frac{1}{2}}\boldsymbol{P}\boldsymbol{B}^{-\frac{1}{2}}\right)\boldsymbol{B}^{\frac{1}{2}}\boldsymbol{E}^{T}\right)\\
 & =\boldsymbol{A}^{\frac{1}{2}}\exp\left(\boldsymbol{A}^{-\frac{1}{2}}\boldsymbol{E}\boldsymbol{B}^{\frac{1}{2}}\log\left(\boldsymbol{B}^{-\frac{1}{2}}\boldsymbol{P}\boldsymbol{B}^{-\frac{1}{2}}\right)\boldsymbol{B}^{\frac{1}{2}}\boldsymbol{E}^{T}\boldsymbol{A}^{-\frac{1}{2}}\right)\boldsymbol{A}^{\frac{1}{2}}\\
 & =\boldsymbol{A}^{\frac{1}{2}}\exp\left(\boldsymbol{U}\log\left(\boldsymbol{B}^{-\frac{1}{2}}\boldsymbol{P}\boldsymbol{B}^{-\frac{1}{2}}\right)\boldsymbol{U}^{T}\right)\boldsymbol{A}^{\frac{1}{2}}\\
 & =\boldsymbol{A}^{\frac{1}{2}}\boldsymbol{U}\exp\left(\log\left(\boldsymbol{B}^{-\frac{1}{2}}\boldsymbol{P}\boldsymbol{B}^{-\frac{1}{2}}\right)\right)\boldsymbol{U}^{T}\boldsymbol{A}^{\frac{1}{2}}\\
 & =\boldsymbol{A}^{\frac{1}{2}}\boldsymbol{U}\boldsymbol{B}^{-\frac{1}{2}}\boldsymbol{P}\boldsymbol{B}^{-\frac{1}{2}}\boldsymbol{U}^{T}\boldsymbol{A}^{\frac{1}{2}}\\
 & =\boldsymbol{A}^{\frac{1}{2}}\boldsymbol{A}^{-\frac{1}{2}}\boldsymbol{E}\boldsymbol{B}^{\frac{1}{2}}\boldsymbol{B}^{-\frac{1}{2}}\boldsymbol{P}\boldsymbol{B}^{-\frac{1}{2}}\boldsymbol{B}^{\frac{1}{2}}\boldsymbol{E}^{T}\boldsymbol{A}^{-\frac{1}{2}}\boldsymbol{A}^{\frac{1}{2}}\\
 & =\boldsymbol{E}\boldsymbol{P}\boldsymbol{E}^{T}
\end{align*}

where $\boldsymbol{U}=\boldsymbol{A}^{-\frac{1}{2}}\boldsymbol{E}\boldsymbol{B}^{\frac{1}{2}}$ is a unitary matrix, and therefore can be pulled out of the scalar exp function..
\end{proof}

\section{Proof of proposition \ref{prop:Invariant}}
\label{sec:proofProp_Invariant}
\begin{figure*}[t]
\begin{align}
\begin{split}
\label{eq:MidGeoPT}
\boldsymbol{B}_{2}^{\frac{1}{2}}\left(\boldsymbol{B}_{2}^{-\frac{1}{2}}\boldsymbol{A}_{2}\boldsymbol{B}_{2}^{-\frac{1}{2}}\right)^{\frac{1}{2}}\boldsymbol{B}_{2}^{\frac{1}{2}} & =\boldsymbol{B}_{2}^{\frac{1}{2}}\left(\left(\boldsymbol{E}\boldsymbol{B}_{1}\boldsymbol{E}^{T}\right)^{-\frac{1}{2}}\boldsymbol{E}\boldsymbol{A}_{1}\boldsymbol{E}^{T}\left(\boldsymbol{E}\boldsymbol{B}_{1}\boldsymbol{E}^{T}\right)^{-\frac{1}{2}}\right)^{\frac{1}{2}}\boldsymbol{B}_{2}^{\frac{1}{2}}\\
 & =\boldsymbol{B}_{2}^{\frac{1}{2}}\left(\underbrace{\left(\boldsymbol{E}\boldsymbol{B}_{1}\boldsymbol{E}^{T}\right)^{\frac{1}{2}}\boldsymbol{E}^{-T}\boldsymbol{B}_{1}^{-\frac{1}{2}}}_{\boldsymbol{K}}\boldsymbol{B}_{1}^{-\frac{1}{2}}\boldsymbol{A}_{1}\boldsymbol{B}_{1}^{-\frac{1}{2}}\underbrace{\boldsymbol{B}_{1}^{-\frac{1}{2}}\boldsymbol{E}^{-1}\left(\boldsymbol{E}\boldsymbol{B}_{1}\boldsymbol{E}^{T}\right)^{\frac{1}{2}}}_{\boldsymbol{K}^{T}}\right)^{\frac{1}{2}}\boldsymbol{B}_{2}^{\frac{1}{2}}\\
 & =\boldsymbol{B}_{2}^{\frac{1}{2}}\left(\boldsymbol{E}\boldsymbol{B}_{1}\boldsymbol{E}^{T}\right)^{\frac{1}{2}}\boldsymbol{E}^{-T}\boldsymbol{B}_{1}^{-\frac{1}{2}}\left(\boldsymbol{B}_{1}^{-\frac{1}{2}}\boldsymbol{A}_{1}\boldsymbol{B}_{1}^{-\frac{1}{2}}\right)^{\frac{1}{2}}\boldsymbol{B}_{1}^{-\frac{1}{2}}\boldsymbol{E}^{-1}\left(\boldsymbol{E}\boldsymbol{B}_{1}\boldsymbol{E}^{T}\right)^{\frac{1}{2}}\boldsymbol{B}_{2}^{\frac{1}{2}}\\
 & =\boldsymbol{E}\boldsymbol{B}_{1}\boldsymbol{E}^{T}\boldsymbol{E}^{-T}\boldsymbol{B}_{1}^{-\frac{1}{2}}\left(\boldsymbol{B}_{1}^{-\frac{1}{2}}\boldsymbol{A}_{1}\boldsymbol{B}_{1}^{-\frac{1}{2}}\right)^{\frac{1}{2}}\boldsymbol{B}_{1}^{-\frac{1}{2}}\boldsymbol{E}^{-1}\boldsymbol{E}\boldsymbol{B}_{1}\boldsymbol{E}^{T}\\
 & =\boldsymbol{E}\boldsymbol{B}_{1}^{\frac{1}{2}}\left(\boldsymbol{B}_{1}^{-\frac{1}{2}}\boldsymbol{A}_{1}\boldsymbol{B}_{1}^{-\frac{1}{2}}\right)^{\frac{1}{2}}\boldsymbol{B}_{1}^{\frac{1}{2}}\boldsymbol{E}^{T}
\end{split}
\end{align}
\end{figure*}

\begin{proof}
Let $\boldsymbol{E}_{1}=\left(\boldsymbol{A}_{1}\boldsymbol{B}_{1}^{-1}\right)^{\frac{1}{2}}=\boldsymbol{B}_{1}^{\frac{1}{2}}\left(\boldsymbol{B}_{1}^{-\frac{1}{2}}\boldsymbol{A}_{1}\boldsymbol{B}_{1}^{-\frac{1}{2}}\right)\boldsymbol{B}_{1}^{-\frac{1}{2}}$ and $\boldsymbol{E}_{2}=\left(\boldsymbol{A}_{2}\boldsymbol{B}_{2}^{-1}\right)^{\frac{1}{2}}=\boldsymbol{B}_{2}^{\frac{1}{2}}\left(\boldsymbol{B}_{2}^{-\frac{1}{2}}\boldsymbol{A}_{2}\boldsymbol{B}_{2}^{-\frac{1}{2}}\right)\boldsymbol{B}_{2}^{-\frac{1}{2}}$.

Since
\[
\Gamma\left(\Gamma_{\boldsymbol{B}_{1}\to \boldsymbol{A}_{1}}\left(\boldsymbol{S}\right)\right)=\boldsymbol{E}\boldsymbol{E}_{1}\boldsymbol{S}\boldsymbol{E}_{1}^{T}\boldsymbol{E}^{T}\,,
\]
 and 
\[
\Gamma_{\boldsymbol{B}_{2}\to \boldsymbol{A}_{2}}\left(\Gamma\left(\boldsymbol{S}\right)\right)=\boldsymbol{E}_{2}\boldsymbol{E}\boldsymbol{S}\boldsymbol{E}^{T}\boldsymbol{E}_{2}^{T}\,,
\]
it is enough to show that
\[
\boldsymbol{E}_{2}\boldsymbol{E}=\boldsymbol{E}\boldsymbol{E}_{1}.
\]

First, note that $\boldsymbol{K}=\boldsymbol{B}_{2}^{\frac{1}{2}}\boldsymbol{E}^{-T}\boldsymbol{B}_{1}^{-\frac{1}{2}}$ is unitary:
\begin{align*}
\boldsymbol{K}\boldsymbol{K}^{T} & =\boldsymbol{B}_{2}^{\frac{1}{2}}\boldsymbol{E}^{-T}\boldsymbol{B}_{1}^{-\frac{1}{2}}\left(\boldsymbol{B}_{2}^{\frac{1}{2}}\boldsymbol{E}^{-T}\boldsymbol{B}_{1}^{-\frac{1}{2}}\right)^{T}\\
 & =\boldsymbol{B}_{2}^{\frac{1}{2}}\boldsymbol{E}^{-T}\boldsymbol{B}_{1}^{-\frac{1}{2}}\boldsymbol{B}_{1}^{-\frac{1}{2}}\boldsymbol{E}^{-1}\boldsymbol{B}_{2}^{\frac{1}{2}}\\
 & =\boldsymbol{B}_{2}^{\frac{1}{2}}\boldsymbol{E}^{-T}\boldsymbol{B}_{1}^{-1}\boldsymbol{E}^{-1}\boldsymbol{B}_{2}^{\frac{1}{2}}\\
 & =\boldsymbol{B}_{2}^{\frac{1}{2}}\boldsymbol{B}_{2}^{-1}\boldsymbol{B}_{2}^{\frac{1}{2}}\\
 & =\boldsymbol{I}\,.
\end{align*}
Now, we have \eqref{eq:MidGeoPT}.
Finally, we have
\begin{align*}
\boldsymbol{E}_{2}\boldsymbol{E} & =\boldsymbol{B}_{2}^{\frac{1}{2}}\left(\boldsymbol{B}_{2}^{-\frac{1}{2}}\boldsymbol{A}_{2}\boldsymbol{B}_{2}^{-\frac{1}{2}}\right)^{\frac{1}{2}}\boldsymbol{B}_{2}^{-\frac{1}{2}}\boldsymbol{E}\\
 & =\boldsymbol{B}_{2}^{\frac{1}{2}}\left(\boldsymbol{B}_{2}^{-\frac{1}{2}}\boldsymbol{A}_{2}\boldsymbol{B}_{2}^{-\frac{1}{2}}\right)^{\frac{1}{2}}\boldsymbol{B}_{2}^{\frac{1}{2}}\boldsymbol{B}_{2}^{-1}\boldsymbol{E}\\
 & =\boldsymbol{E}\boldsymbol{B}_{1}^{\frac{1}{2}}\left(\boldsymbol{B}_{1}^{-\frac{1}{2}}\boldsymbol{A}_{1}\boldsymbol{B}_{1}^{-\frac{1}{2}}\right)^{\frac{1}{2}}\boldsymbol{B}_{1}^{\frac{1}{2}}\boldsymbol{E}^{T}\boldsymbol{B}_{2}^{-1}\boldsymbol{E}\\
 & =\boldsymbol{E}\boldsymbol{B}_{1}^{\frac{1}{2}}\left(\boldsymbol{B}_{1}^{-\frac{1}{2}}\boldsymbol{A}_{1}\boldsymbol{B}_{1}^{-\frac{1}{2}}\right)^{\frac{1}{2}}\boldsymbol{B}_{1}^{-\frac{1}{2}}\\
 & =\boldsymbol{E}\boldsymbol{E}_{1}\,.
\end{align*}
\end{proof}

\section{ }
\label{sec:proofCommute}
In this appendix we proof that if the identity matrix $\boldsymbol{I}$ is on the geodesic $\varphi$ between $\boldsymbol{A}$ and  $\boldsymbol{B}$, then, the matrices $\boldsymbol{A}$ and $\boldsymbol{B}$ commute and they have the same eigenvectors.
\begin{proof}
if $\boldsymbol{I}$ is on the geodesic $\varphi(t)$, then there exist some $t_0\in(0,1)$ (if $t_0=0$ or $t_0=1$ the result is trivial) such that:
$$\varphi\left(t_{0}\right)=\boldsymbol{A}^{\frac{1}{2}}\left(\boldsymbol{A}^{-\frac{1}{2}}\boldsymbol{B}\boldsymbol{A}^{-\frac{1}{2}}\right)^{t_0}\boldsymbol{A}^{\frac{1}{2}}=\boldsymbol{I}$$
Now, consider $\boldsymbol{A}=\boldsymbol{V}\boldsymbol{\Lambda}\boldsymbol{V}^{T}$ the eigenvalue-decomposition of $\boldsymbol{A}$. By multiplying both from the right and the left, we have
\begin{align*}
\boldsymbol{A}^{\frac{1}{2}}\left(\boldsymbol{A}^{-\frac{1}{2}}\boldsymbol{B}\boldsymbol{A}^{-\frac{1}{2}}\right)^{t_{0}}\boldsymbol{A}^{\frac{1}{2}} & =\boldsymbol{I}\\
\left(\boldsymbol{A}^{-\frac{1}{2}}\boldsymbol{B}\boldsymbol{A}^{-\frac{1}{2}}\right)^{t_{0}} & =\boldsymbol{A}^{-1}\\
\end{align*}
By raising to the power of $\frac{1}{t_0}$, and then multiplying both from the right and the left again, we have
\begin{align*}
\boldsymbol{A}^{-\frac{1}{2}}\boldsymbol{B}\boldsymbol{A}^{-\frac{1}{2}} & =\boldsymbol{A}^{-\frac{1}{t_{0}}}\\
\boldsymbol{B} & =\boldsymbol{A}^{1-\frac{1}{t_{0}}}\\
\boldsymbol{B} & =\boldsymbol{B}\boldsymbol{\Lambda}^{1-\frac{1}{t_{0}}}\boldsymbol{V}^{T}
\end{align*}

Thus, $\boldsymbol{B}$ has the same eigenvectors as $\boldsymbol{A}$ and they commute
\begin{align*}
\boldsymbol{A}\boldsymbol{B} & =\boldsymbol{V}\boldsymbol{\Lambda}\boldsymbol{V}^{T}\boldsymbol{V}\boldsymbol{\Lambda}^{1-\frac{1}{t_{0}}}\boldsymbol{V}^{T}\\
 & =\boldsymbol{V}\boldsymbol{\Lambda}\boldsymbol{\Lambda}^{1-\frac{1}{t_{0}}}\boldsymbol{V}^{T}\\
 & =\boldsymbol{V}\boldsymbol{\Lambda}^{1-\frac{1}{t_{0}}}\boldsymbol{V}^{T}\boldsymbol{V}\boldsymbol{\boldsymbol{\Lambda}V}^{T}\\
 & =\boldsymbol{B}\boldsymbol{A}
\end{align*}
\end{proof}

\section{Riemannian mean algorithm}
\begin{algorithm}[tbh]
\label{alg:RiemannianMean}
\textbf{\uline{Input}}\textbf{: }a set of SPD matrices $\left\{ \boldsymbol{P}_{i}\in\mathcal{M}\right\} _{i=1}^{N}$.

\textbf{\uline{Output}}\textbf{:} the Riemannian mean matrix $\overline{\boldsymbol{P}}$.
\begin{enumerate}
\item Compute the initial term $\overline{\boldsymbol{P}}=\frac{1}{N}\sum_{i=1}^{N}\boldsymbol{P}_{i}$
\item \textbf{do }
\begin{enumerate}
\item Compute the Euclidean mean in the tangent space: $\overline{\boldsymbol{S}}=\frac{1}{N}\sum_{i=1}^{N}\text{Log}_{\overline{P}}\left(\boldsymbol{P}_{i}\right)$
\item Update $\overline{\boldsymbol{P}}=\text{Exp}_{\overline{P}}\left(\overline{\boldsymbol{S}}\right)$
\item \textbf{while $\left\Vert \overline{\boldsymbol{S}}\right\Vert _{F}>\epsilon$} where $\left\Vert \cdot \right\Vert _{F}$ is the Frobenius norm.
\end{enumerate}
\end{enumerate}
\caption{Riemannian mean for \ac{SPD} matrices as presented in \cite{barachant2013classification}}
\end{algorithm}

% \section*{Acknowlegment TODO}

% Adams.... TODO
% text text text text text text text text text text text text text text
% text text text text text text text text text text text text text text
% text text text text text text text text text text text text text text
% text text text text text text text text text 

\bibliographystyle{IEEEtran}
\bibliography{Refs}

% \begin{IEEEbiography}[{\fbox{\begin{minipage}[t][1.25in]{1in}%
% Replace this box by an image with a width of 1\,in and a height of
% 1.25\,in!%
% \end{minipage}}}]{Your Name}
%  All about you and the what your interests are.
% \end{IEEEbiography}

% \begin{IEEEbiographynophoto}{Coauthor}
% Same again for the co-author, but without photo
% \end{IEEEbiographynophoto}

\end{document}